\theoremstyle{plain}
\newtheorem{definition}[thm]{Definition}
\newtheorem{theorem}[thm]{Theorem}
\newtheorem{proposition}[thm]{Proposition}
\newtheorem{example}[thm]{Example}
\newtheorem{remark}[thm]{Remark}
\def\eg{{\em e.g.}}
\newcommand{\corr}[2]{#2}
\newcommand{\corrmath}[2]{#2}
\newcommand{\aggiunta}[1]{{#1}}
\newcommand{\toset}[1]{\ensuremath{\overline{#1}}}
\newcommand{\CF}[1]{\ensuremath{\mathsf{CF}(#1)}}
\newcommand{\trans}[1]{\ensuremath{\,[\/{#1}\/\rangle}\,}
\newcommand{\pre}[1]{\ensuremath{\!~^{\bullet}{#1}}}
\newcommand{\post}[1]{\ensuremath{{#1} {^{\bullet}}}}
\newcommand{\hist}[1]{\ensuremath{\lfloor #1 \rfloor}}
\newcommand{\Pow}[1]{\ensuremath{\mathbf{2}^{#1}}}
\newcommand{\Powfin}[1]{\ensuremath{\mathbf{2}_\mathit{fin}^{#1}}}
\newcommand{\nat}{\ensuremath{\mathbb{N}}}
\newcommand{\flt}[1]{\ensuremath{[\![{#1}]\!]}}
\newcommand{\pes}{\textsc{pes}}
\newcommand{\bes}{\textsc{bes}}
\newcommand{\Ges}{\textsc{cdes}}
\newcommand{\cd}{\textsc{cd}}
\newcommand{\ctx}{\textsc{Cxt}}
\newcommand{\un}{\textsc{un}}
\newcommand{\ea}{\textsc{ea}}
\newcommand{\cn}{\textsc{on}}
\newcommand{\ca}{\textsc{cn}}
\newcommand{\gesrel}{\ensuremath{\gg}}
\newcommand{\enab}[1]{\ensuremath{[\/{#1}\/\rangle}}
\newcommand{\subnet}[3]{\ensuremath{#1|_{#2}^{#3}}}
\newcommand{\setenum}[1]{\{#1\}}
\newcommand{\setcomp}[2]{\{{#1} \mid {#2}\}}
\newcommand{\reachMark}[1]{\ensuremath{\mathcal{M}_{#1}}}
\newcommand{\firseq}[2]{\ensuremath{\mathcal{R}^{#1}_{#2}}}
\newcommand{\states}[1]{\ensuremath{\mathsf{St}(#1)}}
\newcommand{\trace}[1]{\ensuremath{\mathsf{tr}(#1)}}
\newcommand{\lead}[1]{\ensuremath{\mathit{lead}(#1)}}
\newcommand{\start}[1]{\ensuremath{\mathit{start}(#1)}}
\newcommand{\fs}{\textsc{fs}}
\newcommand{\MC}[1]{\ensuremath{\sim}}
\newcommand{\marko}[1]{\ensuremath{\mathsf{#1}}}
\newcommand{\Lab}{\ensuremath{\mathsf{L}}}
\newcommand{\pmv}[1]{\ensuremath{\mathsf{#1}}}
\newcommand{\Conf}[2]{\ensuremath{\mathsf{Conf}_{#2}(#1)}}
\newcommand{\Traces}[1]{\ensuremath{\mathsf{Tr}(#1)}}
\newcommand{\nettoes}[2]{\ensuremath{\mathcal{E}_{#1}^{#2}}}
\newcommand{\bundle}{\ensuremath{\mapsto}} %{\ensuremath{|\!\!\!\longrightarrow}}
\newcommand{\toges}[2]{\ensuremath{\mathcal{F}_{#1}({#2})}}
\newcommand{\rarc}[1]{\ensuremath{\underline{#1}}}
\newcommand{\inib}[1]{\ensuremath{\!~^{\circ}{#1}}}
\newcommand{\estonet}[2]{\ensuremath{\mathcal{N}_{#1}^{#2}}}
\newcommand{\len}[1]{\ensuremath{\mathit{size}(#1)}}
\newcommand{\tuple}[1]{\ensuremath{\langle #1\rangle}}
\newcommand{\stati}{\ensuremath{\mathsf{S}}}
\newcommand{\instate}{\ensuremath{\mathit{s}_0}}
\newcommand{\ontocn}[1]{\ensuremath{\mathcal{G}_{#1}}}
\newcommand{\transfs}[1]{\trans{#1}}
\newcommand{\lung}[1]{\ensuremath{\mathit{len}(#1)}}
\newcommand{\earel}{\ensuremath{\rightarrowtail}}
\newcommand{\ragg}[1]{\ensuremath{\mathsf{reach}(#1)}}
\newcommand{\raggname}{\ensuremath{\mathsf{reach}}}
\newcommand{\eainibset}[2]{\ensuremath{\mathcal{I}(#1,#2)}}
\newcommand{\eaallowset}[2]{\ensuremath{\mathcal{C}(#1,#2)}}
\newcommand{\eaname}{\ensuremath{\mathsf{A}}}
\newcommand{\pesname}{\ensuremath{\mathsf{P}}}
\newcommand{\besname}{\ensuremath{\mathsf{B}}}
\newcommand{\cdesname}{\ensuremath{\pmv{E}}}
\newcommand{\cdestoea}{\ensuremath{\mathcal{A}}}
\newcommand{\caname}{\ensuremath{K}}
\newcommand{\unname}{\ensuremath{U}}
\newcommand{\cnname}{\ensuremath{O}}
\newcommand{\causes}[1]{\ensuremath{\mathsf{Caus}(#1)}}
\newcommand{\Ini}[1]{\ensuremath{\mathsf{PD}(#1)}}
\newcommand{\maxbund}[1]{\ensuremath{\mathsf{MaxB}(#1)}}
\begin{document}

\title[A new operational representation of dependencies]{A new operational representation of dependencies in 
Event Structures}

\author[G.M. Pinna]{G. Michele Pinna} 
\address{Dipartimento di Matematica e Informatica, Universit\`a di Cagliari, Cagliari, Italy}  
\email{gmpinna@unica.it}

  \begin{abstract}
       The execution of an event in a complex and distributed system where the dependencies vary during the 
       evolution of the system can be represented in many ways, and one of them is to use 
       Context-Dependent Event structures. Event structures are related to Petri nets.
       The aim of this paper is to propose what can be the appropriate kind of Petri net corresponding
       to Context-Dependent Event structures, giving an 
       operational flavour to the dependencies represented in a Context/Dependent Event structure.
       Dependencies are often operationally represented, in Petri nets, by tokens produced by 
       activities and consumed by others. Here we shift the perspective using contextual arcs
       to characterize what has happened so far and in this way to describe the dependencies among
       the various activities.
  \end{abstract}
\keywords{Petri Nets, Event Structures, Operational Semantics, Contextual Nets}

\maketitle

%\linenumbers

%%%%%%%%%%%%%%%%%%%%%%%%%%%%%%%%%%%%%%%%%%%%%%%%%%%%%%%%%%%%%%%%%%%%%%%%%%%%%%%%

%%% \input{introduction}
\section{Introduction}\label{sec:intro}
Since the introduction of the notion of Event structure (\cite{NPW:PNES} and \cite{Win:ES}) 
the close relationship between this notion and suitable nets has been investigated. 
The ingredients of an event structure are, beside a set of events, a number of relations 
used to express which events can be part of a configuration (the snapshot of a concurrent 
system), modeling a consistency predicate, and how events can be added to reach another 
configuration, modeling the dependencies among the (sets of) events.
On the nets side we have transitions, modeling the activities, and places, modeling
resources the activities may need, consume or produces. These ingredients, together
with some constraints on how places and transitions are related (via flow, inhibitor or read arcs
satisfying suitable properties), can give also a more \emph{operational} description of a concurrent
and distributed system.  
Indeed the relationship between event structures and nets is grounded on the observation 
that also in (suitable) Petri nets the
relations among events are representable, as it has been done in 
\cite{GR:NSBPN83} for what concern the partial 
order and \cite{NPW:PNES} for the partial order and conflict.

Since then several notions of event structures have been proposed. 
We recall just few of them: the classical \emph{prime} event structures 
\cite{Win:ES} where
the dependency between events, called \emph{causality}, is modeled by a partial order and the consistency
is described by a symmetric \emph{conflict} relation.
Then \emph{flow} event structures \cite{Bou:FESFN} drop the requirement that the dependency should
be a partial order on the whole set of events, \emph{bundle} event structures \cite{Langerak:1992:BES}
represent OR-causality by allowing each event to be caused by a unique member of a bundle of events
(and this constraint may be relaxed).
\emph{Asymmetric} event structures \cite{BCM:CNAED}, via notion of weak causality,
model asymmetric conflicts, whereas \emph{Inhibitor} event structures \cite{BBCP:rivista} are able
to faithfully capture the dependencies among events which arise in the 
presence of read and inhibitor arcs in safe nets. 
In \cite{BCP:LenNets} a notion of event structures where the causality relation may be circular is
investigated, and in \cite{AKPN:lmcs18} the notion of dynamic causality is considered.
Finally, we mention the quite general approach presented in \cite{GP:CSESPN}, where there is
a unique relation, akin to a \emph{deduction relation}. 
To each of the mentioned event structures 
a particular class of nets is related. Prime event structures have a correspondence in 
\emph{occurrence nets}, flow event structures have flow nets whereas 
\emph{unravel nets} \cite{CaPi:PN14} are related to bundle event structures. 
Continuing we have that asymmetric and inhibitor event structures have a correspondence with  
\emph{contextual nets} \cite{BCM:CNAED,BBCP:rivista}, and event structures with circular causality
with  \emph{lending nets} \cite{BCP:LenNets}, finally to those with dynamic causality we have
\emph{inhibitor unravel nets} \cite{CP:soap17} and to the configuration structures presented in 
\cite{GP:CSESPN}  we have the notion of \emph{1-occurrence nets}.
Most of the approaches relating nets with event structures are based on the equation
``event = transition'', even if many of the events represent the same \emph{high level}
activity. 
The idea that some of the transitions may be somehow identified
as they represent the same activity is the one pursued in many works aiming at reducing
the size of the net, like \emph{merged processes} (\cite{KKKV:Acta06}), \emph{trellis processes} 
(\cite{Fab:Trellis}), \emph{merging relation}
approach  (\cite{CP:lata17}) or \emph{spread nets} (\cite{FP:Ice18} and \cite{PF:jlamp20}),
but these approaches are mostly unrelated with event structure of any kind.

\aggiunta{\cite{Pi:coord19} and \cite{Pi:lmcs} propose the notion of \emph{context-dependent} 
event structures as the event structure where (almost all) the variety of dependencies which may arise 
among events can be modeled. 
This is achieved using just 
one fairly general dependency relation, called \emph{context-dependency} relation. 
Context-dependent event 
structure are more general than many kind of event structures, as proved in 
\cite{Pi:lmcs}, with the exception of event structure with circular causality. It should however
be observed that circular dependencies may be understood as dependencies where the justification
for the happening of an event can be given at a later point, which is not the case of the kind
of event structures considered in this paper, thus we may say that context-dependent event 
structure are more general than event structures where each event may happen if a complete 
justification for its happening is present.~}
In this paper we {address} the usual problem: given {an} event 
structure, in our case a context-dependent event structures, 
{which could be the kind of} net which may \emph{correspond} to it.
To understand {the characteristics} of \aggiunta{the~}net {to} be 
{related} to {context-dependent} 
event structures
%\corr{(\cite{Pi:coord19} and \cite{Pi:lmcs})}{}
we first observe that in these event structures each 
event may happen in many different and often unrelated contexts, hence
the same event cannot have (almost) the same \emph{past} \aggiunta{event up to suitable equivalences,~}as 
it happens in many approaches, \eg{} in unravel nets, trellis processes or spread nets 
among others.
The second observation, which has been used also in \cite{Pi:coord19} and \cite{Pi:lmcs}, is that 
dependencies among transitions (events) in nets may be represented in different ways.
The usual way to represent dependencies is using the flow relation of the net, and the
dependency is then \emph{signaled} by a \emph{shared} place, but dependencies can be 
described also using \emph{contextual} arcs, \emph{i.e.} arcs testing for the presence or 
absence of tokens.  
Following these two observations we argue that each of the context that are \emph{allowing} an event
to happen can be modeled with \emph{inhibitor} (\cite{JK:SIN}) and/or \emph{read} arcs 
(\cite{MR:CN})\aggiunta{, yielding the notion of \emph{causal} nets}. 
It should be stressed
that these kind of arcs have been introduced for different purposes, but never for nets 
which are meant to describe the behaviour of another one. 

Usually places \aggiunta{in a net modeling the behaviour of a 
concurrent and distributed systems~}are seen as resources that one
(or more) activities may produce and that {are} 
consumed by anothers. 
Instead of\aggiunta{, in a causal net,~}we 
consider places as \emph{control points}: they 
\aggiunta{simply~}signal what events have been executed and thus
the marking \aggiunta{alone~}determines {the \emph{configuration} of the net, without 
the need of reconstructing which events have been executed before}.
\aggiunta{This shift in perspective, namely considering places as control points rather than resources to
be produced and consumed, gives further generality that can be used to model more composite relations.
Indeed we observe that this idea has been used in \cite{MMP:lics2021} to model the so called
\emph{out of causal order} reversibility when relating \emph{reversible} prime event structures 
\cite{iainirek} and nets, which otherwise would have been impossible, as noticed in \cite{MMPPU:rc2020}.}

We have \aggiunta{also~}to stress that, having activities quite different contexts, it is natural 
to \emph{implement} the same activity in different ways to reflect the different context.
\aggiunta{In causal nets, which are labeled nets, the same label may be associated to various transitions
that have to be considered as the different incarnations of the same activity (label).~}
This is precisely what it is done in \emph{unfoldings}, 
the main difference being that {in causal net~}we look at the event itself with 
different \emph{incarnations}
whereas in the classical approaches the event is identified with an unique incarnation. 

The approach we pursue here originated in the one we adopted for dynamic event structures
in \cite{CP:soap17}, though \aggiunta{t}here  the \emph{classical} dependencies among 
events (those called
causal dependencies) are {represented using} the {standard} 
machinery{, namely using
places as resources.~}
{There~}
we \aggiunta{also~}argued that the proper net corresponding to these kind of event 
structure\aggiunta{, which~}are meant to
give an \emph{operational} representation of what \emph{denotationally} {is} characterized 
{by} a single event, have to be represented as 
different transitions\aggiunta{ with the same label}.
The approach is a conservative one: the dependencies represented in {other} kinds on nets
can be represented also in {causal nets, possibly with some further
constraint,} and  to suitably characterized \aggiunta{causal~}nets it
is possible to associate the corresponding context-dependent event structure.
\aggiunta{Indeed, we will show that causal nets are more general than occurrence and unravel nets.}
It should be stressed that the conflicts between events in causal nets are explicitly represented
and cannot be inferred otherwise.

This paper is an extended and revised version of \cite{Pi:coord20}. We have added 
some examples and compared the notion of causal net with other \emph{semantics} net like occurrence nets
and unravel nets, and we have made more precise the relationships among those. 
The constructions associating event structures and nets turn out to follow a common pattern
once that the idea of dependency shifts from the classical \emph{produce/consume} relationship toward
the one where contextual arcs (testing positively or negatively but without consuming), giving in our opinion 
further evidence that this new operational view has a solid ground.

\subsection*{Organization of the paper.}\
In the next section we recall the notions of contextual nets, \aggiunta{on which the notion of
\emph{causal} net is based. Section~\ref{sec:netandes} reviews the more classical
approaches relating nets and event structures, namely~}occurrence net and prime event
structure and also unravel net
and bundle event structure. Furthermore the relationships
occurrence nets - prime event structures and 
unravel nets - bundle event structures are exhibited. In Section~\ref{sec:es-mie} we recall
the notion of context-dependent event structure and we discuss how a canonical representation can be
obtained. In Section~\ref{sec:causal-nets} we introduce the
notion of \emph{causal} net and we show also how occurrence nets and unravel nets can be seen as causal nets.
We also give a direct translation from prime event structures to causal net and vice versa.
In Section~\ref{sec:cdes-causal} we discuss how to associate a causal net to a 
context-dependent event structure and vice versa, showing that the notion of causal net
is adequate. Some conclusions end the paper.

\section{Preliminaries}\label{sec:preliminaries}
 We denote with  $\nat$ the set of natural numbers.
 Given a set $A$ with $\Pow{A}$ we denote the set of subsets of $A$ and
 with $\Powfin{A}$ the set of the finite subsets of $A$.
 
 Let $A$ be a set, a \emph{multiset} of $A$ is a function $m : A
 \rightarrow \nat$.
 The set of multisets of $A$ is denoted by $\mu A$.
 \corr{}{A multiset $m$ over $A$ is sometimes written as $\sum m(a)\cdot a$ and $m(a)$ is the number of
 occurrence of $a$ in $\sum m(a)\cdot a$.}  
 We assume the usual operations on multisets such as union $+$ and difference $-$.
 We write $m \subseteq m'$ if $m(a) \leq m'(a)$ for all $a \in A$.  
 For $m\in \mu A$, we denote with $\flt{m}$ the multiset defined as $\flt{m}(a)
 = 1$ if $m(a) > 0$ and $\flt{m}(a) = 0$ otherwise. 
 When a multiset $m$ of $A$ is a set, 
 \emph{i.e.} $m = \flt{m}$, we write
 $a\in m$ to denote that $m(a) \neq 0$, and often confuse the
 multiset $m$ with the set $\setcomp{a\in A}{m(a) \neq 0}$ or a subset 
 $X\subseteq A$ with the multiset $X(a) = 1$ if $a\in A$ and 
 $X(a) = 0$ otherwise.
 Furthermore we use the standard set operations like $\cap$, $\cup$ or
 $\setminus$.
 
 Given a set $A$ and a relation $<\ \subseteq A\times A$, we say that $<$ is 
 an irreflexive partial order whenever
 it is irreflexive and transitive. 
 We shall write $\leq$ for the reflexive closure of an irreflexive partial order $<$.
 Given an irreflexive relation $\prec\ \subseteq A\times A$, with $\prec^{+}$ we denote
 its transitive closure. 
 
 Given a function $f : A \to B$, $\mathit{dom}(f) = \setcomp{a\in A}{\exists b\in B.\ f(a) = b}$
 is the domain of $f$, and $\mathit{codom}(f) = \setcomp{b\in B}{\exists a\in A.\ f(a) = b}$
 is the codomain of $f$.
 
 Given a set $A$, a sequence of elements in $A$ is a partial mapping 
 $\rho : \nat\rightharpoonup A$ such that, given any  
 $n\in \nat$, if $\rho(n)$ is defined and equal to $a\in A$ then $\forall i \leq n$ 
 also $\rho(i)$ is defined.
 A sequence is finite if $|\mathit{dom}(\rho)|$ is finite, and 
 the length of a sequence $\rho$, denoted with $\len{\rho}$, 
 is the cardinality of $\mathit{dom}(\rho)$.
 A sequence $\rho$ is often written as $a_1a_2\cdots$ where $a_i = \rho(i)$.
 With $\toset{\rho}$ we denote the codomain of $\rho$.
 Requiring that a sequence $\rho$ has distinct elements accounts to stipulate that
 $\rho$ is injective on $\mathit{dom}(\rho)$.
 The sequence $\rho$ such that $\mathit{dom}(\rho) = \emptyset$, the \emph{empty} sequence,
 is denoted with $\epsilon$. 
 Finally with $\cdot$ we denote the \emph{concatenation} operator that take a finite sequence
 $\rho$ and a sequence $\rho'$ and gives the sequence $\rho''$ defined as
 $\rho''(i) = \rho(i)$ if $i\leq\len{\rho}$ and $\rho''(i) = \rho'(i-\len{\rho})$ otherwise.
 
\subsection{Contextual Petri nets} 
We briefly review the notion of (labeled) Petri net (\cite{Rei:PNI,Rei:UPN}) and its variant enriched 
with contextual arcs (\cite{MR:CN} and \cite{BBCP:rivista}) along with some auxiliary notions.

Fixed a set $\Lab$ of \emph{labels},  
we recall that a \emph{net} is the 4-tuple $N = \langle S, T, F, \mathsf{m}\rangle$
where $S$ is a set of \emph{places} (usually depicted with circles) and $T$ is a set of \emph{transitions} 
(usually depicted as squares) and $S \cap T = \emptyset$, 
$F \subseteq (S\times T)\cup (T\times S)$ is the \emph{flow} relation and 
$\mathsf{m}\in \mu S$ is called the \emph{initial marking}. A \emph{labeled} net is a 
net equipped with a \emph{labeling} mapping $\ell : T \to \Lab$. 
\begin{definition}
   A (labelled) \emph{contextual Petri net} is the tuple 
   $N = \langle S, T, F, I, R, \mathsf{m}, \ell\rangle$, where 
   \begin{itemize}
    \item $\langle S, T, F, \mathsf{m}\rangle$ is a net, 
    \item $I \subseteq S\times T$ are the \emph{inhibitor} arcs, 
    \item $R \subseteq S\times T$ are the \emph{read} arcs, and 
    \item $\ell : T \to \Lab$ is the labeling mapping and $\ell$ is a total function.
   \end{itemize}
 \end{definition}
 Inhibitor arcs depicted as lines with a circle on one end, and
 read arcs as plain lines. We sometimes omit the $\ell$ mapping when $\Lab$ is $T$ and $\ell$ is the 
 identity. We will often call a contextual Petri net as Petri net or simply net.
 In the following figure a contextual Petri net is depicted.

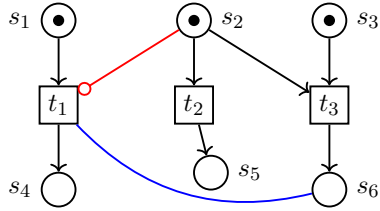
\begin{figure}[h]
\begin{center}
 \scalebox{0.9}{\begin{tikzpicture}
\tikzstyle{inhibitorred}=[o-, draw=red,thick]
\tikzstyle{pre}=[<-,thick]
\tikzstyle{post}=[->,thick]
\tikzstyle{readblue}=[-, draw=blue,thick]
\tikzstyle{transition}=[rectangle, draw=black,thick,minimum size=5mm]
\tikzstyle{place}=[circle, draw=black,thick,minimum size=5mm]
\node[place,tokens=1] (p1) at (0,2.5) [label=left:$s_1$] {};
\node[place,tokens=1] (p3) at (2,2.5) [label=right:$s_2$] {};
\node[place,tokens=1] (p5) at (4,2.5) [label=right:$s_3$] {};
\node[place] (p2) at (0,0) [label=left:$s_4$] {};
\node[place] (p4) at (2.25,0.25) [label=right:$s_5$] {};
\node[place] (p6) at (4,0) [label=right:$s_6$] {};
\node[transition] (t1) at (0,1.25)  {$t_1$}
edge[pre] (p1)
edge[inhibitorred] (p3)
edge[readblue, bend right] (p6)
edge[post] (p2);
\node[transition] (t2) at (2,1.25) {$t_2$}
edge[pre] (p3)
edge[post] (p4);
\node[transition] (t3) at (4,1.25)  {$t_3$}
edge[pre] (p5)
edge[pre] (p3)
edge[post] (p6);
\end{tikzpicture}}
\end{center}
\caption{A contextual Petri net}\label{fig:pn1}
\end{figure} 
\aggiunta{The arc connecting the place $s_6$ to the transition $t_1$ is a read arc, and the one connecting
the place $s_2$ to the same transition is an inhibitor one.}
 
 Given a net $N = \langle S, T, F, I, R, \mathsf{m}\rangle$ and  $x\in S\cup T$, we define the following
 (multi)sets:  
 $\pre{x} = \setcomp{y}{(y,x)\in F}$ and 
 $\post{x} = \setcomp{y}{(x,y)\in F}$.
 If $x\in S$ then 
 $\pre{x} \in \mu T$ and  $\post{x} \in \mu T$;
 analogously, 
 if $x\in T$ then $\pre{x}\in\mu S$ and $\post{x} \in \mu S$.
 Given a transition $t$, with $\inib{t}$ we denote the (multi)set $\setcomp{s}{(s,t)\in I}$ and
 with $\rarc{t}$ the (multi)set $\setcomp{s}{(s,t)\in R}$.

 A transitions $t\in T$
 is enabled at a marking $m\in \mu S$, denoted by $m\trans{t}$,
 whenever $\pre{t} + \rarc{t} \subseteq m$ and  $\forall s\in\flt{\inib{t}}.\ m(s) = 0$. 
 A transition $t$ enabled at a marking $m$ can \emph{fire} and its firing produces 
 the marking $m' = m - \pre{t} + \post{t}$.
 The firing of $t$ at a marking $m$ is denoted by $m\trans{t}m'$.
 $\rarc{t}$ and $\inib{t}$ are the \emph{contexts} in which the transition $t$ may fire at a marking
 $m$ provided that
 $\pre{t}\subseteq m$. $\rarc{t}$ is the \emph{positive} context as tokens must be present in
 the places connected to $t$ with a read arcs whereas $\inib{t}$ is a \emph{negative} one as 
 no token must be present in a place connected to a transition with an inhibitor arc. The tokens (or their
 absence) are just tested.
 \aggiunta{In the net in Figure~\ref{fig:pn1} the transition $t_1$ is enabled when 
 the places $s_1$ and $s_6$ are marked and the place $s_2$ is unmarked. In particular the
 token in the place $s_6$ is just tested for presence, whereas the place $s_2$ is tested for the absence
 of a token.}
 
 We assume that each transition $t$ of a net $N$ is such that $\pre{t}\neq\emptyset$,
 meaning that no transition may fire without \emph{consuming} some token, even if the 
 contexts would allow it.
 
 Given a generic marking $m$ (not necessarily  the initial one), 
 the \emph{firing sequence} ({shortened as} \fs) of  
 $N = \langle S, T, F, I, R, \mathsf{m}\rangle$  starting at $m$ is
 defined as the sequence (finite or infinite)
 $m_0\transfs{t_1}m_1\trans{t_2}m_2\cdots m_{n-1}\transfs{t_n}m_n\cdots$ 
 such that for all $i$ it holds that $m_i\trans{t_{i+1}}m_{i+1}$ and 
 where $m_0 = m$.
 The set of firing sequences of a net $N$ 
 starting at a marking $m$ is denoted by $\firseq{N}{m}$ and it is ranged over by $\sigma$.
 Given a \fs\ $\sigma = m\transfs{t_1}\sigma'\corrmath{\transfs{t_n}m_n}{}$, we denote with
 $\start{\sigma}$  the marking $m$ and\corr{with $\lead{\sigma}$ the marking $m_n$ if
 the firing sequence is finite and $\sigma = m_0\transfs{t_1}m_1\cdots m_{n-1}\transfs{t_n}m_n$}{, if 
 the firing sequence $\sigma$ is finite and $\sigma = m\transfs{t_1}m_1\cdots m_{n-1}\transfs{t_n}m_n$,
 with $\lead{\sigma}$ we denote the marking $m_n$}. 
 The \emph{length} of a \fs{}, \corr{denoted with}{written as} $\lung{\sigma}$,
 is defined as \corr{}{follows:}
 $\lung{\sigma} = 0$ if $\sigma = m$\corr{ and}{,} $\lung{\sigma} = 1 + \lung{\sigma'}$ if
 $\sigma = m\trans{t}\sigma'$ and $\sigma$ is a finite firing 
 sequence, \corr{}{and} $\lung{\sigma} = \infty$ otherwise.
 %
% $\remains{\sigma}$ denotes the 
% \fs\ $\sigma$, provided that $\sigma = m_0\trans{t_1}m_1\sigma'$.
 Given a \fs\ $\sigma = m_0\trans{t_1}m_1\trans{t_2}m_2\cdots m_{n-1}\trans{t_n}m_n\cdots$,
 with $\sigma(i)$ we denote the \fs\ $m_0\trans{t_1}m_1\trans{t_2}m_2\cdots m_{i-1}\trans{t_i}m_i$
 for all $i$ less or equal to the length of the \fs.
 Given a net $N$, a marking $m$ is \emph{reachable} 
 iff there exists a 
 \fs\ $\sigma \in \firseq{N}{\mathsf{m}}$ 
 such that $\lead{\sigma}$ is $m$. 
 The set of reachable markings of $N$ is
 $\reachMark{N} = \setcomp{\lead{\sigma}}{\sigma\in\firseq{N}{\mathsf{m}}}$. 
 \begin{definition}\label{de:states-of-a-net} 
  Let $N = \langle S, T, F, I, R, \mathsf{m}, \ell\rangle$ be a labeled contextual Petri net.
  Let $\sigma = \mathsf{m}\trans{t_1}m_1\cdots m_{n-1}\trans{t_n}m'$ be a \fs, then 
  $X_{\sigma} = \sum_{i=1}^{n} \setenum{t_i}$ is a \emph{state} of $N$.
  The set of states is \(\states{N} = \setcomp{X_{\sigma}\in \mu T}{\sigma\in\firseq{N}{\mathsf{m}}}\).
 \end{definition}
 A state is just the multiset of the transitions that have been fired in a \fs.

 \begin{definition}\label{de:conf-of-a-net} 
  Let $N = \langle S, T, F, I, R, \mathsf{m}, \ell\rangle$ be a labeled contextual Petri net.
  The set of \emph{configurations} of $N$, denoted with $\Conf{N}{}$, is 
  the set $\setcomp{\ell(X)}{X\in \states{N}}$.
 \end{definition}
 Thus a configuration is just the multiset of labels associated to a state.
% A configuration is denoted with $C$ (and a state with $X$). 
 If the net belongs to a specific kind, say ``\textsc{net}'', we may add a subscript
 \corr{to}{} $\Conf{N}{\textsc{net}}$\corr{}{\ to stress that the set of configurations refers to this
 kind of net}.
 
 Given a \fs\ $\sigma$, we can extract the sequence of the fired transitions. Formally
 $\trace{\sigma} = \epsilon$ if $\lung{\sigma} = 0$ and 
 $\trace{\sigma} = \ell(t)\cdot\trace{\sigma'}$ if $\sigma = m\trans{t}\sigma'$.
 
 \begin{definition}\label{de:traces-of-a-net} 
  Let $N = \langle S, T, F, I, R, \mathsf{m}, \ell\rangle$ be a labeled contextual Petri net,
  and $\sigma\in\firseq{N}{\mathsf{m}}$ be a \fs\ of $N$. 
  Then the set of \emph{traces} of $N$ is 
  $\Traces{N} = \setcomp{\trace{\sigma}}{\sigma\in\firseq{N}{\mathsf{m}}}$.
 \end{definition}

\begin{example}\label{ex:cont-net1}
Consider the contextual Petri net in Figure~\ref{fig:pn1}. 
At the initial marking $t_2$ and
$t_3$ are enabled whereas $t_1$ is not. After the execution of $t_2$ no other transition
is enabled. After the firing of $t_3$ the transition $t_1$ is enabled, as 
no token is present in the place $s_2$ and a token is present in the place $s_6$, the former
being connected to transition $t_1$ with an inhibitor arc and the latter being 
connected to transition $t_1$ with a read arc. 
The states of this net are $\setenum{t_2}$ and $\setenum{t_3,t_1}$.
Assume now that $\ell(t_1) = \pmv{b}$ and $\ell(t_2) = \ell(t_3) = \pmv{a}$, 
then the configurations are $\setenum{\pmv{a}}$ and $\setenum{\pmv{a},\pmv{b}}$.
Finally the traces of this net are $\pmv{a}$ and $\pmv{a}\pmv{b}$.
\end{example}

The following definitions characterize nets from a \emph{semantical} point of view.
\begin{definition}\label{de:single-execution-net}  
 A net $N = \langle S, T, F, I, R, \mathsf{m}, \ell\rangle$ is said to be \emph{safe} if
 each marking $m\in \reachMark{N}$ is such that $m = \flt{m}$. 
\end{definition}
In this paper we will consider safe nets, where each place contains at most one token.
The following definitions outline nets with respect to states, configurations and traces.

\begin{definition}\label{de:single-ex-net}
 A net $N = \langle S, T, F, I, R, \mathsf{m}, \ell\rangle$ is said to be a  
 \emph{single execution} net if
 each state  $X\in \states{N}$ is such that $X = \flt{X}$. 
\end{definition}
In a single execution net a transition $t$ in a firing sequence may be fired just once, as
the net in Example \ref{ex:cont-net1}.
In \cite{GP:CS} and \cite{GP:CSESPN} these nets (without inhibitor and read arcs) are called 
\emph{1-occurrence} net, and the name itself stress this characteristic.

\begin{definition}\label{de:unfolding-net}
 A net $N = \langle S, T, F, I, R, \mathsf{m}, \ell\rangle$ is said to be an 
 \emph{unfolding} if
 each configuration  $C\in \Conf{N}{}$ is such that $C = \flt{C}$. 
\end{definition}
Clearly each unfolding is also a single execution \corr{one}{net}, but the vice versa does not hold.
\corr{When the labeling of the net is an injective mapping we have that to each state a configuration 
corresponds and vice versa.}{Observe that if the labeling of the unfolding is an injective mapping then
states and configurations may be confused as $\Conf{N}{}$ and $\states{N}$ are bijectively related.}

\begin{remark}
 In literature \emph{unfolding} is often used to denote not only a net with suitable
 characteristic (among them the fact that each transition is fired just once in each execution),
 but also how this net is related to another one (the one to be unfolded). Here we use it
 to stress that each configuration is a set\corr{.}{~and also to point out that the nets
 we are considering are labeled ones, and labels suggest that a labeled transition is associated
 to some activity, as it happens in \emph{classical} unfoldings.} 
\end{remark}

We consider two nets \emph{equivalent} when they have the same set of transitions, the same
labeling, and the 
same set of states (which means that they have also the same configurations and the same
set of firing sequences).
\begin{definition}\label{de:netequiv}
 Let $N = \langle S, T, F, I, R, \mathsf{m}, \ell\rangle$ and 
 $N' = \langle S', T', F', I', R', \mathsf{m}', \ell'\rangle$ be two nets.
 $N$ is \emph{equivalent} to $N'$, written as $N \equiv N'$
 whenever $T = T'$, $\ell = \ell'$ and $\states{N} = \states{N'}$.
\end{definition}
Though the equivalence relation on nets implies that the nets have the same firing sequences, 
it does not imply that the sets of reachable markings is the same, hence 
$N \equiv N'$ does not imply that $\reachMark{N} = \reachMark{N'}$. 

Observe that different traces may be associated to the same \corr{state}{configuration} of the net.
In an unfolding the following propositions hold.
\begin{proposition}\label{pr:banale1}
 Let $N = \langle S, T, F, I, R, \mathsf{m}, \ell\rangle$ be an unfolding.
 Then, for each $\rho \in \Traces{N}$ it holds that $\toset{\rho}\in \Conf{N}{}$.
\end{proposition}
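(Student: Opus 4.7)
The plan is to fix an arbitrary $\rho \in \Traces{N}$ and to exhibit a state $X \in \states{N}$ whose image under the labeling, viewed as a set, coincides with $\toset{\rho}$; the conclusion $\toset{\rho}\in \Conf{N}{}$ then follows immediately from Definition \ref{de:conf-of-a-net}.

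First I would unfold the definition of $\Traces{N}$ to obtain a firing sequence $\sigma = \mathsf{m}\trans{t_1}m_1\cdots m_{n-1}\trans{t_n}m_n \in \firseq{N}{\mathsf{m}}$ with $\rho = \trace{\sigma}$. Using the recursive definition $\trace{\sigma} = \ell(t)\cdot\trace{\sigma'}$, a straightforward induction on $\lung{\sigma}$ shows that $\rho$ is the sequence $\ell(t_1)\ell(t_2)\cdots \ell(t_n)$, so $\toset{\rho}=\setenum{\ell(t_1),\ldots,\ell(t_n)}$ as a set. On the state side, $X_\sigma = \sum_{i=1}^n \setenum{t_i} \in \states{N}$ by Definition \ref{de:states-of-a-net}, and its associated configuration is $\ell(X_\sigma)=\sum_{i=1}^n \setenum{\ell(t_i)} \in \Conf{N}{}$, a multiset whose support is precisely $\setenum{\ell(t_1),\ldots,\ell(t_n)}$.

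The decisive step is the use of the unfolding hypothesis: by Definition \ref{de:unfolding-net}, every configuration satisfies $C=\flt{C}$, so in particular $\ell(X_\sigma) = \flt{\ell(X_\sigma)}$. This flatness identifies the multiset $\ell(X_\sigma)$ with its support, which under the convention set in the Preliminaries (a set is identified with its characteristic multiset) is exactly $\toset{\rho}$. Hence $\toset{\rho}=\ell(X_\sigma)\in \Conf{N}{}$, as required.

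The argument is essentially bookkeeping; there is no real obstacle, only the care needed to move cleanly between sequences and their codomains, and between multisets and sets via $\flt{\,\cdot\,}$. The unfolding condition is what rules out the possible mismatch that could otherwise occur when some transition label were to appear more than once along $\sigma$: in a non-unfolding net $\ell(X_\sigma)$ could be a proper multiset, strictly larger than $\toset{\rho}$, and the statement would fail.
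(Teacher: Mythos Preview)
Your proof is correct and follows exactly the same approach as the paper's proof, which is the terse one-liner ``In an unfolding each configuration is a set, hence the thesis.'' You have simply made explicit the bookkeeping that the paper leaves implicit: namely, that the configuration $\ell(X_\sigma)$ associated to the witnessing firing sequence is a multiset whose flattening (guaranteed by the unfolding hypothesis) coincides with $\toset{\rho}$.
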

\begin{proof}
 In an unfolding each configuration is a set, hence the thesis.
\end{proof}

\begin{proposition}\label{pr:banale2}
 Let $N = \langle S, T, F, I, R, \mathsf{m}, \ell\rangle$ be an unfolding.
 Then, for each $X \in \Conf{N}{}$ there exists a trace $\rho \in \Traces{N}$ such that $\toset{\rho} = X$.
\end{proposition}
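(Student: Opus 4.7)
The plan is to unpack the definitions and exhibit a witness trace directly from the firing sequence underlying the given configuration. Given any configuration $X \in \Conf{N}{}$, by Definition~\ref{de:conf-of-a-net} there is a state $Y \in \states{N}$ with $X = \ell(Y)$, and by Definition~\ref{de:states-of-a-net} there is a firing sequence $\sigma = \mathsf{m}\trans{t_1}m_1\cdots\trans{t_n}m_n$ of $N$ with $Y = X_{\sigma} = \sum_{i=1}^{n}\setenum{t_i}$. The natural candidate trace is $\rho = \trace{\sigma} = \ell(t_1)\cdot\ell(t_2)\cdots\ell(t_n)$, which lies in $\Traces{N}$ by Definition~\ref{de:traces-of-a-net}.

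Next I would verify that $\toset{\rho} = X$. By the definition of $\toset{\cdot}$ as the codomain of the sequence, $\toset{\rho} = \setcomp{\ell(t_i)}{1\leq i\leq n}$. On the other hand, $X = \ell(Y) = \ell(\sum_{i=1}^{n}\setenum{t_i})$, which as a multiset sends each label $a \in \Lab$ to $\card{\setcomp{i}{\ell(t_i) = a}}$. The two sides therefore coincide as soon as this multiset is actually a set, i.e.\ $X = \flt{X}$, which is granted precisely because $N$ is an unfolding (Definition~\ref{de:unfolding-net}). Under this hypothesis each label $\ell(t_i)$ appears at most once among $\ell(t_1),\ldots,\ell(t_n)$, so identifying the multiset $X$ with the underlying set gives $X = \setcomp{\ell(t_i)}{1\leq i\leq n} = \toset{\rho}$.

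The argument is essentially a definition chase, and there is no real obstacle. The only subtle point to keep in mind is the role of the unfolding hypothesis: without it, the sequence $\ell(t_1)\cdots\ell(t_n)$ could contain label repetitions, so that its codomain would collapse several multiplicities of $X$ into a single element and the equality $\toset{\rho} = X$ would fail in the multiset sense. Once the unfolding assumption is invoked to rule this out, the conclusion is immediate.
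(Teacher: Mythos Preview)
Your proof is correct and follows essentially the same approach as the paper: pick a firing sequence underlying the configuration, take its trace, and use the unfolding hypothesis to ensure the multiset $X$ is a set so that it coincides with the codomain of the trace. The only minor difference is that the paper appeals to Proposition~\ref{pr:banale1} for the last step, whereas you invoke Definition~\ref{de:unfolding-net} directly; this is cosmetic, since the proof of Proposition~\ref{pr:banale1} is itself just the observation that configurations of an unfolding are sets.
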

\begin{proof}
 As $X \in \Conf{N}{}$ then there is a \fs\ $\sigma$ such that $X_{\sigma} = X$, and take
 $\rho = \trace{\sigma}$. By Proposition~\ref{pr:banale1} it follows that $\toset{\rho} = X$.
\end{proof}

The following definition give a \emph{structural} characterization 
of when two transitions, which we say are \emph{conflicting} transitions, 
never happen together in any execution of a net.
\begin{definition}\label{de:sat-confl}
 Let $N = \langle S, T, F, I, R, \mathsf{m}, \ell\rangle$ be a net. 
 We say that $N$ is \emph{conflict saturated} if for all 
 $t, t'\in T$ such that $\forall X\in \states{N}.\ \setenum{t, t'}\not\subseteq \flt{X}$,
 it holds that $\pre{t}\cap\pre{t'} \neq \emptyset$.
\end{definition}
In a conflict saturated net the fact that two transitions never happen in a state is also
justified by the existence of a place in their preset, and this suggests that in certain
kind of nets conflicts can be characterized structurally.
Indeed each single execution net can be transformed into an equivalent one conflict saturated.
\begin{proposition}\label{pr:sat-confl}
 Let $N = \langle S, T, F, I, R, \mathsf{m}, \ell\rangle$ be a single execution net.
 Let $\mathit{Confl}(N)$ be the set $\setcomp{\setenum{t,t'}}{\forall X\in \states{N}.\ 
 \setenum{t, t'}\not\subseteq \flt{X}}$.
 Then the net $N^{\#} =
 \langle S\cup S^{\#}, T, F\cup\setcomp{(s_A,t)}{s_A\in S^{\#}\ \land\ t\in A}, I, R, \mathsf{m}\cup S^{\#}, \ell\rangle$, where $S^{\#} = \setcomp{s_A}{A\in \mathit{Confl}(N)}$, 
 is conflict saturated and 
 $N\equiv N^{\#}$.
\end{proposition}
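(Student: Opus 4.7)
The plan is to separate the two claims—equivalence $N \equiv N^{\#}$ and conflict saturation of $N^{\#}$—and attack the equivalence first, since conflict saturation is an almost immediate consequence. By construction the transition set, the labeling, and all inhibitor/read arcs are preserved, so for the equivalence it suffices to prove $\states{N} = \states{N^{\#}}$. One inclusion is routine: the construction only adds places (and arcs into transitions), so any firing sequence of $N^{\#}$ projects to a firing sequence of $N$ over the same sequence of transitions, yielding the same state; hence $\states{N^{\#}} \subseteq \states{N}$.

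The substantive direction is $\states{N} \subseteq \states{N^{\#}}$, and this is where the single execution hypothesis on $N$ is essential. Given a firing sequence $\sigma \in \firseq{N}{\mathsf{m}}$, I would show by induction on its length that $\sigma \in \firseq{N^{\#}}{\mathsf{m}\cup S^{\#}}$. The only possible obstruction is a new place $s_A \in S^{\#}$, with $A = \{t,t'\} \in \mathit{Confl}(N)$, which starts with one token and must still be marked whenever $t$ (or $t'$) fires. Only transitions in $A$ consume this token; since $N$ is a single execution net and $A \in \mathit{Confl}(N)$, the state $X_\sigma$ contains at most one element of $A$, so $s_A$ is consumed at most once along $\sigma$ and is present exactly when needed. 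Thus $\sigma$ is a valid firing sequence in $N^{\#}$ with the same state, giving the reverse inclusion and hence $N \equiv N^{\#}$.

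For conflict saturation, consider $t, t' \in T$ with $\setenum{t,t'} \not\subseteq \flt{X}$ for every $X \in \states{N^{\#}}$. By the equivalence just established, the same holds for every $X \in \states{N}$, so $A = \setenum{t,t'} \in \mathit{Confl}(N)$ and the corresponding place $s_A \in S^{\#}$ lies in both $\pre{t}$ and $\pre{t'}$ in $N^{\#}$, so $\pre{t} \cap \pre{t'} \neq \emptyset$. The degenerate case $t = t'$ reduces to $t$ never firing, and $\pre{t} \cap \pre{t} = \pre{t} \neq \emptyset$ by the global assumption that every transition has a non-empty preset. The main obstacle is the inductive step in the second paragraph, i.e., the bookkeeping that a new place $s_A$ is never drained prematurely; everything turns on the interplay between $N$ being a single execution net and the definition of $\mathit{Confl}(N)$.
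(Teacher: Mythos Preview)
Your proposal is correct and follows essentially the same approach as the paper: both prove $\states{N^{\#}} \subseteq \states{N}$ from the fact that the added places have no incoming arcs, and establish the reverse inclusion by induction on the length of a firing sequence, arguing that a new place $s_A$ cannot be emptied prematurely because that would force two conflicting transitions (or two occurrences of the same transition, excluded by the single execution hypothesis) to appear in a state of $N$. Your treatment is slightly more explicit in tracking how many times $s_A$ is consumed and in handling the degenerate case $t=t'$ for conflict saturation, but the argument is the same.
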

\begin{proof}
 \corr{$N^{\#}$ is conflict saturated by construction, as it is a single execution net.
 The equivalence $N\equiv N^{\#}$ holds as the added places do not influence 
 the firing sequences of the net.}
 {It is enough to prove that $\states{N} = \states{N^{\#}}$. Observe first that the added places $S^{\#}$ in
 $N^{\#}$ do not have any incoming arcs and are initially marked. 
 This implies 
 that $\states{N^{\#}} \subseteq \states{N}$. For the vice versa, consider  
 $X\in \states{N}$. $X = X_{\sigma}$ for some
 \fs{} $\sigma\in\firseq{N}{\mathsf{m}}$. We construct, by induction on the length of
 $\sigma$, a \fs{} $\widehat{\sigma}\in\firseq{N^{\#}}{m}$, with 
 $m = \mathsf{m}\cup S^{\#}$, such that $X_{\widehat{\sigma}}\in \states{N^{\#}}$ and 
 $X = X_{\widehat{\sigma}}$.
 If $\lung{\sigma} = 0$ then $\lead{\sigma} = \mathsf{m}$ and it suffices to set
 $\widehat{\sigma} = \mathsf{m}\cup S^{\#}$. Assume it holds for $n$ and consider
 $\sigma = \sigma_1\trans{t}m$ of length $n+1$. To $\sigma_1$ in $\firseq{N}{\mathsf{m}}$ a 
 $\widehat{\sigma_1}$ in $\firseq{N^{\#}}{m}$ corresponds, so it is enough to
 verify that $\lead{\widehat{\sigma_1}}\trans{t}$. If $\neg \lead{\widehat{\sigma_1}}\trans{t}$
 it should be that a conflicting transition $t'$ has been executed in $\widehat{\sigma_1}$, but
 this implies that the conflicting transition has been executed in $\sigma$ as well, which 
 cannot be. Hence $\lead{\widehat{\sigma_1}}\trans{t}$ and the reached marking
 is $\lead{\sigma}\cup (S^{\#}\setminus \pre{X_{\sigma}})$, which implies that 
 $X = X_{\widehat{\sigma}}\in \states{N^{\#}}$. 
 $N^{\#}$ is then conflict saturated by construction, as the conflicting transitions in
 $N^{\#}$ are the same as in $N$, and it equivalent to $N$ as well.}
 \end{proof}
%
%We prove that $N^{\#}$ is a single execution net. Assume it is not, then there exists
% a transition $t\in T$ and a firing sequence $\sigma\in\firseq{N^{\#}}{m}$, with 
% $m = \mathsf{m}\cup S^{\#}$, such that $X_{\sigma}(t)>1$.  

A subnet of a net is a net obtained restricting places and transitions, and 
correspondingly restricting also the flow and the context relations as well as the initial marking
and the labeling.

 \begin{definition}
   \label{de:subnet-tran}
   Let $N = \langle S, T, F, I, R, \marko{m}, \ell\rangle$ be a net and let 
   \corr{$S'\subseteq S$ and}{} $T'\subseteq T$\corr{.}{~be a subset of transitions.}
   Then the subnet generated by $T'$ is the net
   $\subnet{N}{T'}{} = \langle S', T', F', I', R', \marko{m}', \ell'\rangle$, where
     \begin{itemize}
     \item
       $S' = \setcomp{s\in S}{\exists t\in T'.\ (s,t)\in F\ \lor\ (t,s)\in F}$,
     \item
       $F'$, $I'$ and $R'$ are the restriction of $F$, $I$ and $R$ to $S'$ and $T'$, 
     \item
       $\marko{m}'$ is the multiset on $S'$ obtained by $\marko{m}$ restricting to places in $S'$, and
      \item
       $\ell'$ is the restriction of $\ell$ to transitions in $T'$.
    \end{itemize}
 \end{definition}
\corr{}{The subnet is \emph{generated} by a subset $T'$ of transitions, 
hence the places to be considered are those connected with the transitions in $T'$.}

\section{Nets and event structures}\label{sec:netandes}
\aggiunta{In this section we revise and the more classical relations between nets and event structures,
in particular we recall the relationship between \emph{occurrence nets} and \emph{prime event structures}
and the one between \emph{unravel nets} and \emph{bundle event structures}. Though the latter one 
is not entirely \emph{classic} the notion of unravel net is closely related to the one of \emph{flow} net
and bundle event structures to \emph{flow} event structures.}

\subsection{Occurrence nets and prime event structure}
We recall the notion of \emph{occurrence} net, and as it has no inhibitor or read arc nor a labeling, 
we omit $I$, $R$ and $\ell$ in the following, assuming that $I = \emptyset = R$ and $\ell$ being the
identity on transitions. 
Given a net $N = \langle S, T, F, \mathsf{m}\rangle$,
we write $<_N$ for transitive closure of $F$.
We say $N$ is \emph{acyclic} if  $\leq_N$ is a partial order.
For occurrence nets, we adopt the usual convention: places and transitions are called 
as \emph{conditions} and \emph{events}, and use $B$  and 
$E$ for the sets of conditions and events.
We may confuse conditions with places and 
events with transitions.
The initial marking is denoted with $\mathsf{c}$.
\begin{definition}
 An \emph{occurrence net} (\cn) $\cnname = \langle B, E, F, \mathsf{c}\rangle$ is an acyclic, safe net 
 satisfying the
  following restrictions:
  \begin{itemize}
    \item 
       $\forall b\in B$. $\pre{b}$  
       is either empty or a singleton, and $\forall b\in \mathsf{c}$. $\pre{b} = \emptyset$, 
    \item 
       $\forall b\in B$. $\exists b'\in \mathsf{c}$ such that $b' \leq_{\cnname} b$,
    \item 
      for all $e\in E$ the set $\hist{e} = \setcomp{e' \in E}{e'\leq_{\cnname} e}$ is finite,
          and
   \item  
      $\#$ is an irreflexive and symmetric relation defined as follows:
           \begin{itemize}
             \item 
                $e\ \#_0\ e'$ iff $e, e' \in E$, $e\neq e'$ and 
                   $\pre{e}\cap\pre{e'}\neq \emptyset$,
             \item 
                $x\ \#\ x'$ iff $\exists y, y'\in E$ such that $y\ \#_0\ y'$ and $y \leq_{\cnname} x$ and 
                   $y' \leq_{\cnname} x'$.
             \end{itemize}
     \end{itemize}
\end{definition}
The intuition behind occurrence nets is the following: each condition $b$ represents 
the occurrence of a token,
which is produced by the \emph{unique} event in $\pre{b}$, 
unless $b$ belongs to the initial marking,
and it is used by only one transition (hence if $e, e'\in\post{b}$, then $e\ \#\ e'$).
On an occurrence net $\cnname$ it is natural to define a notion of \emph{causality} among elements of the 
net: we say that $x$ is \emph{causally dependent} on $y$ iff $y \leq_{\cnname} x$.
Occurrence nets are often the result of the \emph{unfolding} of a (safe) net.
In this perspective an occurrence net is meant to describe precisely the non-sequential semantics
of a net, and each reachable marking of the occurrence net corresponds to a reachable marking
in the net to be unfolded. 
Here we focus purely on occurrence nets and not on the nets they are the unfolding of.

\begin{proposition}
 Let $\cnname = \langle B, E, F, \mathsf{c}\rangle$ be an occurrence net. Then $\cnname$ is a single execution
 net and it is an unfolding.
\end{proposition}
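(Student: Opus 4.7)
The plan is to prove the single execution property first, and then derive the unfolding property from it using the fact that the labeling of an occurrence net is (by the stated convention) the identity on $E$, hence injective.

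For single execution, I would fix a firing sequence $\sigma = \mathsf{c}\trans{e_1}c_1\trans{e_2}\cdots$ of $\cnname$ and show, by strong induction on the height $\height{e} = \card{\hist{e}}$ (which is finite by definition of an occurrence net), that every event $e\in E$ fires at most once along $\sigma$. The key invariant to establish is a stronger one: every condition $b\in B$ is marked at most once during $\sigma$, in the sense that $b$ appears in at most one of the markings $c_i$, after which it is consumed and never re-produced. For the base case, if $\pre{e}$ consists only of conditions $b\in\mathsf{c}$ with $\pre{b}=\emptyset$, then these conditions can only receive a token from the initial marking; once consumed by $e$, no event can re-produce them, so $e$ cannot fire twice. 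For the inductive step, each $b\in\pre{e}$ either lies in $\mathsf{c}$ (handled as above) or satisfies $\pre{b}=\{e'\}$ for the unique predecessor event $e' <_{\cnname} e$. Since $\height{e'} < \height{e}$, the induction hypothesis guarantees that $e'$ fires at most once, so $b$ receives a token at most once. Safeness then forces $e$ itself to fire at most once.

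From this invariant it follows that every state $X_\sigma = \sum_{i=1}^n \setenum{e_i}$ assigns multiplicity at most $1$ to each event, so $X_\sigma = \flt{X_\sigma}$, which is exactly Definition~\ref{de:single-ex-net}. Therefore $\cnname$ is a single execution net. For the unfolding property, recall that $\Conf{\cnname}{} = \setcomp{\ell(X)}{X\in\states{\cnname}}$ and that, by our convention on occurrence nets, $\ell$ is the identity on $E$. Consequently $\ell(X_\sigma) = X_\sigma$, and since $X_\sigma = \flt{X_\sigma}$ we also have $\ell(X_\sigma) = \flt{\ell(X_\sigma)}$, so every configuration is a set and $\cnname$ is an unfolding in the sense of Definition~\ref{de:unfolding-net}.

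The main obstacle is justifying the invariant ``each condition is marked at most once'' cleanly; this is where one must use all three structural conditions jointly (unique predecessor for non-initial conditions, empty preset for initial conditions, and finiteness of $\hist{e}$ to make the induction well-founded). Acyclicity of $\leq_{\cnname}$ is what makes $<_{\cnname}$ a strict partial order and thus legitimises the descent $e'<_{\cnname} e$ used in the inductive step. Once this invariant is in place, both conclusions follow directly and no further use of the conflict relation $\#$ is needed.
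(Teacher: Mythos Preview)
Your argument is correct and follows the same line as the paper's proof, which simply asserts in one sentence that acyclicity together with safeness prevents any event from firing twice, and then observes that the labelling is the identity so states and configurations coincide. You spell out the induction that the paper leaves implicit, and your use of $\card{\hist{e}}$ as the well-founded measure is exactly the right way to make that step precise.

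Two small remarks on presentation: the phrase ``$b$ appears in at most one of the markings $c_i$'' is not literally what you mean, since a marked condition that is not immediately consumed will appear in several consecutive markings; the correct invariant is that $b$ receives at most one token over the whole run, which is what your subsequent clause ``never re-produced'' actually expresses. Also, once you have established that each $b\in\pre{e}$ receives at most one token, the conclusion that $e$ fires at most once follows directly from $\pre{e}\neq\emptyset$ and the token balance, so safeness is not doing the work you attribute to it in the inductive step (it is of course true, just not needed there).
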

\begin{proof}
 $\cnname$ is acyclic and safe, hence no transition in $E$ can be executed more than once, hence it is a single
 execution one, and it is an unfolding as configuration and states coincide, being the \emph{labelling}
 injective. 
\end{proof}
Occurrence nets are relevant as they are tightly related to \emph{prime event structures}, which
we briefly recall here (\cite{Win:ES}).

\begin{definition}\label{de:pes-winskel}
 A \emph{prime event structure ({\pes})} is a triple $\pesname = (E, <, \#)$, where 
 \begin{itemize}
  \item $E$ is a  countable set of \emph{events},
  \item $<\ \subseteq E\times E$ is
        an irreflexive partial order called the \emph{causality relation}, such that
        $\forall e\in E$. $\setcomp{e'\in E}{e' < e}$ is finite, and 
  \item $\#\ \subseteq E\times E$ is a \emph{conflict relation}, which is irreflexive,
        symmetric and \emph{hereditary} relation with respect to $<$: if  $e\ \#\ e' < e''$ 
        then $e\ \#\ e''$ for all $e,e',e'' \in E$.
 \end{itemize}       
\end{definition}
Given an event $e\in E$,  $\hist{e}$  denotes the set $\setcomp{e'\in E}{e'\leq e}$. 
A subset of events $X \subseteq E$ is left-closed if $\forall e\in X. 
\hist{e}\subseteq X$.
Given a subset $X\subseteq E$ of events,  $X$ is \emph{conflict free} iff 
for all $e, e'\in X$ it holds that 
$e\neq e'\ \Rightarrow\ \neg(e\ \#\ e')$, and we denote it with $\CF{X}$. 
Given $X\subseteq E$ such that $\CF{X}$ and $Y\subseteq X$, then also $\CF{Y}$.
\begin{definition}\label{de:pes-conf}
 Let $\pesname = (E, <, \#)$ be a {\pes}. Then $X\subseteq E$ is a \emph{configuration}
 if $\CF{X}$ and $\forall e\in X.\ \hist{e}\subseteq X$. 
 The set of configurations of the {\pes} $\pesname$  is denoted by $\Conf{\pesname}{\pes}$.
\end{definition}

%Configurations are definable also in occurrence nets.
%\begin{definition}\label{de:cn-conf}
% Let $\cnname = \langle B, E, F, \mathsf{c}\rangle$ be an \cn\ and 
% $X\subseteq E$ be a subset of events. Then $X$ is a \emph{configuration} of
% $\cnname$ whenever $\CF{X}$ and $\forall e\in X$. $\hist{e}\subseteq X$.
% The set of configurations of the \cn\ $\cnname$ is denoted by $\Conf{\cnname}{\cn}$.
%\end{definition}
%
%Given an \cn\ $\cnname = \langle B, E, F, \mathsf{c}\rangle$ and a state
%$X \in \states{\cnname}$, it is easy to see that it is \emph{conflict free}, \emph{i.e.}
%$\forall e, e'\in X$. $e\neq e'\ \Rightarrow\ \neg (e\ \#\ e')$, and \emph{left closed},
%\emph{i.e.} $\forall e \in X$. $\setcomp{e'\in E}{e'\leq_{\cnname} e}\subseteq X$.
%
%\begin{proposition}\label{pr:states-are-conf}
% Let $\cnname = \langle B, E, F, \mathsf{c}\rangle$ be an occurrence net and $X\in \states{\cnname}$. Then
% $X\in \Conf{\cnname}{\cn}$.
%\end{proposition}
%
Occurrence nets and prime event structures are connected as follows (\cite{Win:ES}).
Proofs are omitted as they are standard and can be found in literature.
\begin{proposition}\label{pr:on-to-pes}
 Let $\cnname = \langle B, E, F, \mathsf{c}\rangle$ be an \cn, and define $\nettoes{\cn}{\pes}(\cnname)$ 
 as the
 triple $(E, <_C, \#)$ where $<_C$ is the irreflexive and transitive relation obtained by $F$ 
 restricting to $E\times E$ and $\#$ is the irreflexive and symmetric relation associated to
 $\cnname$.
 Then $\nettoes{\cn}{\pes}(\cnname)$ is a \pes, and 
 $\Conf{\cnname}{\cn} = \Conf{\nettoes{\cn}{\pes}(\cnname)}{\pes}$.
\end{proposition}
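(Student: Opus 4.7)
The plan is to decompose the statement into two parts: first, verify that $\nettoes{\cn}{\pes}(\cnname)$ satisfies the axioms of Definition~\ref{de:pes-winskel}, and second, establish the set equality of configurations by mutual inclusion.

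For the first part, I would argue as follows. Irreflexivity and transitivity of $<_C$ are immediate from the definition (transitive closure of $F$ restricted to $E \times E$) combined with the acyclicity of $\cnname$. Finiteness of $\hist{e}$ for each $e \in E$ is exactly one of the defining conditions of an occurrence net. Irreflexivity and symmetry of $\#$ follow by an easy induction from the irreflexivity and symmetry of $\#_0$, which in turn come from the clause $e \neq e'$ in the definition of $\#_0$. The only axiom requiring a short argument is hereditariness: if $e \ \#\ e'$ and $e' <_C e''$, then by definition of $\#$ there exist $y, y' \in E$ with $y\ \#_0\ y'$, $y \leq_C e$, $y' \leq_C e'$; since $\leq_C$ is transitive and $y' \leq_C e' <_C e''$ gives $y' \leq_C e''$, we obtain $e\ \#\ e''$.

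For the second part, the inclusion $\Conf{\cnname}{\cn} \subseteq \Conf{\nettoes{\cn}{\pes}(\cnname)}{\pes}$ follows from a firing-sequence analysis. Take $X = X_{\sigma}$ for some firing sequence $\sigma$ starting from $\mathsf{c}$. To see that $X$ is left-closed with respect to $<_C$, suppose $e \in X$ and $e' <_C e$: every token consumed by $e$ in $\sigma$ was produced either by an initial condition or by a unique event, and iterating back along $F$ one reaches every $e' \leq_C e$, so each such $e'$ must have fired in $\sigma$. For conflict freeness, suppose towards a contradiction that $e, e' \in X$ with $e \ \#\ e'$; then there are $y \leq_C e$, $y' \leq_C e'$ with $y \ \#_0\ y'$, i.e.\ sharing a condition $b \in \pre{y} \cap \pre{y'}$; by left-closure both $y$ and $y'$ fire in $\sigma$, but safeness of $\cnname$ together with the fact that $b$ has a unique producer forbids the two consumptions of $b$, a contradiction.

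The reverse inclusion is where the main work lies and is the step I expect to be the chief obstacle. Given $X \in \Conf{\nettoes{\cn}{\pes}(\cnname)}{\pes}$, I would build a firing sequence by linearizing $X$ according to any extension of $<_C|_X$ to a total order, say $e_1, e_2, \ldots, e_n, \ldots$, and prove by induction on $i$ that $e_i$ is enabled at the marking $m_{i-1}$ reached after firing $e_1, \ldots, e_{i-1}$. The enabling amounts to showing that every $b \in \pre{e_i}$ holds a token in $m_{i-1}$: either $b \in \mathsf{c}$ (and no $e_j$ with $j < i$ can have consumed it, since then $e_j\ \#_0\ e_i$ by the shared precondition, contradicting conflict freeness of $X$), or $\pre{b} = \{e'\}$ for a unique event $e'$; left-closure of $X$ forces $e' \in X$, and by the chosen linearization $e' <_C e_i$ implies $e'$ appears before $e_i$, while conflict freeness again ensures that no intermediate event has consumed the token produced on $b$. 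This yields $\sigma$ with $X_{\sigma} = X$, so $X \in \Conf{\cnname}{\cn}$, completing the equality.
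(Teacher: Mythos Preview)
The paper does not actually prove this proposition: immediately before it, the text states ``Proofs are omitted as they are standard and can be found in literature.'' Your argument is precisely the standard one that the literature supplies, and it is correct. The decomposition into verifying the \pes\ axioms and then the two inclusions of configurations, with the reverse inclusion handled by linearizing a left-closed conflict-free set compatibly with $<_C$ and checking enabledness inductively, is exactly what one finds in, e.g., Winskel's original treatment. One tiny point you might make explicit for infinite $X$: the linearization $e_1, e_2, \ldots$ with each $e_i$ preceded by only finitely many elements exists because $\hist{e}$ is finite for every $e$, which you already noted is an axiom of the occurrence net.
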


Also the vice versa is possible, namely given a prime event structure one can associate to it an
occurrence net.
The construction is indeed quite standard (see \cite{Win:ES,BCP:LenNets} among many others).

\begin{definition}\label{de:pes-to-cn}
 Let $\pesname = (E, \leq, \#)$ be a \pes. Define 
 $\estonet{\pes}{\cn}(\pesname)$ as the net $\langle B, E, F, \mathsf{c}\rangle$ where
 \begin{itemize}
   \item $B = \setcomp{(\ast,e)}{e\in E}\cup \setcomp{(e,\ast)}{e\in E}\cup
         \setcomp{(e,e',<)}{e < e'} \cup \setcomp{(\setenum{e,e'},\#)}{e\ \#\ e'}$,
   \item $F =   \setcomp{(e,b)}{b = (e,\ast)}\ \cup\ \setcomp{(e,b)}{b = (e,e',<)}\ \cup\ 
                \setcomp{(b,e)}{b = (\ast,e)}\ \cup\ \setcomp{(b,e)}{b = (e',e,<)}\ \cup\ 
               \setcomp{(b,e)}{b = (Z,\#)\ \land\ e\in Z}$,    
         and
   \item $\mathsf{c} = \setcomp{(\ast,e)}{e\in E}\cup \setcomp{(\setenum{e,e'},\#)}{e\ \#\ e'}$.                                                
 \end{itemize}
\end{definition} 
\begin{proposition}\label{pr:pestoon-classic}
 Let $\pesname = (E, \leq, \#)$ be a \pes. Then  
 $\estonet{\pes}{\cn}(\pesname) = \langle B, E, F, \mathsf{c}\rangle$ as defined
 in Definition \ref{de:pes-to-cn} is an \cn, and 
 $\Conf{\pesname}{\pes} = \Conf{\estonet{\pes}{\cn}(\pesname)}{\cn}$
\end{proposition}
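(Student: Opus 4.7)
The plan is to split the argument into two halves: first verifying that $\estonet{\pes}{\cn}(\pesname)$ satisfies the four defining conditions of an occurrence net, and then proving the two configuration sets coincide by constructing firing sequences from left-closed conflict-free subsets and vice versa.

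For the structural half, I would go through the axioms in order. Acyclicity of the induced $\leq_{\cnname}$ on the constructed net follows because every directed $F$-step either moves from a condition of type $(\ast,e)$, $(e',e,<)$, $(\{e,e'\},\#)$ into the event $e$, or from an event $e$ into a condition $(e,\ast)$ or $(e,e',<)$ with $e<e'$; composing two such steps either stays within a single event's neighbourhood or follows the strict order $<$ of $\pesname$, which is irreflexive. The preset of each condition is a singleton for places of the form $(e,\ast)$ and $(e,e',<)$ and empty for $(\ast,e)$ and $(\{e,e'\},\#)$; the latter two families are exactly those in $\mathsf{c}$, giving the first occurrence-net clause. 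The second clause (every condition is reachable from the initial marking) is immediate: conditions of the form $(e,\ast)$ and $(e,e',<)$ are produced by the event $e$, whose own presets include $(\ast,e)$ and, for predecessors $e''<e$, the condition $(e'',e,<)$; a straightforward induction on the height of $e$ in $<$ shows reachability. Finiteness of $\hist{e}$ in the net follows directly from the corresponding finiteness axiom of the \pes.

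The delicate part is reconciling the structural conflict relation $\#$ of the net with the PES conflict. For $\#_0$, two events share a preset condition only when that condition is of shape $(\{e,e'\},\#)$, hence precisely when $e\,\#\,e'$ in $\pesname$; the definition of $F$ rules out any other sharing. Hereditary propagation then matches: $e\,\#\,e'<e''$ in $\pesname$ means in the net there is a path from $e'$ to $e''$ through the conditions $(e',e'',<)$ and intermediate events, so the net-level $\#$ relation extends the primary conflict through $\leq_{\cnname}$ exactly as Definition of \cn\ prescribes. Safeness is then routine: the pre-set of any event is covered once in any reachable marking because otherwise two events sharing a pre-condition would both be fireable, contradicting the hereditary conflict just established.

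For the configuration equality I would argue both inclusions. Given $X\in\Conf{\pesname}{\pes}$, I enumerate $X$ in a linearization $e_1,e_2,\dots$ compatible with $<$ (possible since each $\hist{e_i}$ is finite) and verify by induction that, after firing $e_1,\dots,e_n$, all tokens required by $e_{n+1}$ are present: the condition $(\ast,e_{n+1})$ is initial and untouched, every $(e'',e_{n+1},<)$ was produced when its predecessor $e''\in\hist{e_{n+1}}\subseteq X$ fired earlier in the linearization, and every conflict place $(\{e_{n+1},e'\},\#)$ is still marked because $\CF{X}$ forbids $e'\in X$. Conversely, for $X\in \Conf{\estonet{\pes}{\cn}(\pesname)}{\cn}$, realised by a firing sequence, no two events in $X$ can share a $(\{e,e'\},\#)$ place, giving conflict-freeness of $X$ in $\pesname$, and firing any $e\in X$ requires the condition $(e'',e,<)$ for every $e''<e$, which in turn can only have been produced by firing $e''$, so $X$ is left-closed.

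The main obstacle I anticipate is keeping the bookkeeping of the four place families clean while verifying the conflict axiom and safeness simultaneously; once the correspondence $\#_0 \longleftrightarrow (\{e,e'\},\#)$-sharing is isolated, the rest is routine and matches standard treatments such as those in \cite{Win:ES,BCP:LenNets}.
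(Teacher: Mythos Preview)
The paper does not prove this proposition at all: it states explicitly that ``Proofs are omitted as they are standard and can be found in literature'' and points to \cite{Win:ES,BCP:LenNets}. Your sketch is exactly the standard argument those references contain, and it is correct in outline.

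One local correction: your justification of safeness is garbled. The sentence ``the pre-set of any event is covered once in any reachable marking because otherwise two events sharing a pre-condition would both be fireable'' conflates the marking bound with something about enabling. The clean argument is structural: places of the form $(\ast,e)$ and $(\{e,e'\},\#)$ lie in $\mathsf{c}$ and have empty $\pre{\cdot}$, so they carry at most one token forever; places $(e,\ast)$ and $(e,e',<)$ start empty and have $\pre{\cdot}=\{e\}$, and $e$ can fire at most once because it consumes the unreplenishable token in $(\ast,e)$. That gives safeness directly, independent of the conflict analysis. With that fix, your proof is the expected one.
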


In essence an occurrence net is fully characterized by the partial order relation and the 
\emph{saturated} conflict relation. 
This observation, together with the fact that an immediate conflict in a safe net is represented
by a common place in the preset of the conflicting events, suggests that conflicts may be modeled
directly, which is the meaning of the following proposition\corr{ and that will be handy in rest of
the paper.}{.}

\aggiunta{\begin{proposition}
  Let $\cnname = \langle B, E, F, \mathsf{c}\rangle$ be an \cn\ and let $\#$ be the associated
  conflict relation. Then $\cnname^{\#} = \langle B\cup B^{\#}, E, F\cup F^{\#}, \mathsf{c}\cup B^{\#} \rangle$
  where $B^{\#} = \setcomp{\setenum{e,e'}}{e\ \#\ e'}$ and $F^{\#} = \setcomp{(A,e)}{A\in B^{\#}\ \land\ 
  e\in A}$, 
  is an \cn\ such that $\Conf{\cnname}{\cn} = \Conf{C^{\#}}{\cn}$.
\end{proposition}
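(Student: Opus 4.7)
The plan is to verify two things separately: first that $\cnname^{\#}$ satisfies every clause of the definition of occurrence net, and second that the sets of configurations coincide.

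For the first part I would inspect each requirement in turn. Acyclicity and safeness are preserved because every newly added place $b \in B^{\#}$ has $\pre{b} = \emptyset$ (it never appears as a target of an arc in $F^{\#}$), so it cannot take part in any cycle, and it carries a single token in $\mathsf{c}\cup B^{\#}$ which can be consumed at most once by any of the events in its postset. The condition that $\pre{b}$ is empty or a singleton is trivially satisfied on the new places, and the condition that initially marked places have empty preset holds by construction. Reachability from $\mathsf{c}\cup B^{\#}$ is preserved for all original conditions because $\mathsf{c}\subseteq \mathsf{c}\cup B^{\#}$, and the new conditions are themselves in the initial marking. Finiteness of $\hist{e}$ is unchanged because the new places do not connect any event causally to another event: they only sit between two events as conflict markers and contribute nothing to $\leq$ restricted to $E\times E$.

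Next I would verify that the conflict relation of $\cnname^{\#}$ is exactly $\#$. The relation $\#_0$ in $\cnname^{\#}$ is enlarged because for every pair $e\ \#\ e'$ of $\cnname$ the place $\setenum{e,e'}\in B^{\#}$ is shared in the presets of $e$ and $e'$, so we get directly that the immediate conflicts of $\cnname^{\#}$ include all pairs in $\#$. Any new immediate conflict must come from a shared place, and the only shared places are either those already shared in $\cnname$ (giving pairs already in $\#$) or those of the form $\setenum{e,e'}$ with $e\ \#\ e'$. Since the causality on events is the same in both nets, hereditary closure of the immediate conflicts yields precisely $\#$, so the resulting $\cnname^{\#}$ is an occurrence net.

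For the equality $\Conf{\cnname}{\cn} = \Conf{\cnname^{\#}}{\cn}$, I would argue both inclusions by induction on the length of firing sequences. The inclusion $\Conf{\cnname^{\#}}{\cn}\subseteq \Conf{\cnname}{\cn}$ is straightforward: any firing sequence of $\cnname^{\#}$ remains legal when projected to $\cnname$, since removing the contextual marking on places in $B^{\#}$ only drops side conditions on the firings. Conversely, given $X\in\Conf{\cnname}{\cn}$ obtained from a firing sequence $\sigma = \mathsf{m}\trans{t_1}m_1\cdots\trans{t_n}m_n$ in $\cnname$, the same sequence is a firing sequence of $\cnname^{\#}$ from $\mathsf{c}\cup B^{\#}$: at each step, the only additional tokens needed for $t_{i+1}$ to fire are those in the places $\setenum{t_{i+1},t'}\in B^{\#}$, and these are still marked because no $t'\ \#\ t_{i+1}$ has occurred in the prefix $t_1\cdots t_i$ (otherwise $X$ would fail $\CF{}$, contradicting that $X$ is a configuration). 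Hence the resulting state equals $X$, which yields the reverse inclusion.

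The only delicate point is ensuring the conflict relation does not become strictly larger than $\#$; this is handled by the observation above that every new immediate conflict already lies in $\#$, and that causality on events is unchanged, so no new hereditary conflicts can arise either.
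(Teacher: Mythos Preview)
Your argument is correct and follows the same route as the paper, which simply writes ``Along the same lines of Proposition~\ref{pr:sat-confl}.''  Your induction on firing sequences for the configuration equality is exactly the argument carried out in that proposition, and your explicit verification of the occurrence-net axioms (acyclicity, preset conditions, finiteness of $\hist{e}$, and that the induced conflict relation coincides with $\#$ and hence stays irreflexive) spells out what the paper leaves implicit in that one-line reference.
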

\begin{proof}
 Along the same lines of Proposition~\ref{pr:sat-confl}.
\end{proof}}

\subsection{Unravel nets and bundle event structure}
We recall the notion of \emph{unravel} net (\cite{Pinna:PN11,CaPi:PN14,PF:jlamp20} or \cite{CP:soap17}). 
Similarly to \cn, also unravel nets do not have inhibitor arcs
or read arcs, and the labeling is injective, hence also here $I, R$ and $\ell$ are omitted.

 We say that a net $N = \langle S, T, F, \marko{m}\rangle$ is \emph{conflict-free} if
 $\forall s\in S$ it holds that $\post{s}$ is at most a singleton, and it is acyclic if the reflexive 
 and transitive closure of $F$ is a partial order over $S\cup T$.
 As we are considering safe nets we will confuse multisets with sets.
 \begin{definition}\label{de:unravel-net}
  An \emph{unravel net} (\un) $\unname = \langle S, T, F, \marko{m}\rangle$ is a safe net such that 
  \begin{itemize}
    \item for each state $X\in\states{\unname}$ the net $\subnet{\unname}{X}{}$ is acyclic and 
          conflict-free, and
    \item for each $t\in T$ there exists a state $X\in\states{\unname}$ such that $t\in\flt{X}$.
  \end{itemize}
 \end{definition} 
 Thus the whole unravel net is not constrained to be either acyclic or 
 conflict-free, but each of its executions
 gives an acyclic and conflict-free net. 
 \un{s} can be considered as the easiest generalization of occurrence nets, and indeed
 the following proposition shows that \cn{s} are \un{s} as well, as each execution of
 an \cn\ is clearly acyclic (as the whole \cn\ is acyclic) and conflict-free.
 The \emph{firability} of each event in an occurrence net is a consequence of the structural 
 constraint posed on this kind of net.
 
 \begin{proposition}\label{prop:causal-is-unravel}
  Let $\cnname$ be an \cn. Then $\cnname$ is an \un\ as well.
 \end{proposition}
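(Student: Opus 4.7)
The plan is to verify the two defining conditions of unravel nets for $\cnname$, using the structural properties already built into the definition of an occurrence net.

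First, safeness comes for free since occurrence nets are safe by definition. For the acyclicity requirement, I would observe that $\cnname$ itself is acyclic (its flow relation has an irreflexive partial order as transitive closure), and acyclicity is trivially inherited by any subnet obtained by restricting to a subset of transitions and their adjacent places, as in Definition~\ref{de:subnet-tran}. So for every state $X \in \states{\cnname}$, the subnet $\subnet{\cnname}{X}{}$ is acyclic.

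The key step is conflict-freeness of $\subnet{\cnname}{X}{}$. Here I would argue contrapositively: suppose some condition $b$ in the subnet had two distinct events $e, e' \in X$ with $b \in \pre{e} \cap \pre{e'}$. Then by the definition of the conflict relation on occurrence nets, we would have $e\ \#_0\ e'$ and hence $e\ \#\ e'$. But any state $X$ arises from a firing sequence, and the safeness of $\cnname$ together with the fact that each condition has at most one producing event ensures that two events in direct conflict cannot both be fired in the same firing sequence (once $e$ consumes the token from $b$, the condition is empty and $e'$ is disabled, and no later event can restore it because $\pre{b}$ is at most a singleton). This contradicts $e, e' \in X$. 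Hence in $\subnet{\cnname}{X}{}$, every condition has at most one event in its postset, which is the conflict-freeness condition.

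Finally, for the second clause, I need that every event $e \in E$ occurs in some state. This is the standard ``firability'' property of occurrence nets: since $\hist{e}$ is finite, left-closed under $\leq_{\cnname}$, and conflict-free (events in $\hist{e}$ are all causal predecessors of $e$, so mutual conflict would contradict irreflexivity of $\#$ via hereditarity), one can fire the events of $\hist{e}$ in any linearization of $\leq_{\cnname}$ restricted to $\hist{e}$, producing a firing sequence whose associated state contains $e$. The only step that requires any care is the conflict-freeness of $\hist{e}$; everything else is a direct unpacking of definitions, so I do not expect a serious obstacle.
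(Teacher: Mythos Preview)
Your proposal is correct and follows essentially the same route as the paper's proof: verify acyclicity of subnets from global acyclicity, argue conflict-freeness of each $\subnet{\cnname}{X}{}$ from the fact that events sharing a preset condition are in $\#_0$ and hence cannot both occur in a state, and obtain firability of every event from the finiteness and conflict-freeness of $\hist{e}$. The paper's own proof is considerably terser (it simply asserts that configurations yield conflict-free subnets and that every event lies in some configuration because each condition is connected to an initially marked one), so your version supplies exactly the details the paper leaves implicit.
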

 \aggiunta{
 \begin{proof}
   A \cn{} $\cnname$ is by definition acyclic, and if one consider a configuration, the subnet
   generated by the transitions (events) in this configuration gives a conflict-free one. Finally
   each transition (event) belongs to a configuration as each place (condition) is connected to
   an initially marked one. 
 \end{proof}
 }
 The followings propositions will be helpful in associating a \bes\ to an \un.
 \begin{proposition}\label{pr:s-m-empty}
   Let $\unname = \langle S, T, F, \mathsf{m}\rangle$ be an \un{} and
   let $s \in S$ such that $s\in \flt{\mathsf{m}}$ then $\pre{s} = \emptyset$.
 \end{proposition}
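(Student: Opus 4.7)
I would argue by contradiction. Suppose $s\in \flt{\mathsf{m}}$ but $\pre{s}\neq\emptyset$, and fix some $t\in \pre{s}$. By the second defining clause of unravel nets, there is a state $X\in\states{\unname}$ with $t\in\flt{X}$. Let $\sigma\in\firseq{\unname}{\mathsf{m}}$ be a firing sequence reaching $X$ and consider the subnet $\unname' = \subnet{\unname}{\flt{X}}{}$. By the first defining clause of \un{}, $\unname'$ is acyclic and conflict-free. Note that $s$ belongs to $\unname'$ (it is adjacent to $t\in \flt{X}$) and remains initially marked there, since by Definition~\ref{de:subnet-tran} the initial marking of the subnet is the restriction of $\mathsf{m}$ to the places retained.

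I would then extract structural information from safety. Along $\sigma$, the transition $t$ fires at some step; since $\unname$ is safe and $t$ produces a token into $s$, the marking of $s$ must be $0$ immediately before this firing. As $s$ carries one token initially, some transition $t'\in \flt{X}$ with $s\in \pre{t'}$ must have fired earlier in $\sigma$ to consume this initial token. Conflict-freeness of $\unname'$ forces $t'$ to be the unique transition of $\flt{X}$ with $s$ in its preset. The arcs $(t,s),(s,t')\in F'$ give a flow path $t\to s\to t'$ in $\unname'$, placing $t$ strictly below $t'$ in the partial order induced by the transitive closure of $F'$.

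To reach a contradiction with acyclicity, the plan is to show that there must also be a flow path in $\unname'$ from $t'$ back to $t$, thereby closing a cycle. Since $\unname'$ has no inhibitor or read arcs, the firing of $t$ in the subnet requires every place of $\pre{t}$ to be marked at that moment; any place of $\pre{t}$ that is not initially marked must have been fed by some transition of $\flt{X}$ firing earlier, and iterating this reasoning yields chains of flow-edges in $\unname'$ running from initially enabled transitions down into $\pre{t}$. If $t'$ did not occur on any such chain leading to $t$, one could rearrange $\sigma$ so as to fire the causal ancestors of $t$ as one block without interleaving $t'$, reaching a marking in which $t$ is enabled while $s$ still holds its initial token; the subsequent firing of $t$ would then produce a marking with value $\geq 2$ on $s$, contradicting safety of $\unname$. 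Hence some such flow chain into $\pre{t}$ must pass through $t'$, and combining it with $t\to s\to t'$ yields a cycle in $F'$, contradicting acyclicity of $\unname'$.

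The main obstacle is the rescheduling argument at the end: one must justify that, inside the acyclic conflict-free subnet $\unname'$ (which has no contextual arcs), the absence of any flow chain from $t'$ to $t$ allows one to construct a firing sequence enabling $t$ without having fired $t'$. This amounts to a standard commutation/independence argument for safe acyclic nets, where transitions not lying on each other's flow ancestry can be reordered; once this is granted, safety of $\unname$ is immediately violated and the proof is complete.
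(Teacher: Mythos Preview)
Your argument is correct and follows the same contradiction strategy as the paper: pick $t\in\pre{s}$, fire it using the second \un{} axiom, and derive either a safeness violation or a cycle in the state-generated subnet. The paper's proof is terser---it splits on whether some transition in the prefix consumes from $s$, obtaining $m'(s)=2$ in one branch and simply asserting cyclicity of $\subnet{\unname}{\flt{X_\sigma}}{}$ in the other---whereas your rescheduling argument supplies exactly the justification for that cyclicity claim that the paper leaves implicit.
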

 \begin{proof}
   Assume $\pre{s} \neq \emptyset$, then there exists a transition $t \in T$ such that $t \in \pre{s}$. Now, 
   as $\unname$ is an unravel net, $t$ can be executed, hence there exists a \fs\ $\sigma$ such that 
   $\sigma\trans{t} m'$. We have two cases:
     \begin{enumerate}
       \item each transition $t'$ in $\sigma$ is such that 
             $s\notin \pre{t'}$, but then $m'(s) = 2$, contradicting the safeness of $\unname$, or
       \item there is a transition  $t'$ in $\sigma$ such that $s\in \pre{t'}$, but then 
             $\subnet{\unname}{\flt{X_\sigma}}{}$ is cyclic, 
             contradicting the hypothesis that $\unname$ is an unravel net.
             \qedhere
    \end{enumerate}  
  \end{proof}
 \begin{proposition}\label{pr:un-is-se}
   Let $\unname = \langle S, T, F, \mathsf{m}\rangle$ be an \un{}, then 
   $\unname$ is a single execution net.
 \end{proposition}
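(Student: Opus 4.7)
I would try a proof by contradiction: suppose that some firing sequence $\sigma\in\firseq{\unname}{\mathsf{m}}$ contains a transition $t\in T$ that fires at least twice, and let $X = X_\sigma$ be the corresponding state. By Definition~\ref{de:unravel-net} the subnet $N' = \subnet{\unname}{\flt{X}}{}$ is acyclic and conflict-free, and since $\sigma$ uses only transitions in $\flt{X}$, it can be viewed as a firing sequence of $N'$ (from the initial marking restricted to the places of $N'$) in which $t$ still fires twice. Pick any $s\in\pre{t}$, non-empty by the standing assumption $\pre{t}\neq\emptyset$; acyclicity of $N'$ rules out self-loops, hence $s\notin\post{t}$, and by safety the first firing of $t$ leaves $s$ empty. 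For $t$ to fire again, $s$ must carry a token once more.

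The first case to dispatch is $s\in\flt{\mathsf{m}}$: then Proposition~\ref{pr:s-m-empty} gives $\pre{s}=\emptyset$ in $\unname$ and hence also in $N'$, so $s$ cannot be refilled and we reach a contradiction at once, exactly as in the proof of Proposition~\ref{pr:s-m-empty}. Otherwise $s$ is initially unmarked, and some transition $u\in\pre{s}$ (in $N'$) must fire between the two occurrences of $t$ to deposit the required token; conflict-freeness of $N'$ moreover forces $\post{s}=\{t\}$ in $N'$, so $t$ is the unique consumer of $s$. To close the argument, I would select $t$ from the outset to be a transition firing more than once whose position in the partial order $\leq_{N'}$ induced by the acyclic flow of $N'$ is minimal; since $u<_{N'}t$, the minimality hypothesis makes $u$ fire at most once. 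A token balance at $s$ (initial contribution $0$, total consumption at least $2$ at the only consumer $t$) then forces the aggregate of firings of $\pre{s}$ to be at least $2$, and following the flow backwards I expect to reach a transition strictly below $t$ that itself fires twice, contradicting minimality, or else a place whose preset in $N'$ is empty, reducing to the already handled initially marked case.

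The hard part will be the last step: leveraging conflict-freeness and safety together so that the backward chase along the acyclic flow of $N'$ really is forced to terminate at a single overused transition, rather than distributing the load among several independent producers that each fire once. Because $N'$ is acyclic and (once we restrict to the transitions actually exercised by $\sigma$) finite along any downward chain from $t$, this descending argument must eventually bottom out; the terminal transition then has a preset lying entirely in $\flt{\mathsf{m}}$, at which point Proposition~\ref{pr:s-m-empty} again applies and delivers the contradiction, proving that $X=\flt{X}$ and hence that $\unname$ is a single execution net.
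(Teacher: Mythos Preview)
Your plan diverges from the paper's proof, which is a one-line contradiction: if $\sigma=\sigma'\trans{t}\sigma''\trans{t}\sigma'''$, then safety forces some transition in $\sigma''$ to refill a place $s\in\pre{t}$, and the paper simply asserts that this makes $\subnet{\unname}{\flt{X_\sigma}}{}$ cyclic. Your minimality-plus-token-balance route is more explicit and arguably more honest about where the work lies, but as written it does not close.

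The gap is exactly the one you flag as ``the hard part'', and your proposed resolution does not dispatch it. Once you know that the producers of $s$ in $N'$ collectively fire at least twice while each individually (by minimality of $t$) fires at most once, you conclude $|\pre{s}\cap\flt{X}|\geq 2$. Your descending chase then bottoms out at transitions whose presets lie in $\flt{\mathsf{m}}$; but there may be \emph{several} such terminal transitions, each firing once, and their outputs can legitimately merge at $s$ to supply the two tokens. Nothing in acyclicity, conflict-freeness of $N'$, or Proposition~\ref{pr:s-m-empty} alone prevents this: conflict-freeness bounds $|\post{s}|$, not $|\pre{s}|$. So the contradiction never materialises along this line.

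What is actually needed is the companion fact (stated later as Proposition~\ref{s-tr-confl1}, but logically independent) that two distinct transitions with a common output place are in semantic conflict, hence cannot both lie in $\flt{X}$. Equivalently, every place of $N'$ has at most one producer. With that in hand your argument becomes clean: the unique producer $u$ of $s$ in $N'$ must fire at least twice and satisfies $u<_{N'}t$, contradicting minimality. The proof of that companion fact, in turn, is where the cycle that the paper invokes really lives: if $v_1,v_2\in\flt{X}$ both produce to $s$, then by safety one of them can only fire after the (unique) consumer of $s$ has fired, and tracing the enabling of that second producer back through the flow of $N'$ forces a path from the consumer to it, closing a cycle through $s$. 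I would recommend isolating and proving this ``backward uniqueness'' lemma first; your minimality argument then finishes in one step.
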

 \begin{proof}
   Assume it is not, then there exists a state 
   $X\in \states{\unname}$ and a transition $t\in T$ such that $X(t)>1$. This implies that
   there is a \fs\ $\sigma$ such that $\sigma = \sigma'\trans{t}\sigma''\trans{t}\sigma'''$ but 
   then, being $\unname$ a safe net, it should be that $\subnet{\unname}{\flt{X_\sigma}}{}$ 
   is cyclic as some transitions in $\flt{X_{\sigma''}}$ should put a token in a place
   $s\in \pre{t}$, contradicting the assumption that $\unname$ is an unravel net.
  \end{proof}

 Given an \un{} $\unname = \langle S, T, F, \mathsf{m}\rangle$, we define
 a \emph{semantic} conflict relation, and we say that $t\ \#\ t'$ iff $\forall X\in \states{\unname}$ it
 holds that $\setenum{t, t'}\not\subseteq\flt{X}$. \aggiunta{Note that this semantic conflict relation
 has already been implicitly used in showing how to construct an equivalent conflict saturated net.}

 \corr{The following proposition says that two transitions
 with a common place in their postset are in confict.}
 {The following two propositions say that two 
 transitions in an \un{} are in conflict if they have a common place either in their presets or
 in their postsets. The common place in the preset suggests that their conflict is immediate, the
 common place implies that only one of the two
 transitions can belong to the \emph{history} (firing sequence) marking that place, as the proof itself
 shows.}
 \begin{proposition}\label{s-tr-confl1}
   Let $\unname = \langle S, T, F, \mathsf{m}\rangle$ be an \un, let
   $s\in S$ be a place and $t,t'\in T$ be two transitions of the \un. 
   If $\post{t}\cap\post{t'} \neq \emptyset$
   then $t\ \#\ t'$.
  \end{proposition}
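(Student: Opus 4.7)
The plan is to argue by contradiction. Suppose $\{t,t'\} \subseteq \flt{X}$ for some $X \in \states{\unname}$ and let $s \in \post{t}\cap\post{t'}$. By Proposition~\ref{pr:un-is-se} every transition of $\flt{X}$ fires exactly once in any firing sequence reaching $X$, so pick such a $\sigma$ and assume without loss of generality that $t$ fires at step $i$ and $t'$ at step $j > i$. The subnet $\subnet{\unname}{\flt{X}}{}$ is acyclic and conflict-free by hypothesis; acyclicity excludes a self-loop on $s$, so $s \notin \pre{t}\cup\pre{t'}$, and Proposition~\ref{pr:s-m-empty} gives $s \notin \mathsf{m}$. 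Hence after step $i$ the place $s$ holds exactly one token, and safety of $\unname$ forces that token to be consumed before step $j$. Conflict-freeness of the subnet forces the existence of a \emph{unique} consumer $t''$ of $s$ in $\flt{X}$, firing at some step $k \in (i,j)$.

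Next I would exploit safety to constrain when $t'$ can be enabled. At any marking reached in the interval $[i,k-1]$ the place $s$ is still marked, so firing $t'$ at that marking would push the count of $s$ to two, contradicting safety of $\unname$. Therefore $t'$ cannot be enabled at any such marking, and it must become enabled for the first time at some step $l^{*} \in [k,j-1]$. The transition $t^{*}$ fired at step $l^{*}$ must produce a place $s^{*} \in \pre{t'}$ that was absent just before step $l^{*}$. In the clean base case $l^{*} = k$ we have $t^{*} = t''$ and $s^{*} \in \post{t''}\cap\pre{t'}$; then the subnet contains the arcs $(t'',s^{*})$, $(s^{*},t')$, $(t',s)$, $(s,t'')$, producing the cycle $t' \to s \to t'' \to s^{*} \to t'$ and contradicting acyclicity of the subnet.

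The main obstacle is the remaining case $l^{*} > k$, where $t^{*} \neq t''$: one must convert the \emph{dynamic} ordering ``$t^{*}$ can only fire after $t''$'' into a \emph{structural} flow-path $t'' \prec t^{*}$ inside the subnet. I would handle this by considering alternative firing sequences of $\unname$: if $t^{*}$ could be rescheduled to fire before $t''$, the same safety argument would show that $t'$ would become enabled at a marking in which $s$ is still marked, yielding an unsafe reachable marking and contradicting safety of $\unname$. Iterating this observation along the chain of producers needed to make $\pre{t'}$ marked (a finite process, since $\flt{X}$ is finite) yields a flow-path from $t''$ to $t'$ inside the subnet, which together with $t' \prec t''$ (via $s$) closes a cycle and contradicts acyclicity of the subnet once more.
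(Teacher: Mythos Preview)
Your overall strategy coincides with the paper's: assume a firing sequence containing both $t$ and $t'$ with $t$ first, look at the marking just before $t'$ fires, and split into the case where $s$ is still marked (two tokens after $t'$ fires, contradicting safety) and the case where $s$ has been consumed by some $t''$. The paper handles the second case in a single sentence, simply asserting that $\subnet{\unname}{\flt{X_{\sigma}}}{}$ is then not acyclic; you attempt to exhibit the cycle explicitly.

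Your base case $l^{*}=k$ is clean: if $t''$ itself produces some $s^{*}\in\pre{t'}$ you get the cycle $t'\to s\to t''\to s^{*}\to t'$. The general case, however, has a gap. You argue that if $t^{*}$ could be rescheduled before $t''$ then $t'$ would become enabled while $s$ is still marked. But moving $t^{*}$ earlier only guarantees that the single place $s^{*}$ is available earlier; the \emph{other} places in $\pre{t'}$ may themselves be produced by transitions occurring after step $k$ in $\sigma$, so $t'$ need not be enabled at that earlier point. Your ``iterating'' remark gestures at the fix but does not actually build the required flow-path $t''\to\cdots\to t'$.

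One way to make this precise is to use conflict-freeness more directly: every place in $\pre{t'}$ has $t'$ as its unique consumer in the subnet, so the set $D=\setcomp{u\in\flt{X}}{u\leq_{F^{*}} t'}$ is downward closed and (one checks) the projection of $\sigma$ onto $D$ is itself a valid firing sequence. If the subnet were acyclic then $t'\,F\,s\,F\,t''$ forces $t''\notin D$; hence in the projected run $t'$ fires while $t''$ never does, $s$ is never emptied, and one obtains the safety violation directly. This replaces your rescheduling induction by a single projection argument.
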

  \begin{proof}
    Let $s\in \post{t}\cap\post{t'}$. Assume that $\neg(t\ \#\ t')$, then either there exists a \fs\ $\sigma$   
    such that $\sigma = \sigma'\trans{t}\sigma''\trans{t'}\sigma'''$ or there exists 
    $\sigma = \sigma'\trans{t'}\sigma''\trans{t}\sigma'''$. Assume 
    $\sigma = \sigma'\trans{t}\sigma''\trans{t'}\sigma'''$\corr{, then 
    $\sigma''$ is either a marking $\bar{m}$  or it is 
    $\bar{m} \trans{\bar{t}} \bar{\sigma}''$.
    Then $\lead{\sigma'\trans{t}\bar{m}}(s) = 1$ and 
    $\lead{\sigma'\trans{t}\sigma''\trans{t'}\hat{m}}\geq 1$. If 
    $\lead{\sigma'\trans{t}\sigma''\trans{t'}\hat{m}}\geq 2$ the net is not safe, and if 
    $\lead{\sigma'\trans{t}\sigma''\trans{t'}\hat{m}} = 1$}{.~Take 
    the marking $\lead{\sigma'\trans{t}\sigma''}$. If 
    $\lead{\sigma'\trans{t}\sigma''}(s) = 1$ then firing $t'$ the place $s$ get marked twice,
    and if $\lead{\sigma'\trans{t}\sigma''}(s) = 0$ then }     
    the net $\subnet{\unname}{\flt{X_{\sigma}}}{}$ 
    is not acyclic, in both cases contradicting the assumption that $\unname$ is an unravel net.
  \end{proof}

 \begin{proposition}\label{s-tr-confl2}
   Let $\unname =\langle S, T, F, \mathsf{m}\rangle$ be an \un, and let $t,t'\in T$ be two transition 
   of the \un. 
   Then $\pre{t}\cap\pre{t'}\neq \emptyset$ implies that $t\ \#\ t'$. 
 \end{proposition}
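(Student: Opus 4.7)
The plan is to argue by contradiction, mirroring the structure of Proposition~\ref{s-tr-confl1} but exploiting the conflict-freeness clause of Definition~\ref{de:unravel-net} rather than the acyclicity clause. Assume towards a contradiction that $\neg(t\ \#\ t')$. Unfolding the definition of the semantic conflict relation, there exists a state $X\in\states{\unname}$ with $\{t,t'\}\subseteq\flt{X}$. By definition of $\states{\unname}$, we may write $X = X_\sigma$ for some firing sequence $\sigma\in\firseq{\unname}{\mathsf{m}}$ in which both $t$ and $t'$ appear.

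Next I would turn to the subnet $\subnet{\unname}{\flt{X_\sigma}}{}$. Since $\unname$ is an \un, Definition~\ref{de:unravel-net} forces this subnet to be conflict-free, i.e.\ $\post{s}$ is at most a singleton in the subnet for every place $s$ retained in it. Pick any $s\in\pre{t}\cap\pre{t'}$ from the assumption $\pre{t}\cap\pre{t'}\neq\emptyset$. By Definition~\ref{de:subnet-tran}, $s$ is a place of $\subnet{\unname}{\flt{X_\sigma}}{}$ because it is connected to the transitions $t,t'\in\flt{X_\sigma}$, and the restriction of $F$ to this subnet retains both arcs $(s,t)$ and $(s,t')$. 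Hence $\{t,t'\}\subseteq \post{s}$ inside the subnet, so $\post{s}$ has at least two elements.

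This contradicts the conflict-freeness of $\subnet{\unname}{\flt{X_\sigma}}{}$, yielding the desired conclusion that $t\ \#\ t'$. The only delicate point is making sure that the shared place $s$ is actually kept in the generated subnet and that both outgoing arcs from $s$ survive the restriction; this follows immediately from the construction in Definition~\ref{de:subnet-tran}, which adds a place whenever it is incident to a retained transition and restricts $F$ accordingly. No cycle-chasing is needed here—unlike in Proposition~\ref{s-tr-confl1}, where the postset argument required reasoning about the order of firings and the placement of tokens, the preset version is settled directly by a structural observation about the generated subnet.
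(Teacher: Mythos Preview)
Your proof is correct and, in fact, cleaner than the paper's. The paper argues by contradiction along the lines of Proposition~\ref{s-tr-confl1}: assuming $\neg(t\ \#\ t')$, it picks $s\in\pre{t}\cap\pre{t'}$ and a firing sequence in which both $t$ and $t'$ occur, say $t$ before $t'$; since $t$ consumes the token in $s$, the place $s$ must be marked again before $t'$ can fire, and this re-marking is shown to violate the acyclicity of $\subnet{\unname}{\flt{X_\sigma}}{}$ (implicitly leaning on Proposition~\ref{pr:s-m-empty} and the reasoning of Proposition~\ref{s-tr-confl1}).

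You instead exploit the \emph{conflict-freeness} clause of Definition~\ref{de:unravel-net}: once both $t$ and $t'$ lie in the state $X_\sigma$, the shared preset place $s$ survives in $\subnet{\unname}{\flt{X_\sigma}}{}$ together with both outgoing arcs, so $\post{s}$ has two elements there, contradicting conflict-freeness outright. This is a purely structural observation and avoids any reasoning about token flow, re-marking, or firing order; it is the more economical route for the preset case, whereas the paper's choice has the virtue of reusing the machinery already set up for Proposition~\ref{s-tr-confl1}.
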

 \begin{proof}
   Assume that $\neg(t\ \#\ t')$ and let $s\in \pre{t}\cap\pre{t'}$. 
   Then exists a \fs\ $\sigma$ such that $\sigma'\trans{t}\sigma''\trans{t'}\sigma'''$ 
   or $\sigma'\trans{t'}\sigma''\trans{t}\sigma'''$. Using the same argument of 
   Proposition~\ref{s-tr-confl1} we have that $s$ would get marked twice violating the acyclicity of 
   $\subnet{\unname}{\flt{X_\sigma}}{}$.
\end{proof}

We recall now the notion of \emph{bundle event structures}~\cite{Langerak:1992:BES} (\bes).
In a \bes{} causality is represented by pairs $(X,e)$, the \emph{bundles},
where $X$ is a non empty set of events and $e\not\in X$ an event. 
The meaning of a bundle $(X,{e})$ is that if ${e}$ happens then one (and only
one) event of $X$ has to have happened before (events in $X$ are
pairwise conflicting).  An event ${e}$ can be caused by several
bundles, in that case, for each bundle an event in it should have happened.

\begin{definition}\label{de:bes}
A \emph{bundle event structure} is a triple $\besname = (E,\bundle, \#)$, where 
\begin{itemize}
  \item ${E}$ is a set of events, 
  \item $\#$ is an irreflexive and symmetric binary relation on ${E}$ 
         (the \emph{conflict} relation),
  \item $\bundle\ \subseteq\ \Powfin{{E}}\times {E}$ is the \emph{enabling}
        relation such that if $X \bundle {e}$ then for all ${e}_1, {e}_2 \in X$. 
        ${e}_1 \neq {e}_2$ implies ${e}_1\ \#\ {e}_2$, and
  \item for each ${e}\in E$ it holds that the set $\bigcup \setcomp{X}{X \bundle {e}}$ is finite.    
\end{itemize}        
\end{definition}

The final condition is an analogous of the finite cause requirement for prime event structures. Indeed
this requirement rules out situations like the following one. Consider an event $s$ such that 
$\forall i\in \nat$ there is a bundle $\setenum{e_i}\bundle e$. Then the event $e$ has infinite causes
which we want to rule out.

The configurations of a \bes\ are defined as follows.

\begin{definition}\label{de:bes-conf}
 Let $\besname = ({E}, \bundle, \#)$ be a \bes\ and $X\subseteq {E}$ be a set
 of events. Then $X$ is a \emph{configuration} of $\besname$ iff 
 \begin{enumerate}
   \item it is \emph{conflict free}, \emph{i.e.} $\forall {e}, {e}'\in X$.
         ${e} \neq {e}'\ \Rightarrow\ \neg ({e}\ \# {e}')$, and
   \item there exists a linearization $\setenum{{e}_1, \dots, {e}_n, \dots}$ 
         of the events in $X$ such that $\forall i\in\nat$ and for all bundles
         $X_{j_i}\bundle {e}_i$ it holds that
         $X_{j_i}\cap \setenum{{e}_1, \dots, {e}_{i-1}} \neq \emptyset$.              
 \end{enumerate}
 The set of configurations of a \bes\ $\besname$ is denoted with $\Conf{\besname}{\bes}$.
\end{definition}
The requirements are the usual ones: it must be conflict free and each event must have 
all of its causes. The causes of an event in a \bes, as said before, have to be chosen using
all the bundles involving the event.
 
Clearly \bes\ are a conservative extension of \pes.
Indeed each \pes\ $P = ({E}, \leq, \#)$ can be seen as the \bes{} stipulating that
the bundles are $\bundle = 
\setcomp{(\setenum{{e'}},{e})}{{e'}<{e}}$. 
\bes\ are more expressive than \pes, as they are able to model
\emph{or}-causality.  In fact we may have the following
\bes\ ${a}\ \#\ {b}$ (symmetric pair omitted) and
$\setenum{{a}, {b}}\bundle {c}$ stipulating that the same
event may have two different and alternative pasts, namely one
containing ${a}$ and the other ${b}$.

Like \pes\ and  \cn, also \bes\ and \un{} are closely related.
The intuition is rather simple: to each place in the preset of a
transition $t$ we associate a bundle $X$ for the correspondent event
$t$ in the event structure, and the bundle is formed by the
transitions putting a token in that place.

\begin{proposition}\label{prop:un-bes}
 Let $\unname = \langle S, T, F, \marko{m} \rangle$ be an \un, then 
 $\nettoes{\un}{\bes}(\unname) = (T, \bundle, \#)$ is a \bes, where 
 \begin{itemize}
  \item  
    $t\ \#\ t'$ in $\nettoes{\un}{\bes}(\unname)$ iff  
        $t\ \#\ t'$ in $\unname$, and
  \item
    for each $t\in T$, for each $s\in \pre{t}$, we have 
    $\pre{s} \bundle t$.
 \end{itemize}
 Furthermore $\Conf{\unname}{\un} = \Conf{\nettoes{\un}{\bes}(\unname)}{\bes}$.
\end{proposition}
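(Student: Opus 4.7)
The plan is to verify first the well-formedness of $\nettoes{\un}{\bes}(\unname)$ as a bundle event structure, and then to establish the two inclusions between the configuration sets.

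For the well-formedness, I would check the three conditions of Definition~\ref{de:bes}. Irreflexivity and symmetry of $\#$ are inherited directly from the semantic conflict relation on the unravel net. The pairwise-conflict condition on bundles, namely that $X \bundle t$ implies the elements of $X$ are pairwise conflicting, is the immediate content of Proposition~\ref{s-tr-confl1}: a bundle is the preset $\pre{s}$ of some $s \in \pre{t}$, and any two distinct transitions in $\pre{s}$ share $s$ in their postset, hence are in semantic conflict. The finite-cause condition follows from safeness: $\bigcup \{X \mid X \bundle t\} = \bigcup_{s\in\pre{t}}\pre{s}$, and since $\pre{t}$ is finite (the preset of a transition is a finite multiset in the nets we consider) and each $\pre{s}$ is finite in a safe net, the union is finite.

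Next, for the inclusion $\Conf{\unname}{\un} \subseteq \Conf{\nettoes{\un}{\bes}(\unname)}{\bes}$, I would take a configuration $X \in \Conf{\unname}{\un}$, which (using that $\unname$ is a single execution net by Proposition~\ref{pr:un-is-se}) arises as $\flt{X_{\sigma}}$ for some firing sequence $\sigma$. Conflict-freeness of $X$ is immediate from the definition of semantic conflict. For the linearization condition, I would take exactly the order induced by $\sigma$. Given any $t_i$ in this order and any bundle $X_j \bundle t_i$ arising from a place $s \in \pre{t_i}$, the fact that $t_i$ is enabled at $\lead{\sigma(i-1)}$ forces $s$ to be marked; by Proposition~\ref{pr:s-m-empty}, if $s \notin \mathsf{m}$ then $s$ acquired its token from some $t_k$ with $k < i$ and $t_k \in \pre{s} = X_j$, giving the required intersection. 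If $s \in \mathsf{m}$ then $\pre{s} = \emptyset$ and there is no bundle from $s$, so the requirement is vacuous.

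For the reverse inclusion, given $X \in \Conf{\nettoes{\un}{\bes}(\unname)}{\bes}$ with linearization $t_1, t_2, \dots$ satisfying the bundle condition, I would build a firing sequence step by step and show by induction on $i$ that $t_i$ is enabled at the marking reached after $t_1, \dots, t_{i-1}$. Each place $s \in \pre{t_i}$ is either initially marked (if $\pre{s} = \emptyset$) or has $\pre{s}\bundle t_i$ as a bundle, and by the linearization condition some $t_k \in \pre{s}$ with $k < i$ has already fired, so $s$ is currently marked (using safeness and conflict-freeness of $X$ together with Proposition~\ref{s-tr-confl1} to argue $s$ has not been consumed and re-emptied by a conflicting transition). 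The main obstacle here is the latter bookkeeping: one must verify that the token put on $s$ by $t_k$ is still present when $t_i$ fires, i.e.\ that no other $t_j$ with $k<j<i$ consumed it. If some such $t_j$ did, then $t_j$ and $t_i$ would share $s$ in their preset, and by Proposition~\ref{s-tr-confl2} they would be in semantic conflict, contradicting the conflict-freeness of the linearization. This is the key technical step; the rest is routine.
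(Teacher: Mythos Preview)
Your argument is correct and matches the paper's approach: irreflexivity and symmetry of $\#$ are inherited from the semantic conflict on $\unname$, the pairwise-conflict condition on bundles is exactly Proposition~\ref{s-tr-confl1}, and the firing order of $\sigma$ serves as the linearization witnessing the \bes\ configuration. You in fact go further than the paper, which only spells out the forward inclusion $\Conf{\unname}{\un} \subseteq \Conf{\nettoes{\un}{\bes}(\unname)}{\bes}$ and leaves the converse unwritten; your treatment of the reverse inclusion---building the firing sequence along the given linearization and invoking Proposition~\ref{s-tr-confl2} to rule out that a needed token in $\pre{t_i}$ was already consumed by some earlier $t_j$---is precisely the argument the paper omits. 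One minor quibble: the finiteness of $\pre{s}$ (needed both for bundles to lie in $\Powfin{E}$ and for the finite-cause condition) is not a consequence of safeness per se but an implicit structural assumption on the nets considered; the paper does not address this either.
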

\begin{proof}
We show that $\nettoes{\un}{\bes}(\unname)$ is indeed an \bes.
The conflict relation in $\nettoes{\un}{\bes}(\unname)$ is antisymmetric and irreflexive
because it is so in $\unname$. Since we are are dealing with unravel nets, all
the transitions putting tokens in the same place are in conflict, so the
bundles respect Definition~\ref{de:bes}.
We show that $\Conf{\unname}{\un} = \Conf{\nettoes{\un}{\bes}(\unname)}{\bes}$. Consider
$X \in \Conf{\unname}{\un}$, then there is a \fs\ $\sigma$ such that
$X_\sigma = X$ and for each $i\in\mathit{dom}(\trace{\sigma})$ we have that
$\sigma(i-1)(s) = 1$ for each $s\in \pre{t_i}$. 
It is easy to see
that $\toset{\trace{\sigma}}$ is a configuration of $\nettoes{\un}{\bes}(\unname)$
as for each $i\in \mathit{dom}(\trace{\sigma})$ it holds that
$\toset{\trace{\sigma}(i-1)}\cap \pre{s}\neq\emptyset$ for each
$\pre{s}\bundle t_i$. Conflict-freeness is trivial.
\end{proof}

%\begin{example}
%Consider the net in Fig.~\ref{fig:simple-un} and its associated event
%structure \ref{fig:bes-2}. The bundles are $\setenum{{c}, {d}}\bundle {e}$
%and $\setenum{{a}, {b}}\bundle {e}$, and are obtained syntactically, whereas
%the conflicts ${c}\ \# {d}$, ${a}\ \# {b}$, ${a}\ \# {c}$ and 
%${b}\ \# {c}$ are
%obtained using all the possible executions, though in this example they are syntactically 
%deducible as they share a condition in their preset.
%\end{example}

\begin{proposition}\label{prop:bes-un}
 Let $\besname = (E, \bundle, \#)$ be a \bes\ such that
 $\forall {e}\in {E}$ $\exists X\in \Conf{\besname}{\bes}.\ e\in X$.
 Then $\estonet{\bes}{\un}(\besname) = \langle S, E, F, \marko{m} \rangle$ 
 where 
 \begin{itemize}
  \item $S = \setcomp{({e},\ast)}{{e} \in E} \cup \setcomp{\setenum{e,e'}}{{e}\ \#\ {e}'} \cup 
            \setcomp{(Y,e)}{Y \bundle {e}}\cup \setcomp{(\ast,e)}{{e} \in E}$,
  \item \(F = \setcomp{(s,e)}{s = (e,\ast) \lor s = (Y,e) \lor s = \setenum{e,e'}} \cup \setcomp{(e,s)}{s = (\ast,e) \lor
                      (s = (Y,{e}') \land e\in Y)}\), and
  \item $\marko{m} = \setcomp{({e},\ast)}{{e} \in E}\cup \setcomp{\setenum{e,e'}}{{e}\ \#\ {e}'}$
 \end{itemize}  
is an unravel net and $\Conf{\besname}{\bes} = \Conf{\estonet{\bes}{\un}(\besname)}{\un}$.
\end{proposition}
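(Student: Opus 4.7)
The plan is to verify separately safety and the unravel condition for $\estonet{\bes}{\un}(\besname)$, and then the coincidence of configurations. I would start by cataloguing the four types of places: the personal places $(e,\ast)$, initially marked, consumed only by $e$ and never produced; the conflict places $\setenum{e,e'}$, initially marked, consumed by $e$ and $e'$ and never produced; the bundle places $(Y,e)$, produced by each $e'\in Y$, consumed by $e$, and initially unmarked; and the post places $(\ast,e)$, produced by $e$, never consumed, and initially unmarked. Safety follows by observing that the first, second and fourth place types trivially carry at most one token; for $(Y,e)$, the events in $Y$ are pairwise conflicting in $\besname$, so they share conflict places in the net, and hence at most one of them can appear in any firing sequence, bounding $(Y,e)$ by one token. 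As a byproduct each event fires at most once, since its personal place is initially marked and never replenished.

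Next I would check that for every state $X$ of $\estonet{\bes}{\un}(\besname)$ the subnet $\subnet{\estonet{\bes}{\un}(\besname)}{\flt{X}}{}$ is both conflict-free and acyclic. Conflict-freeness is immediate: places of type $(e,\ast)$, $(Y,e)$ and $(\ast,e)$ have singleton postsets, and a conflict place $\setenum{e,e'}$ has postset $\setenum{e,e'}$, but at most one of $e,e'$ belongs to $X$ since the place cannot be consumed twice. For acyclicity, I would observe that only bundle places $(Y,e)$ can lie on a cycle, since the remaining place types have either empty preset or empty postset. A hypothetical cycle would look like $e_1 \to (Y_1,e_2) \to e_2 \to \cdots \to e_n \to (Y_n,e_1) \to e_1$ with $e_i \in Y_i$ and $Y_i \bundle e_{i+1}$, indices modulo $n$. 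Since bundle members are pairwise in conflict and $X$ is conflict-free, $e_i$ is the unique element of $Y_i \cap X$; therefore any firing sequence producing $X$ must fire $e_i$ strictly before $e_{i+1}$, giving a contradictory cyclic strict order. Firability of every transition would then follow from the hypothesis that every event of $\besname$ belongs to some configuration, combined with the forward direction of the equality proved next.

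For the equality $\Conf{\besname}{\bes} = \Conf{\estonet{\bes}{\un}(\besname)}{\un}$ I would argue both inclusions. Given a state $X = X_\sigma$ arising from a firing sequence $\sigma = m_0 \trans{e_1} \cdots \trans{e_n} m_n$, the conflict places of the net force the events in $X$ to be pairwise non-conflicting, and the marking of each bundle place $(Y,e_i)$ at the step where $e_i$ fires requires some $e_j \in Y$ with $j<i$ to lie in $X$; hence $e_1,\ldots,e_n$ is a legitimate \bes{} linearization witnessing $X \in \Conf{\besname}{\bes}$. Conversely, given $X \in \Conf{\besname}{\bes}$ with linearization $e_1,\ldots,e_n$, I would fire the events in that order and check enabledness of $e_i$ at each step: the personal place $(e_i,\ast)$ is still marked, every conflict place $\setenum{e_i,e'}$ still carries a token because $e' \notin X$ by conflict-freeness of $X$, and each bundle place $(Y,e_i)$ has been marked by the previously fired element of $Y$ provided by the linearization condition.

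The main obstacle is the acyclicity argument: one must propagate the conflict-freeness of $X$ along a hypothetical cycle in order to force a strict ordering among its events and then derive a contradiction from the cyclicity. Everything else reduces to a case analysis on the four place types, combined with a straightforward simulation between \bes{} linearizations and firing sequences of the constructed net.
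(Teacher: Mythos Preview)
Your proposal is correct and follows the same overall structure as the paper: a case analysis on the four place types for safeness, then verification of the unravel condition, then the two inclusions for the configuration equality.

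The one genuine point of divergence is the acyclicity argument. The paper proceeds by induction on the length of the firing sequence, showing at each step that the newly fired event $e_n$ cannot put a token into a place already belonging to the subnet built so far (the only candidate being a bundle place $(Y,e_i)$, which would force a conflicting event into the trace). You instead argue directly: any cycle in the restricted subnet must thread through bundle places, and chasing it yields a cyclic chain of strict precedences among events of the state, which is impossible. Your route is more transparent and separates conflict-freeness from acyclicity cleanly; the paper's inductive argument bundles the two together and is somewhat terser (and arguably less precise about which conflict is being invoked). Both reach the same conclusion. Your explicit remark that firability of every transition follows from the hypothesis on $\besname$ together with the forward inclusion $\Conf{\besname}{\bes}\subseteq\Conf{\estonet{\bes}{\un}(\besname)}{\un}$ is also a point the paper leaves implicit.
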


\begin{proof}
The safeness of $\estonet{\bes}{\un}(\besname)$ results from the fact that the
places $({e},i)$ allows only one execution of each transitions
${e}$ as they have no incoming arc, and the places with more that
one incoming arc, the $S_B$ ones, cannot be marked by more than one
transition because of the conflicts in the bundle set, finally the
places $\setenum{e,e'}$ does not allow the execution of
conflicting transitions in the same run.

Let $\sigma$ be a \fs\ of $\estonet{\bes}{\un}(\besname)$ and be 
$\trace{\sigma} = e_1e_2e_3\cdots e_n\cdots$ 
the associated trace. 
By induction on the indexes in $\mathit{dom}(\trace{\sigma})$ we show that 
$\subnet{\estonet{\bes}{\un}(\besname)}{X}{}$ where $X = X_{\sigma(i)}$ is a
conflict-free causal net.
\begin{itemize}
 \item The empty trace gives a trivial net (a net without places and transitions) which is
       vacuously a conflict-free causal net,
 \item consider the trace $\trace{\sigma(n)} {e}_1 \cdots {e}_{n-1}{e}_n$ and let 
       $\toset{\trace{\sigma(n)}}$ be the state $\setenum{{e}_1,\dots,{e}_{n-1}{e}_n}$ associated to 
       $\trace{\sigma(n)}$. As $\trace{\sigma(n-1)} = {e}_1 \cdots {e}_{n-1}$ is a trace as well, 
       by inductive hypothesis $\subnet{\estonet{\bes}{\un}(\besname)}{\toset{\trace{\sigma(n-1)}}}{} = 
       \langle S_{n-1}, E_{n-1}, F_{n-1}, \marko{m}_{n-1} \rangle$
       is a conflict-free causal net where 
       $S_{n-1} = \pre{\toset{\trace{\sigma(n-1)}}} \cup \post{\toset{\trace{\sigma(n-1)}}}$,
       $E_{n-1} = \toset{\trace{\sigma(n-1)}}$, $F_{n-1}$ is the restriction of $F$ to the
       transitions in $E_{n-1}$ and the places in $S_{n-1}$, and 
       $\marko{m}_{n-1}$ is the restriction of $\marko{m}$ to the places in $S_{n-1}$.
      Consider now that subnet $\subnet{\estonet{\bes}{\un}(\besname)}{\toset{\trace{\sigma(n)}}}{}$ 
      and assume   
      that a place $s$ in $\post{{e}_n}$ already belongs to $S_{n-1}$. Then $s$ must be 
      a place  of the kind $(Y,{e}_i)$, for some ${e}_i \in \toset{\trace{\sigma(n-1)}}$, 
      but then there must be a ${e}_j$ in $\toset{\trace{\sigma(n-1)}}$ such
      that ${e}_j\in Y$, which this contradicts the fact that $\toset{\trace{\sigma(n-1)}}$ is
      conflict-free as ${e}_j\ \#\ {e}_i$. So ${e}_n$ cannot mark places already
      marked in the past, and the subnet 
      $\subnet{\estonet{\bes}{\un}(\besname)}{\toset{\trace{\sigma(n)}}}{}$ 
      is a conflict-free causal net. 
\end{itemize}
The proof that $\Conf{\besname}{\bes} = \Conf{\estonet{\bes}{\un}(\besname)}{\un}$ goes along the
same reasoning of the previous proposition.
\end{proof}

\section{Context-Dependent Event Structure}\label{sec:es-mie}
We recall the notion of \emph{Context-Dependent} event structure introduced in \cite{Pi:coord19} and
further studied in \cite{Pi:lmcs}.
The idea is that the happening of an event depends on a set of modifiers (the \emph{context}) 
and on a set of \emph{real} dependencies, which are activated by the set of modifiers. 
\aggiunta{Context-Dependent event structures are characterized by a novel \emph{context-dependency}
relation which subsumes all the others dependency relations studied in literature.
In presenting them we follow closely the approach taken in \cite{Pi:coord19} and \cite{Pi:lmcs}, thus 
we present them in their full generality and then show that the \emph{context-dependency}
relation can have a suitable and simpler form, yielding the notion of \emph{elementary} \Ges.}

\begin{definition}\label{de:mia-es}
 A \emph{context-dependent event structure} ({\Ges}) is a triple 
 $\cdesname = (E, \#, \gesrel)$ where 
 \begin{itemize}
  \item 
  $E$ is a set of \emph{events},
  \item 
  $\#\ \subseteq E\times E$ is an irreflexive and symmetric relation, called 
        \emph{conflict relation}, and
  \item 
  $\gesrel\ \subseteq \Pow{\pmv{A}}\times E$,  
        where $\pmv{A} \subseteq \Powfin{E}\times\Powfin{E}$,   
        is a relation, called
        the \emph{context-dependency} relation (\cd-relation), which is such that for each 
        $\pmv{Z}\gesrel e$ it holds that 
         \begin{itemize}
          \item 
                $\pmv{Z}\neq \emptyset$ and $|\pmv{Z}|$ is finite,
          \item 
                for each $(X,Y)\in \pmv{Z}$ it holds that $\CF{X}$ and
                $\CF{Y}$, 
                and
          \item for each $(X,Y), (X',Y')\in \pmv{Z}$ if $X = X'$ then $Y = Y'$.
         \end{itemize}
 \end{itemize}
\end{definition}
The \cd-relation models, for each event, which are the possible contexts in which the event may happen 
(the first component of each pair) and for each context which are the events that have to be occurred 
(the second component).
\aggiunta{The idea conveyed by this relation is originated by the possibility that dependencies of 
an event may vary, \eg{} growing or shrinking depending on the happening of suitable modifiers, like in
the event structures presented in \cite{AKPN:DC,AKPN:lmcs18}, or like what happen in \cite{GP:ESRC} where
conflicts can be resolved. The \cd-relation is capable of modeling this dynamicity.
We will later see that this dynamicity can be  represented in suitable nets with contextual arcs.~}
The requirement posed on the $\gesrel$ are rather few. The first one is that
for $\pmv{Z}\gesrel e$ the set $Z$ is finite and not empty. The finiteness requirement
mimics the usual requirement that an event has a finite set of causes, whereas 
the non emptiness is justified by the fact that an event need a context to happen,
and though the context may be the empty one, it should in any case be present.
The second is again rather obvious, contexts must be conflict-free subset of events.
For the last one the intuition is that \corr{once determined a context, the other component should be
unique.}{for each context, the other component should be unique.} 
\aggiunta{The few constrains posed on the $\gesrel$ relation have the drawback that this relation
can be less \emph{informative}.}

We now recall the notion of enabling of an event. 
We have to determine, for each $\pmv{Z} \gesrel e$, which of the contexts
$X_i$ should be considered. To do so we define the \emph{context} associated to each 
entry of the \cd-relation.
Given $\pmv{Z} \gesrel e$, where $\pmv{Z} = \setenum{(X_1,Y_1), \dots, (X_n,Y_n)}$, 
with $\ctx(\pmv{Z})$ we
denote the set of events $\bigcup_{i=1}^{|Z|} X_i$, and this is the one regarding 
$\pmv{Z} \gesrel e$.

\begin{definition}\label{de:mia-es-enabling}
 Let $\cdesname = (E, \#, \gesrel)$ be a \Ges\ and $C\subseteq E$ be a subset of events. 
 Then
 the event $e\not\in C$ is \emph{enabled} at $C$, denoted with $C\enab{e}$, 
 if for each $\pmv{Z} \gesrel e$, with 
 $\pmv{Z} = \setenum{(X_1,Y_1), \dots, (X_n,Y_n)}$, there is a pair $(X_i,Y_i)\in \pmv{Z}$ such that
 $\ctx(\pmv{Z})\cap C = X_i$ and $Y_i\subseteq C$.
\end{definition}
Observe that requiring the non emptiness of the set $\pmv{Z}$ in $\pmv{Z} \gesrel e$ guarantees that 
an event $e$ may be enabled at some subset of events.

\begin{definition}
\label{de:mia-es-event-traces}
Let $\cdesname = (E, \#, \gesrel)$ be a \Ges. 
Let $C$ be a subset of $E$. 
We say that $C$ is a \emph{configuration} of the \Ges\ $\cdesname$ 
iff there exists a 
sequence of distinct events $\rho = e_1e_2\cdots$ over $E$ 
such that 
\begin{itemize}
  \item 
  $\toset{\rho} = C$,
  \item 
  $\toset{\rho}$ is conflict-free, and 
  \item 
  $\forall 1 \leq i \leq \len{\rho}.\ \toset{\rho}_{i-1} \enab{e_i}$.
\end{itemize}
With $\Conf{\cdesname}{\Ges}$ we denote the set of configurations of the \Ges\ $\cdesname$.
\end{definition}
We illustrate this kind of event structure with some examples, mainly taken from \cite{Pi:coord19} and
\cite{Pi:lmcs}.

\begin{example}\label{ex:new2res}
 Consider three events $\pmv{a}, \pmv{b}$ and $\pmv{c}$. All the events are 
 singularly enabled but $\pmv{a}$ and $\pmv{b}$ are in conflict unless $\pmv{c}$ has not happened
 (we will see later that this are called \emph{resolvable} conflicts).
 Hence for the event $\pmv{a}$ we stipulate 
 $$\setenum{(\emptyset,\emptyset),\corrmath{(\setenum{\pmv{c}},\emptyset),}{}(\setenum{\pmv{b}},\setenum{\pmv{c}})}
 \gesrel \pmv{a}$$
 that should be interpreted as follows: if the context is $\emptyset$ \corr{or $\setenum{\pmv{c}}$}{} then
 $\pmv{a}$ is enabled without any further condition (the $Y$ are the empty set), if the context
 is $\setenum{\pmv{b}}$ then also $\setenum{\pmv{c}}$ should be present.
 The set $\ctx(\setenum{(\emptyset,\emptyset),\corrmath{(\setenum{\pmv{c}},\emptyset),}{}
 (\setenum{\pmv{b}},\setenum{\pmv{c}})})$ is $\setenum{\pmv{b}}$. 
 Similarly, for the event
 $\pmv{b}$ we stipulate 
 $$\setenum{(\emptyset,\emptyset),\corrmath{(\setenum{\pmv{c}},\emptyset),}{}(\setenum{\pmv{a}},\setenum{\pmv{c}})}\gesrel \pmv{b}$$ 
 which is justified as above and finally for the event $\pmv{c}$ 
 we stipulate 
 $$\setenum{(\emptyset,\emptyset)\corrmath{,(\setenum{\pmv{a}},\emptyset),(\setenum{\pmv{b}},\emptyset)}{}}
 \gesrel \pmv{c}$$
 namely any context allows to add the event.
 \aggiunta{The empty set $\emptyset$ is a configuration, and so are the singletons 
 $\setenum{\pmv{a}}$, $\setenum{\pmv{b}}$ and 
 $\setenum{\pmv{c}}$ which are reached from $\emptyset$ as $\pmv{a}, \pmv{b}$ and $\pmv{c}$ are 
 enabled at this configuration, $\setenum{\pmv{a}, \pmv{c}}$ is a configuration and can be reached
 from $\setenum{\pmv{c}}$ adding $\pmv{a}$ or from $\setenum{\pmv{a}}$ adding $\pmv{c}$, and
 analogously $\setenum{\pmv{b}, \pmv{c}}$ is a configuration, and finally 
 $\setenum{\pmv{a}, \pmv{b}, \pmv{c}}$ is a configuration that can be reached from 
 $\setenum{\pmv{a}, \pmv{c}}$ adding $\pmv{b}$ or from $\setenum{\pmv{b}, \pmv{c}}$ adding $\pmv{a}$.
 $\setenum{\pmv{a}, \pmv{b}}$ it is not a configuration as it cannot be reached from 
 $\setenum{\pmv{a}}$ adding $\pmv{b}$ and also from $\setenum{\pmv{b}}$ adding $\pmv{c}$. In fact,
 to add $\pmv{a}$ to $\setenum{\pmv{b}}$ one has the context $\setenum{\pmv{b}}$ but
 $\setenum{\pmv{c}}$ is not contained in $\setenum{\pmv{b}}$.}
\end{example}

\begin{example}\label{ex:new2dim}
 Consider three events $\pmv{a}, \pmv{b}$ and $\pmv{c}$, and assume that $\pmv{c}$ depends on 
 $\pmv{a}$ unless the event $\pmv{b}$ has occurred, and in this case this dependency is removed. 
 Thus there is a classic causality 
 between $\pmv{a}$ and $\pmv{c}$, but it can dropped if $\pmv{b}$ occurs. Clearly $\pmv{a}$ and
 $\pmv{b}$ are always enabled. The \cd-relation is 
 $\setenum{(\emptyset,\emptyset)}\gesrel \pmv{a}$, 
 $\setenum{(\emptyset,\emptyset)}\gesrel \pmv{b}$ and
 $\setenum{(\emptyset,\setenum{\pmv{a}}),(\setenum{\pmv{b}},\emptyset)}\gesrel \pmv{c}$.
 \aggiunta{In this case the configurations are $\emptyset$, $\setenum{\pmv{a}}$, $\setenum{\pmv{b}}$,
 $\setenum{\pmv{a},\pmv{c}}$ (reachable from $\setenum{\pmv{a}}$), $\setenum{\pmv{b},\pmv{c}}$ (reachable
 from $\setenum{\pmv{b}}$), $\setenum{\pmv{a},\pmv{b}}$ and $\setenum{\pmv{a},\pmv{b},\pmv{c}}$ (reachable 
 from $\setenum{\pmv{a},\pmv{b}}$, $\setenum{\pmv{b},\pmv{c}}$ and $\setenum{\pmv{a},\pmv{c}}$).}
\end{example}

\begin{example}\label{ex:new2aug}
 Consider three events $\pmv{a}, \pmv{b}$ and $\pmv{c}$, and assume that $\pmv{c}$ depends on 
 $\pmv{a}$ just when the event $\pmv{b}$ has occurred, and in this case this dependency is added,
 otherwise it may happen without.  
 Thus classic causality relation
 between $\pmv{a}$ and $\pmv{c}$ is added if $\pmv{b}$ occurs. Again $\pmv{a}$ and
 $\pmv{b}$ are always enabled. The \cd-relation is 
 $\setenum{(\emptyset,\emptyset)}\gesrel \pmv{a}$, 
 $\setenum{(\emptyset,\emptyset)}\gesrel \pmv{b}$ and
 $\setenum{(\emptyset,\emptyset),(\setenum{\pmv{b}},\setenum{\pmv{a}})}\gesrel \pmv{c}$.
 \aggiunta{The configurations of this \Ges{} are $\emptyset$, $\setenum{\pmv{a}}$, $\setenum{\pmv{b}}$,
 $\setenum{\pmv{c}}$, $\setenum{\pmv{a},\pmv{c}}$ (reachable from $\setenum{\pmv{a}}$), 
 $\setenum{\pmv{a},\pmv{b}}$ and $\setenum{\pmv{a},\pmv{b},\pmv{c}}$, which is for instance reachable 
 from $\setenum{\pmv{a},\pmv{b}}$ adding $\pmv{c}$. The configuration $\setenum{\pmv{b},\pmv{c}}$ is
 reachable from $\setenum{\pmv{c}}$ adding $\pmv{b}$ but not from the configuration $\setenum{\pmv{b}}$.}
\end{example}
These examples should clarify how the \cd-relation is used and also that each event
may be \emph{implemented} by a different pair $(X,Y)$ of modifiers and dependencies.

In \cite{Pi:coord19} and \cite{Pi:lmcs} we have shown that many event structures can be seen
as a \Ges, and this is obtained taking the configurations of an event structure and from
these synthesizing the conflict and the \gesrel\ relations. 
%

%\centerline{\Large parte nuova}

To characterize how the configurations of a \Ges\ are organized
we recall the notion of \emph{event automaton} \cite{PP:NEPC}.

\begin{definition}\label{de:event-automata}
 Let $E$ be a set of \emph{events}. An \emph{event automaton} over $E$ (\ea) 
 is the tuple $\eaname = \tuple{E, \stati, \earel, \instate}$ such that
 \begin{itemize}
   \item 
   $\stati \subseteq \Pow{E}$, and
   \item 
   $\earel \subseteq \stati\times\stati$ is such that $s\earel s'$ implies that
         $s \subset s'$.
 \end{itemize}
 $\instate\in \stati$ is the initial state.
\end{definition}
An event automaton is just a set of subsets of events and a reachability relation $\earel$ with the
minimal requirements that if two states $s, s'$ are related by the $\earel$ relation, namely
$s\earel s'$, then $s'$ is \emph{reached} by $s$ adding at least one event.

In \cite{Pi:coord19,Pi:lmcs} we have proven that, given a \Ges\ 
$\cdesname = (E, \#, \gesrel)$, the quadruple $\tuple{E, \stati, \earel, \instate}$,
where $\stati = \Conf{\cdesname}{\Ges}$, 
$\earel \subseteq \Conf{\cdesname}{\Ges}\times \Conf{\cdesname}{\Ges}$
is such that $C \earel C'$ iff $C'\setminus C = \setenum{e}$ and $\instate = \emptyset$, is an
\ea. The event automaton associated to $\cdesname$ is denoted with $\cdestoea(\cdesname)$.

With the aid of \ea\ we can establish when two \Ges\ are \emph{equivalent}.

\begin{definition}\label{de:ges-equiv}
 Let $\cdesname = (E, \#, \gesrel)$ and $\cdesname' = (E', \#', \gesrel')$ be two \Ges. 
 Then $\cdesname$ is \emph{equivalent} to $\cdesname'$, written as $\cdesname \cong \cdesname'$,
 whenever $E = E'$ and $\cdestoea(\cdesname) = \cdestoea(\cdesname')$.
\end{definition}

We list some property an \ea\ may enjoy. 

\begin{definition}\label{de:ea-simple}
 Let $\eaname = \tuple{E, \stati, \earel, \instate}$ be an \ea. We say
 that $\eaname$ is \emph{simple} if 
      $\forall e\in E$. $\exists s\in \stati$ such that 
      $s\cup\setenum{e}\in \stati$, 
      $s\earel s\cup\setenum{e}$ and $s\in \Powfin{E}$.
\end{definition}
In a simple event automaton, for each event, there is a finite state 
such that this can be reached by adding just this event.

On states of an \ea\ we can define an operator
\corr{$\ragg{X} = \setcomp{s'\in\stati}{\exists s\in\stati.\ s\earel s'}$}
{$\ragg{X} = \setenum{\instate}\cup\setcomp{s'\in\stati}{\exists s\in X.\ s\earel s'}$}, and this is clearly
a monotone and continuous operator on subset of states, hence we can calculate
the least fixed point of it, namely $\mathit{lfp}(\raggname)$

\begin{definition}\label{de:ea-compl}
 Let $\eaname = \tuple{E, \stati, \earel, \instate}$ be an \ea. We
 say that the event automaton $\eaname$ is \emph{complete} iff 
 \corr{$\mathit{lfp}(\ragg{\setenum{\instate}}) = \stati$}{$\mathit{lfp}(\raggname) = \stati$}.
\end{definition}
In a complete \ea\ each state is reachable from the initial one.

\begin{definition}\label{de:ea-fin-caused}
 Let $\eaname = \tuple{E, \stati, \earel, \instate}$ be an \ea. We say
 that $\eaname$ is \emph{finitely caused} if 
 $\forall {e}\in E$. $\exists \eaallowset{\eaname}{e} = \setenum{X_1, \dots, X_n}$ 
 such that each $X_i\in\Powfin{E}$ and $\forall s\in\stati$. if
 $s\earel s\cup\setenum{e}$ then $\exists X\in\eaallowset{\eaname}{e}$ and $X\subseteq s$.
\end{definition}
In a finitely caused \ea\ each event that can be added to a state has a finite number
of justifying subsets, which again resemble a kind of finite causes principle.
We observe that in a finitely caused \ea\ $\eaname$ the sets $\eaallowset{\eaname}{e}$ can
be effectively calculated.

\aggiunta{Beside the notion of finitely caused we have to stipulate that an event cannot be \emph{inhibited}
in different infinite ways: we want to rule out the case that
the event $e$ is such that $\pmv{Z}\gesrel e$ with $\pmv{Z}$ infinite, as it would be, for instance,
in an \ea{} with states $\emptyset, \bigcup_{i\in\nat}\setenum{e_i}$ and 
$\bigcup_{i\in\nat}\setenum{e_i,e}$ with $\earel$ defined as $\emptyset \earel \setenum{e_i}$ and
$\setenum{e_i}\earel \setenum{e_i,e}$. In this \ea{} the event $e$ can be added to each singleton state,
and the unique context allowing it would be the empty one, and these states are infinite.}
\begin{definition}\label{de:ea-fin-inh}
 Let $\eaname = \tuple{E, \stati, \earel, \instate}$ be an \ea. We say
 that $\eaname$ is \emph{finitely inhibited} if 
 $\forall {e}\in E$ the set $\mathcal{I}(\eaname,e) =
 \setcomp{s\in \stati}{e\not\in s\ \land\ \exists s'\in\stati.\ s\cup\setenum{e}\subseteq s'\
 \land\ \forall s'\in\stati.\ \corrmath{}{e\in s'\ \Rightarrow\ }\neg (s\earel s')}$
 is finite.
\end{definition}
The intuition behind a finitely inhibited \ea\ $\eaname$ is the \aggiunta{indeed the~} \corr{following}{one hinted before}:
the number of states $s$ where an event $e$ cannot be added but there is another 
containing both the events in $s$ and $e$ \corr{are}{is} finite. 

\begin{proposition}\label{pr:cdesea-is-good}
 Let $\cdesname = (E, \#, \gesrel)$ be a \Ges\ and let $\cdestoea(\cdesname)$ be the
 associated \ea. Then $\cdestoea(\cdesname)$ is simple, complete, finitely caused and
 finitely inhibited.
\end{proposition}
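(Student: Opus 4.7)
The plan is to check the four properties individually, starting from the immediate ones. \emph{Completeness} falls straight out of Definition~\ref{de:mia-es-event-traces}: every configuration $C \in \Conf{\cdesname}{\Ges}$ is, by definition, $\toset{\rho}$ for a sequence $\rho = e_1 e_2 \cdots$ in which each prefix $\toset{\rho}_{i-1}$ is itself a configuration enabling $e_i$. This yields a chain $\emptyset = \toset{\rho}_0 \earel \toset{\rho}_1 \earel \toset{\rho}_2 \earel \cdots$ that reaches every element of $C$, placing $C$ inside $\mathit{lfp}(\raggname)$; the reverse inclusion is immediate, since $\earel$ by construction relates only configurations.

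For \emph{simpleness}, fix $e \in E$. If $\gesrel$ poses no constraint on $e$, then the empty configuration already witnesses $\emptyset \earel \setenum{e}$. Otherwise pick any $\pmv{Z}\gesrel e$ and a pair $(X,Y)\in \pmv{Z}$ corresponding to a genuine enabling; take $s := X\cup Y$, which is finite (since $\pmv{A}\subseteq \Powfin{E}\times\Powfin{E}$) and conflict-free. The job is then to exhibit a linearization of $s$ whose every prefix is a configuration enabling the next event, which follows by induction on $|s|$ using the enabling condition of Definition~\ref{de:mia-es-enabling} restricted to events already present in $s$.

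For \emph{finitely caused}, I would set $\eaallowset{\eaname}{e}$ to be the family of unions $X_i\cup Y_i$ for pairs $(X_i,Y_i)\in \pmv{Z}$, taken across every $\pmv{Z}\gesrel e$ and combined suitably when several $\pmv{Z}$'s appear. Finiteness follows from $|\pmv{Z}|$ being finite and from $X_i,Y_i\in \Powfin{E}$. Now whenever $s\earel s\cup \setenum{e}$, we have $s\enab{e}$, so Definition~\ref{de:mia-es-enabling} supplies, for each $\pmv{Z}\gesrel e$, a matching pair $(X_i,Y_i)$ with $\ctx(\pmv{Z})\cap s = X_i$ and $Y_i\subseteq s$; hence $X_i\cup Y_i\subseteq s$ exhibits an element of $\eaallowset{\eaname}{e}$ contained in $s$.

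The main obstacle is \emph{finitely inhibited}, i.e.\ bounding the cardinality of $\eainibset{\eaname}{e}$. The natural strategy is to argue that whether a configuration $s$ blocks $e$ is determined solely by $s\cap R_e$, where the relevant set $R_e = \bigcup_{\pmv{Z}\gesrel e}\bigl(\ctx(\pmv{Z})\cup \bigcup\setcomp{Y}{(X,Y)\in \pmv{Z}}\bigr)$ is finite; this produces only finitely many blocking patterns, each corresponding to a specific failure of the disjunction in Definition~\ref{de:mia-es-enabling}. The delicate step is to rule out infinitely many $s\in \eainibset{\eaname}{e}$ realising a single pattern: one must exploit that $s$ is dominated by some $s'\in \stati$ containing $e$, so any ``irrelevant'' elements of $s\setminus R_e$ are ultimately constrained by co-occurrence with the pair $(X_i,Y_i)$ witnessing the enabling at some intermediate $s''\supseteq s$. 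Turning this reasoning into a clean finiteness argument, based on the interaction between $\gesrel$ and the conflict relation $\#$, is the technical crux of the proposition.
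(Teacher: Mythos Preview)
Your treatment of completeness and of finitely caused matches the paper's: completeness because every configuration is reached along the sequence $\rho$ that defines it, and finitely caused because each $\pmv{Z}$ in $\pmv{Z}\gesrel e$ is required to be finite.

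Your argument for \emph{simpleness}, however, has a real gap. Setting $s := X\cup Y$ for one pair $(X,Y)\in\pmv{Z}$ does not in general produce a configuration: the events in $X\cup Y$ may themselves carry $\gesrel$-constraints that mention events outside $X\cup Y$. For instance, with $\{(\emptyset,\{a\})\}\gesrel b$ and $\{(\emptyset,\{b\})\}\gesrel e$, your recipe yields $s=\{b\}$ as the witness for $e$, but $\{b\}$ is not a configuration since $b$ needs $a$. The inductive step you sketch (``using the enabling condition \ldots\ restricted to events already present in $s$'') cannot be carried out for this reason. A second issue is that enabling $e$ requires a compatible pair in \emph{every} $\pmv{Z}'\gesrel e$, not just the one you selected. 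The paper avoids both problems by arguing differently: it assumes $e$ lies in some configuration $C$, takes a defining sequence $\rho$ for $C$ with $\rho(i)=e$, and uses the finite prefix $\toset{\rho_{i-1}}$ as the witnessing state; that this prefix is a configuration and that $\toset{\rho_{i-1}}\enab{e}$ are exactly the clauses of Definition~\ref{de:mia-es-event-traces}.

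On \emph{finitely inhibited}, your honesty about the ``delicate step'' is appropriate. The paper's own argument here is equally brief: it invokes the finiteness of $\pmv{Z}$ and argues by contradiction that an infinite $\eainibset{\eaname}{e}$ would produce infinitely many states at which $e$ can be added with nothing in common, clashing with finitely caused. Your idea of localising the obstruction to a finite set $R_e$ is a sensible sharpening of that intuition, but neither you nor the paper spells out why only finitely many states can realise a given blocking pattern.
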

\begin{proof}
 $\cdestoea(\cdesname)$ is simple as each event $e$
 belongs to a configuration $C$ and to $C$ a sequence $\rho = e_1e_2\cdots$ is associated.
 Assume $e$ is $e_i$ then $\toset{\rho_{i-1}}$ is a state and 
 $\toset{\rho_{i-1}} \earel \toset{\rho_{i-1}}\cup\setenum{e}$.
 Completeness depends on the fact that each configuration is reachable.
 The fact that it is finitely caused depends on the finiteness of $\pmv{Z}$.
 The finiteness of $\pmv{Z}$ for each event in the $\gesrel$ relation implies also that
 $\cdestoea(\cdesname)$ is finitely inhibited as otherwise there would be infinite states
 where an event can be added and these states have nothing in common, which contradicts
 the fact that $\cdestoea(\cdesname)$ is finitely caused.
\end{proof}

On the events of an \ea\ it is easy to define an irreflexive and symmetric conflict relation.
\begin{definition}\label{de:ea-confl}
 Let $\eaname = \tuple{E, \stati, \earel, \instate}$ be an \ea. 
 We define a symmetric and irreflexive conflict relation $\#_{\ea}$ as follows:
 $e\ \#_{\ea}\ e'$ iff for each $s\in \stati$.\  
 $\setenum{e, e'}\not\subseteq s$.
\end{definition}

We show how to associate to an \ea\ a \Ges. 

\begin{theorem}\label{th:ea-to-Ges}
 Let $\eaname = \tuple{E, \stati, \earel, \instate}$ be a simple, complete, finitely caused and
 finitely inhibited \ea\ 
 such that $E = \bigcup_{s\in\stati} s$. 
 Then $\toges{\ea}{\eaname} = 
 (E, \#, \gesrel)$ 
 is a \Ges, where $\#$ is the relation $\#_{\ea}$ of Definition~\ref{de:ea-confl},
 and for each $e\in E$ we have 
 $\setcomp{(X,\emptyset)}{X\in \eaallowset{\eaname}{e}}\cup
 \setcomp{(X,\setenum{e})}{X\in  \eainibset{\eaname}{e}}\gesrel e$,
 and $\cdestoea(\toges{\ea}{\eaname}) = \eaname$.
\end{theorem}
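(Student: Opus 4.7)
The plan is to carry out the proof in two stages: first verify that $\toges{\ea}{\eaname}$ meets the requirements of Definition~\ref{de:mia-es}, and then establish the identity $\cdestoea(\toges{\ea}{\eaname}) = \eaname$ by matching configurations and reachability step by step.

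For the structural check, irreflexivity and symmetry of $\#$ come immediately from Definition~\ref{de:ea-confl}. Writing $\pmv{Z}_e$ for the set associated to $e$ by the prescribed $\gesrel$, non-emptiness of $\pmv{Z}_e$ is ensured by simplicity, which supplies at least one state $s$ with $s\earel s\cup\setenum{e}$ and hence a witness in $\eaallowset{\eaname}{e}$; finiteness follows from finite causation and finite inhibition, which bound $\eaallowset{\eaname}{e}$ and $\eainibset{\eaname}{e}$ respectively. Each first component $X$ of a pair in $\pmv{Z}_e$ is contained in a state of $\stati$, so its events coexist there and are thus conflict-free with respect to $\#_{\ea}$; the second component is either $\emptyset$ or the singleton $\setenum{e}$, trivially conflict-free. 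The functional requirement that $X=X'$ forces $Y=Y'$ reduces to the disjointness of $\eaallowset{\eaname}{e}$ and $\eainibset{\eaname}{e}$: any $X$ in the former admits the transition $X\earel X\cup\setenum{e}$, while any $X$ in the latter explicitly forbids it by the inhibitor clause.

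To prove $\cdestoea(\toges{\ea}{\eaname}) = \eaname$, the central lemma is that for $e\not\in C$, $e$ is enabled at $C$ in $\toges{\ea}{\eaname}$ iff $\ctx(\pmv{Z}_e)\cap C$ belongs to $\eaallowset{\eaname}{e}$. Indeed, enabling demands a pair $(X,Y)\in\pmv{Z}_e$ with $\ctx(\pmv{Z}_e)\cap C = X$ and $Y\subseteq C$; since $e\not\in C$, the only admissible choice has $Y=\emptyset$, forcing $X\in \eaallowset{\eaname}{e}$. The inclusion $\stati\subseteq \Conf{\toges{\ea}{\eaname}}{\Ges}$ then uses completeness: each $s\in\stati$ is reached by an $\earel$-path from $\instate$, and at every step $s'\earel s'\cup\setenum{e'}$ the set $s'$ lies in $\eaallowset{\eaname}{e'}$, hence $s'\subseteq \ctx(\pmv{Z}_{e'})$ and $\ctx(\pmv{Z}_{e'})\cap s' = s'$, witnessing an enabling with the pair $(s',\emptyset)$. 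The reverse inclusion and the matching of $\earel$ follow by induction on the length of the sequence generating a configuration.

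The most delicate step is closing that converse direction: given a configuration $C$ of $\toges{\ea}{\eaname}$ extended to $C\cup\setenum{e}$ by enabling, one must derive $C\in\stati$ and $C\earel C\cup\setenum{e}$. The enabling only yields $X = \ctx(\pmv{Z}_e)\cap C \in \eaallowset{\eaname}{e}$, exhibiting an $e$-transition from $X$ rather than from $C$ itself. The argument must exploit the precise shape of $\eainibset{\eaname}{e}$: were the step $C\earel C\cup\setenum{e}$ absent in $\eaname$ while $C$ still admitted a joint extension with $e$, then $C$ would fall into $\eainibset{\eaname}{e}$, so $\ctx(\pmv{Z}_e)\cap C$ would coincide with $C$ and select the inhibiting pair $(C,\setenum{e})$ instead of an allowing one, contradicting the enabling. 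Aligning these two characterizations, and ensuring that no extra events in $C\setminus X$ escape the control of $\ctx(\pmv{Z}_e)$, is the main technical hurdle of the theorem.
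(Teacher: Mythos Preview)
The paper does not give a self-contained argument here; it simply defers to Theorem~4.12 of \cite{Pi:lmcs}. Your two-stage plan (structural verification, then matching states and transitions) is the natural strategy and is presumably what the cited proof carries out, so in that sense your outline is along the right lines.

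There are, however, two genuine gaps. First, you flag but do not close the ``main technical hurdle'': when $e$ is enabled at a configuration $C$ you obtain only that $X=\ctx(\pmv{Z}_e)\cap C$ lies in $\eaallowset{\eaname}{e}$, and your contradiction via $\eainibset{\eaname}{e}$ fires only under the side assumption that some state of $\eaname$ contains $C\cup\setenum{e}$. The complementary case---$C$ is a state, $e$ is enabled at $C$ in the \Ges, yet no state of $\eaname$ contains $C\cup\setenum{e}$---is left open, and the binary relation $\#_{\ea}$ does not exclude it, since pairwise coexistence of the events in $C\cup\setenum{e}$ does not force their joint presence in a single state. Second, both your disjointness claim for $\eaallowset{\eaname}{e}$ and $\eainibset{\eaname}{e}$ and your forward step ``$s'\earel s'\cup\setenum{e'}$ implies $s'\in\eaallowset{\eaname}{e'}$'' presuppose that $\eaallowset{\eaname}{e}$ is exactly the set of states admitting an $e$-step. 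Definition~\ref{de:ea-fin-caused} only posits the existence of \emph{some} finite family of witnessing subsets, not this specific set, and ``finitely caused'' as stated does not ensure that the collection of all $e$-enabling states is itself finite. You need to pin down the intended choice of $\eaallowset{\eaname}{e}$ (the one used in \cite{Pi:lmcs}) before either the structural check or the reconstruction can be completed.
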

\begin{proof}
 Along the same line of Theorem 4.12 of \cite{Pi:lmcs}.
\end{proof}

%\centerline{\Large fine parte nuova}

We show here that it is possible to obtain the \gesrel\ relation where 
for each event $e$ there is just one entry $\pmv{Z} \gesrel e$.

\begin{definition}\label{de:normal-cdes}
 Let $\cdesname = (E, \#, \gesrel)$ be a \Ges. We say that $\cdesname$ is \emph{elementary}
 if $\forall e\in E$ there is just one entry $\pmv{Z} \gesrel e$.
\end{definition}

The \Ges\ in the Examples~\ref{ex:new2res},~\ref{ex:new2aug} and~\ref{ex:new2dim} are elementary ones.

\begin{theorem}
 Let $\cdesname = (E, \#, \gesrel)$ be a \Ges. Then there exists a 
 elementary \Ges\ $\cdesname' = (E, \#', \gesrel')$ such that
 $\Conf{\cdesname}{\Ges} = \Conf{\cdesname'}{\Ges}$.
\end{theorem}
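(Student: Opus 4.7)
The plan is to go through the event automaton as an intermediary, exploiting the fact that the construction in Theorem~\ref{th:ea-to-Ges} produces, for each event, exactly one entry in the \cd-relation, which is precisely the elementary shape required.

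First, I would form the associated event automaton $\cdestoea(\cdesname)$. By Proposition~\ref{pr:cdesea-is-good} this \ea\ is simple, complete, finitely caused and finitely inhibited, so the hypotheses of Theorem~\ref{th:ea-to-Ges} are satisfied. I would then define $\cdesname' = \toges{\ea}{\cdestoea(\cdesname)}$; call this triple $(E, \#', \gesrel')$. By inspecting the construction given in Theorem~\ref{th:ea-to-Ges}, for every $e\in E$ the relation $\gesrel'$ contains a unique entry
\[
\pmv{Z}_e \;=\; \setcomp{(X,\emptyset)}{X\in \eaallowset{\cdestoea(\cdesname)}{e}}
\;\cup\;
\setcomp{(X,\setenum{e})}{X\in \eainibset{\cdestoea(\cdesname)}{e}}
\]
with $\pmv{Z}_e \gesrel' e$. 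Hence $\cdesname'$ is elementary by Definition~\ref{de:normal-cdes}.

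It remains to check that $\Conf{\cdesname}{\Ges} = \Conf{\cdesname'}{\Ges}$. Theorem~\ref{th:ea-to-Ges} states that $\cdestoea(\toges{\ea}{\cdestoea(\cdesname)}) = \cdestoea(\cdesname)$, so the two \Ges\ induce the very same \ea. Since the state space of $\cdestoea(\cdesname)$ is by definition the set of configurations of $\cdesname$, and analogously for $\cdesname'$, the two sets of configurations coincide. Note that the conflict relation $\#'$ extracted via Definition~\ref{de:ea-confl} need not agree with the original $\#$ (the statement only claims equality of configurations, not equivalence of the event structures as triples), so no issue arises there.

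The only delicate point, and what I would treat as the main obstacle, is checking that the single entry $\pmv{Z}_e$ produced by the construction really satisfies the side conditions of Definition~\ref{de:mia-es}: nonemptiness, finiteness of $\pmv{Z}_e$, conflict-freeness of each component of its pairs, and unicity of the second component for a fixed first one. Nonemptiness follows from simplicity of the \ea, finiteness from the finitely caused and finitely inhibited properties, conflict-freeness from the fact that the sets $X$ are restrictions of reachable \ea{} states (hence configurations, hence conflict-free), and unicity from the fact that $\eaallowset{\cdestoea(\cdesname)}{e}$ and $\eainibset{\cdestoea(\cdesname)}{e}$ witness disjoint situations (adding $e$ is possible in one case, impossible in the other), so no two pairs in $\pmv{Z}_e$ share the first component with different second components. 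All these verifications are essentially a repackaging of arguments already used for Theorem~4.12 of \cite{Pi:lmcs} and Theorem~\ref{th:ea-to-Ges} here.
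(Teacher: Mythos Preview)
Your proposal is correct and follows essentially the same route as the paper: pass to the associated event automaton, invoke Proposition~\ref{pr:cdesea-is-good} to meet the hypotheses of Theorem~\ref{th:ea-to-Ges}, and observe that the resulting \Ges\ is elementary with the same configurations. The paper compresses all of this into a single sentence, whereas you spell out the verifications of Definition~\ref{de:mia-es} explicitly; those checks are already subsumed by the statement of Theorem~\ref{th:ea-to-Ges}, so your extra paragraph is redundant but harmless.
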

\begin{proof}
 It is enough to observe that the \Ges\ obtained in Theorem~\ref{th:ea-to-Ges} is an 
 elementary one.
\end{proof}

We assume that the \Ges\ we will consider are elementary ones.

\section{Causal nets}\label{sec:causal-nets}
We introduce a notion that will play the same role of occurrence net or unravel net when related to
context-dependent event structure. 

Given a contextual Petri net $N = \langle S, T, F, I, R, \mathsf{m}, \ell\rangle$, 
we can associate to it a relation on
transitions, denoted with $\prec_{N}$ and defined as $t \prec_{N} t'$ when 
$\pre{t}\cap\inib{t'}\neq\emptyset$ or $\post{t}\cap \rarc{t'}\neq\emptyset$, with the aim of establishing
the \emph{dependencies} among transitions related by inhibitor or read arcs.
Similarly we can introduce a conflict relation among transitions, which is a \emph{semantic} one.
For this is enough to stipulate that two transitions $t, t'\in T$ are in conflict, denoted
with $t\ \#_{N}\ t'$ if $\forall X\in \states{N}.\ \setenum{t, t'}\not\subseteq\flt{X}$.
With the aid of these relations we can introduce the notion of pre-\emph{causal} net.

\begin{definition}\label{de:pre-causal-net}
 Let $N = \langle S, T, F, I, R, \mathsf{m}, \ell\rangle$ be a labeled Petri net over the set
 of label $\Lab$. Then $N$ is a pre-\emph{causal} net (p\ca\ net) if the following further conditions are
 satisfied:
 \begin{enumerate}
   \item $<_N \cap (T\times T) = \emptyset$, $\forall t\in T$. $\pre{t}\cap\inib{t} = \emptyset$ and 
         $\post{t}\cap\rarc{t} = \emptyset$,
   \item $\forall t\in T.\ \forall s\in\inib{t}.\ |\ell(\post{s})| = 1$,
   \item $\forall t, t'\in T$, $t \prec_{N} t'\ \Rightarrow t' \not\prec_{N} t$,
   \item $\forall t\in T$ the set $\inib{t}\cup\rarc{t}$ is finite,
   \item $\forall t, t'\in T.\ t\ \#_{N}\ t'\ \Rightarrow\ \pre{t}\cap\pre{t'}\neq\emptyset$, and
   \item $\forall t, t'\in T.\ (t\neq t'\ \land\ \ell(t) = \ell(t'))\ \Rightarrow\ t\ \#_{N}\ t'$.
%
%%   \item $\forall a\in\Lab.\ \forall t, t'\in \ell^{-1}(a).\ 
%%         \ell(\pre{(\inib{t})}\cup\post{(\inib{t})}) = \ell(\pre{(\inib{t'})}\cup\post{(\inib{t'})})$,
%   \item $\forall X\in \states{U}$ $\prec_{U}^{\ast}$ is a partial order on $X$, and
%   \item $\forall C\in \Conf{U}{\ca}.\ C = \flt{C}$
 \end{enumerate}  
\end{definition}
The first requirement implies that $\forall t, t'\in T$ we have that $\post{t}\cap\pre{t}' = \emptyset$,
hence in this kind of net the dependencies do not arise from the flow relation,
furthermore inhibitor and read arcs do not interfere with the flow relation. 
The second condition implies that if a token in a place inhibits the happening of a transition, then
all the transitions removing this token have the same label, 
the third is meant to avoid cycles between transitions arising from inhibitor and read arcs, 
the fourth one implies that for each transition $t$ the set $\setcomp{t'\in T}{t'\prec_U t}$ is finite,
the fifth one stipulates that two conflicting transitions (which never appear together in any execution of
the net) consume the same token from a place, and the last one that two different
transitions bearing the same label are in conflict.

\aggiunta{\begin{example}\label{ex:precn}
 Consider the net below.
 It is a p\ca{}. Condition (2) of the definition is fulfilled as, for instance, considering
$\inib{t_4} = \setenum{s_3,s_1}$, we have that all the transitions in $\post{s_1}$ have the same label,
and similarly for the unique transition in $\post{s_3}$.
Condition (3) is fulfilled as we have, \eg, 
$t_1 \prec t_3$ as $\post{t_3}\cap\rarc{t_1} \neq \emptyset$,
$t_1 \prec t_4$ as $\pre{t_1}\cap\inib{t_4} \neq \emptyset$, and
$t_2 \prec t_4$ as $\pre{t_2}\cap\inib{t_4} \neq \emptyset$, but never the vice versa.
Also the other conditions are easily checked.

 \begin{center}
  \scalebox{0.9}{\begin{tikzpicture}
\tikzstyle{inhibitor}=[o-,thick]
\tikzstyle{inhibitorred}=[o-,draw=red,thick]
\tikzstyle{inhibitorblu}=[o-,draw=blue,thick]
\tikzstyle{inhibitorpur}=[o-,draw=purple,thick]
\tikzstyle{pre}=[<-,thick]
\tikzstyle{post}=[->,thick]
\tikzstyle{read}=[-,thick]
\tikzstyle{readred}=[-,draw=red,thick]
\tikzstyle{readblu}=[-,draw=blue,thick]
\tikzstyle{readpur}=[-,draw=purple,thick]
\tikzstyle{transition}=[rectangle, draw=black,thick,minimum size=5mm]
\tikzstyle{invtransition}=[rectangle, draw=black!0,thick,minimum size=6mm]
\tikzstyle{place}=[circle, draw=black,thick,minimum size=5mm]
\node[place,tokens=1] (p1) at (1,3.5) [label=left:$s_1$] {};
\node[place,tokens=1] (p5) at (4,3.2) [label=left:$s_3$] {};
\node[place,tokens=1] (p3) at (7,3.5) [label=right:$s_5$]{};
\node[place] (p2) at (1,0.8) [label=right:$s_2$] {};
\node[place] (p4) at (7,0.5) [label=right:$s_4$]{};
\node[place] (p6) at (4,0.5) [label=right:$s_6$]{};
\node[transition] (t1a) at (0,2) [label=left:$\pmv{a}$] {$t_1$}
edge[pre] (p1)
edge[readblu, bend right=75] (p6)
edge[post] (p2);
\node[transition] (t2a) at (2,2) [label=right:$\pmv{a}$] {$t_2$}
edge[pre] (p1)
edge[readblu, bend left=40] (p3)
edge[post] (p2);
\node[transition] (tb) at (4,2) [label=right:$\pmv{b}$] {$t_3$}
edge[pre] (p5)
edge[post] (p6);
\node[transition] (t2c) at (7,2) [label=right:$\pmv{c}$] {$t_4$}
edge[pre] (p3)
edge[inhibitorred] (p5)
edge[inhibitorred, bend right=40] (p1)
edge[post] (p4);
\end{tikzpicture}}
 \end{center}
\end{example}}

Based on the notion of p\ca\ we can introduce the one of \emph{causal} net.
\begin{definition}\label{de:causal-net}
 Let $\caname = \langle S, T, F, I, R, \mathsf{m}, \ell\rangle$ be a pre-\emph{causal} net over the set
 of label $\Lab$. Then $\caname$ is a \emph{causal} net (\ca\ net) if
 \begin{enumerate} 
   \item $\forall X\in \states{K}$ $\prec_{K}^{\ast}$ is a partial order on $X$, and
   \item $\forall t\in T\ \exists X\in \states{K}.\  t\in \flt{X}$.
 \end{enumerate}
\end{definition}
The added conditions with respect to the ones in Definition~\ref{de:pre-causal-net} 
implies that the executions of the transitions in a \fs\ is compatible with
the dependency relation when the transitions in the state associated to the \fs\ are considered, and
each transition can be executed.
\aggiunta{The p\ca{} in Example~\ref{ex:precn} is a \ca{} 
as each transition can be executed and the transitive
closure of the $\prec$ relation is a partial order on states.}

\corr{I}{From the previous example i}t should be clear that the conditions posed on pre-causal nets 
and causal nets 
are meant to mimic some of the conditions
posed on an occurrence nets \aggiunta{(\eg{} finiteness of causes)~}or on similar one, like for 
instance \emph{unravel} nets or
flow nets (\cite{Bou:FESFN})\aggiunta{, \eg{} that when considering an execution, an order can be 
found among the various transitions}, and they should assure that it is comprehensible what a computation in
such a net can be looking at labels, as the main intuition is that for the same activity (label)
there may be several incarnations. \aggiunta{In fact, differently from \cn{s} and \un{s}, the labeling
mapping is not necessarily the identity and if two transitions share a place, then they are in conflict.}

\begin{example}\label{ex:causalnet}
The following one is a causal net:\medskip

\begin{center}
\begin{tikzpicture}
\tikzstyle{inhibitorred}=[o-, draw=red,thick]
\tikzstyle{inhibitorblu}=[o-, draw=blue,thick]
\tikzstyle{pre}=[<-,thick]
\tikzstyle{post}=[->,thick]
\tikzstyle{read}=[-,thick]
\tikzstyle{transition}=[rectangle, draw=black,thick,minimum size=5mm]
\tikzstyle{place}=[circle, draw=black,thick,minimum size=5mm]
\node[place,tokens=1] (p1) at (0,2) {};
\node[place,tokens=1] (p3) at (3,2) {};
\node[place,tokens=1] (p5) at (6,2) {};
\node[place] (p2) at (0,0) {};
\node[place] (p4) at (3,1) {};
\node[place] (p6) at (6,0) {};
\node[transition] (t1) at (0,1) [label=left:$\pmv{b}$] {$t_1$}
edge[pre] (p1)
edge[post] (p2);
\node[transition] (t2) at (1.5,1) [label=left:$\pmv{c}$] {$t_2$}
edge[pre] (p3)
edge[inhibitorred] (p1)
edge[post] (p4);
\node[transition] (t3) at (4.5,1) [label=right:$\pmv{c}$] {$t_3$}
edge[pre] (p3)
edge[read] (p6)
edge[inhibitorred,bend left] (p2)
edge[post] (p4);
\node[transition] (t4) at (6,1) [label=right:$\pmv{a}$] {$t_4$}
edge[pre] (p5)
edge[post] (p6);
\end{tikzpicture}
\end{center}
\medskip 

All the conditions of Definitions~\ref{de:pre-causal-net} and \ref{de:causal-net} are fulfilled. 
The two transitions bearing the same
label ($t_2$ and $t_3$) are conflicting ones, namely they never appear together in any computation
though the activity realized by these two transitions ($\pmv{c}$) appears in all maximal computations.
\end{example}
The first observation we make on causal nets is that they are good candidates to be
seen as a \emph{semantic} net, namely a net meant to represent the behaviour of a system properly
modeling dependencies and conflicts of any kind. 

\begin{proposition}\label{pr:ca-net-consat}
 Let $\caname = \langle S, T, F, I, R, \mathsf{m}, \ell\rangle$ be a causal net. Then
 $\caname$ is conflict saturated.
\end{proposition}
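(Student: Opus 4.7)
The plan is to observe that the statement is essentially immediate from the definitions, so the proof amounts to matching up the two characterizations of conflict. First I would unpack what it means for the causal net $\caname$ to be conflict saturated: by Definition~\ref{de:sat-confl}, I must show that whenever $t, t' \in T$ satisfy $\forall X \in \states{\caname}.\ \setenum{t,t'} \not\subseteq \flt{X}$, then $\pre{t} \cap \pre{t'} \neq \emptyset$.

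Next I would recall the semantic conflict relation $\#_\caname$ introduced just before Definition~\ref{de:pre-causal-net}: by definition $t\ \#_\caname\ t'$ iff $\forall X\in\states{\caname}.\ \setenum{t,t'}\not\subseteq\flt{X}$. So the hypothesis on $t$ and $t'$ in the definition of conflict saturation is literally the statement $t\ \#_\caname\ t'$.

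The key step is then to invoke condition~(5) of Definition~\ref{de:pre-causal-net}, which every causal net inherits (since by Definition~\ref{de:causal-net} a causal net is a pre-causal net satisfying additional requirements). That condition says precisely $\forall t, t'\in T.\ t\ \#_\caname\ t'\ \Rightarrow\ \pre{t}\cap\pre{t'}\neq\emptyset$, which yields the desired conclusion.

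I do not anticipate any real obstacle here: the proof is a one-line observation that the definition of conflict saturation (Definition~\ref{de:sat-confl}) coincides with clause~(5) of the pre-causal net axioms once one unfolds the definition of $\#_\caname$. No use of the extra causal-net axioms (partial order on states, every transition executable) is needed for this proposition.
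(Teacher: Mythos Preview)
Your proposal is correct and follows exactly the same approach as the paper: the paper's proof is the one-line remark that a pre-causal net is conflict saturated by construction, since condition~(5) of Definition~\ref{de:pre-causal-net} requires conflicting transitions to share a preset place, which is precisely what you spell out in detail.
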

\begin{proof}
 A pre-causal net is conflict saturated by construction, as it is required that
 two conflicting transitions share a place in their presets.
\end{proof}

\begin{proposition}\label{pr:ca-single-ex-net}
 Let $\caname = \langle S, T, F, I, R, \mathsf{m}, \ell\rangle$ be a causal net. Then
 $\caname$ is an single execution net.
\end{proposition}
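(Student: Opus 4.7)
The plan is to rely on condition~(1) of Definition~\ref{de:pre-causal-net}, namely $<_N \cap (T\times T) = \emptyset$, together with the safeness of the net assumed throughout the paper and the standing convention that $\pre{t}\neq\emptyset$ for every transition. I would argue by contradiction: suppose some $X\in\states{\caname}$ has $X(t)\geq 2$ for some $t\in T$, so that there is a firing sequence $\sigma$ of the form $\sigma_1\trans{t}\sigma_2\trans{t}\sigma_3$ in which $t$ fires at least twice.

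Pick any place $s\in\pre{t}$, which exists because $\pre{t}\neq\emptyset$. The key structural observation is that $s\notin\post{t'}$ for every $t'\in T$: indeed, if $s\in\post{t'}$ then $t'\,F\,s\,F\,t$ would give $t'<_N t$, contradicting $<_N\cap (T\times T)=\emptyset$. Hence the only source of tokens for $s$ is the initial marking $\mathsf{m}$, and by safeness $\mathsf{m}(s)\leq 1$. For $t$ to be enabled at the first occurrence in $\sigma$ we must have $\mathsf{m}(s)=1$; that single token is then consumed, and since no transition can produce a token in $s$, the place $s$ is empty at every marking reached afterwards. But then $t$ cannot be enabled a second time, contradicting the second occurrence of $t$ in $\sigma$. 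Thus every state $X\in\states{\caname}$ satisfies $X=\flt{X}$, which is the definition of a single execution net.

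The only step that requires any care is making explicit the bridge from the abstract acyclicity clause $<_N\cap(T\times T)=\emptyset$ to the concrete fact that preset places of a transition are never produced by any transition; once this is spelled out the rest is immediate. It is worth noting that this proof is noticeably cleaner than the analogous one for unravel nets (Proposition~\ref{pr:un-is-se}): there, safeness had to be combined with the acyclicity of the subnet induced by a state, whereas here the global structural condition on $<_N$ does the work directly.
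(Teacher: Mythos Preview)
Your argument is correct and follows essentially the same approach as the paper: both hinge on the observation that condition~(1) of Definition~\ref{de:pre-causal-net} yields $\post{t'}\cap\pre{t}=\emptyset$ for all $t,t'$, so that preset places are never refilled and each transition can fire at most once. The paper states this in a single line, while you have spelled out the contradiction in full and made the use of safeness and of the convention $\pre{t}\neq\emptyset$ explicit; your closing remark comparing with Proposition~\ref{pr:un-is-se} is apt.
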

\begin{proof}
 Obvious, as $\forall t, t'\in T$ it holds that $\post{t}\cap\pre{t'} = \emptyset$ hence once that
 a transition is executed the places in its preset cannot be marked again.
\end{proof}
\begin{proposition}\label{pr:ca-unf}
 Let $\caname = \langle S, T, F, I, R, \mathsf{m}, \ell\rangle$ be a causal net. Then
 $\caname$ is an unfolding.
\end{proposition}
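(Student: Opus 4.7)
The plan is to reduce the claim to two ingredients already at hand: the fact that $\caname$ is a single execution net (Proposition~\ref{pr:ca-single-ex-net}) and condition~(6) of Definition~\ref{de:pre-causal-net}, which tells us that different transitions sharing a label must be in conflict. The argument is then almost immediate.

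First I would unfold the definition of an unfolding: I need to show that every configuration $C \in \Conf{\caname}{}$ satisfies $C = \flt{C}$. By Definition~\ref{de:conf-of-a-net}, such a $C$ has the form $\ell(X)$ for some state $X \in \states{\caname}$, so I have to show that $\ell(X)$, viewed as a multiset, assigns each label multiplicity at most one.

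Next, I would use Proposition~\ref{pr:ca-single-ex-net} to conclude $X = \flt{X}$, so each transition contributes at most one occurrence of its label to the multiset $\ell(X)$. The only remaining way for $\ell(X)$ to fail to be a set is to have two \emph{distinct} transitions $t, t' \in \flt{X}$ with $\ell(t) = \ell(t')$. Suppose for contradiction this happens. Then by condition~(6) of Definition~\ref{de:pre-causal-net} we get $t \ \#_{\caname}\ t'$, i.e.\ $\setenum{t,t'} \not\subseteq \flt{X''}$ for every $X'' \in \states{\caname}$. Applied to $X$ itself, this contradicts $t, t' \in \flt{X}$.

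Hence no label is repeated in $\ell(X)$, so $\ell(X) = \flt{\ell(X)}$, and $\caname$ is an unfolding. I do not expect any real obstacle here: the work has essentially already been done by Proposition~\ref{pr:ca-single-ex-net} and by the structural requirement that equally labelled distinct transitions are in (semantic) conflict; the proof is just a short chain of definitional unwindings.
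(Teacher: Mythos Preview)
Your argument is correct and follows the same idea as the paper: the paper's one-line proof just says ``The transitions bearing the same label are in conflict, hence the thesis,'' implicitly relying on the immediately preceding Proposition~\ref{pr:ca-single-ex-net}. You have simply made that dependence and the contradiction explicit.
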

\begin{proof}
 The transitions bearing the same label are in conflict, hence the thesis.
\end{proof}

\aggiunta{In the following we show that classical notions like occurrence nets or unravel nets
can be seen as causal ones, showing that this notion is adequate to be related
to prime or bundle event structures. In the next section we will see that it is also
suitable to be related to \Ges{} as well.}   
\subsection{Occurrence nets and causal nets} 
\corr{To give further evidence that this notion could be the appropriate one,
w}{W}e show that each occurrence net can be turned into a causal one, thus this is a conservative extension 
of this notion.
The idea behind the construction is simple: to each event of the occurrence net a
transition in the causal net is associated, the places in the preset
of all transitions are initially marked and they are not in the postset of any other transition. 
The dependencies between events are modeled using inhibitor arcs. All the conflicts are modeled 
like in a conflict saturated net (with suitable marked places).
\begin{proposition}\label{pr:occtocausal}
 Let $\cnname = \langle B, E, F, \mathsf{c}\rangle$ be an occurrence net. The net
 $\ontocn{\cn}(\cnname)$ defined as $\langle S, E, F', I, \emptyset, \mathsf{m}, \ell\rangle$
 where
  \begin{itemize}
    \item $S = \setcomp{(\ast,e)}{e\in E}\cup\setcomp{(e,\ast)}{e\in E}
           \cup\setcomp{\setenum{e,e'}}{e\ \#\ e'}$,
    \item $F' = \setcomp{(s,e)}{s = (\ast,e)}\cup \setcomp{(s,e)}{e\in s} \cup 
          \setcomp{(e,s)}{s = (e,\ast)}$,
    \item $I = \setcomp{(s,e)}{s = (\ast,e')\ \land\ e' <_{\cnname} e}$, 
%    \item $R = \emptyset$, 
    \item $\mathsf{m} : S \to \nat$ is such that $\mathsf{m}(s) = 0$ if $s = (e,\ast)$ and
          $\mathsf{m}(s) = 1$ otherwise, and
    \item $\ell$ is the identity,       
  \end{itemize}
  is a causal net over the set of label $E$, and $\Conf{\cnname}{\cn} = \Conf{\ontocn{\cn}(\cnname)}{\ca}$.
\end{proposition}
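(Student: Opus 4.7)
The plan is to verify in turn the six conditions of Definition~\ref{de:pre-causal-net} and the two of Definition~\ref{de:causal-net}, then establish equality of configurations. The whole argument is driven by a few structural observations: every place of the form $(\ast,e)$ is initially marked and only consumed (by the single transition $e$); every place of the form $(e,\ast)$ is only produced (by $e$) and never consumed; every conflict place $\setenum{e,e'}$ is initially marked and is in the preset of both $e$ and $e'$; and the inhibitor arcs simply mirror the causal order $<_{\cnname}$ of the occurrence net.

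First I would check condition (1) of Definition~\ref{de:pre-causal-net}: no transition produces a token consumed by another, since $\post{e} = \setenum{(e,\ast)}$ has empty postset in the net. The disjointness $\pre{t}\cap\inib{t} = \emptyset$ is immediate because $\pre{e}$ contains $(\ast,e)$ whereas $\inib{e}$ contains $(\ast,e')$ for $e' <_{\cnname} e$, which by irreflexivity differ from $(\ast,e)$; read arcs are empty so $\post{t}\cap\rarc{t}=\emptyset$ is trivial. Condition (2) holds since $\post{(\ast,e')} = \setenum{e'}$. For condition (3) I would note that $t \prec_{\ontocn{\cn}(\cnname)} t'$ reduces to $t <_{\cnname} t'$ (only the inhibitor case applies), and $<_{\cnname}$ is a strict partial order, hence acyclic. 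Condition (4) follows from the finiteness of $\hist{e}$ in an occurrence net. Conditions (5) and (6) hold by construction: the conflict places $\setenum{e,e'}$ saturate the conflict relation of $\cnname$ in the preset, and the labeling $\ell$ being the identity makes (6) vacuous.

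For the two causal-net conditions: on any state $X$ the restriction of $\prec_{\ontocn{\cn}(\cnname)}^{\ast}$ coincides with $\leq_{\cnname}$ restricted to $X$, hence is a partial order. Each transition $e$ appears in some state because $\hist{e}$ is a conflict-free, left-closed configuration of $\cnname$, and any linearisation compatible with $<_{\cnname}$ yields a firing sequence in $\ontocn{\cn}(\cnname)$: the preset place $(\ast,e)$ is consumed exactly once, the conflict places $\setenum{e,e''}$ remain marked since no $e''$ conflicting with $e$ lies in $\hist{e}$, and the inhibitor places $(\ast,e')$ with $e' <_{\cnname} e$ are empty because those $e'$ have been fired earlier in the linearisation.

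The main step is the identity $\Conf{\cnname}{\cn} = \Conf{\ontocn{\cn}(\cnname)}{\ca}$, and this is where I expect the bookkeeping to be most delicate. For the forward inclusion, given a configuration $C$ of $\cnname$ I would fire its events in a $<_{\cnname}$-compatible order in $\ontocn{\cn}(\cnname)$ and check, using the three observations above, that every transition is enabled at the corresponding marking. For the reverse inclusion, given a firing sequence of $\ontocn{\cn}(\cnname)$ with state $X$, the inhibitor arcs force left-closedness (any $e' <_{\cnname} e \in X$ must have been fired, else $(\ast,e')$ would still block $e$) and the conflict places force conflict-freeness (only one of $e,e''$ with $e\ \#\ e''$ can consume $\setenum{e,e''}$), so $X$ is a configuration of $\cnname$. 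Since the labeling is the identity, states and configurations coincide on the causal-net side, giving the desired equality.
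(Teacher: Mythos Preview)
Your proposal is correct and follows essentially the same approach as the paper: verify the pre-causal and causal net conditions from the structural shape of the places, then establish the configuration equality by mutual inclusion using that the inhibitor arcs encode $<_{\cnname}$ and the conflict places encode $\#$. Your argument is in fact more explicit than the paper's (which dismisses the pre-causal conditions as ``an easy inspection'' and handles the partial-order requirement by a slightly loose contradiction), and your direct identification of $\prec_{\ontocn{\cn}(\cnname)}$ with $<_{\cnname}$ is cleaner than the paper's reasoning; the only point both treatments leave implicit is that condition~(5) relies on the forward inclusion of configurations (to show that $\neg(e\,\#\,e')$ in $\cnname$ yields a common state in $\ontocn{\cn}(\cnname)$), which your executability argument for $\hist{e}$ already supplies.
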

\begin{proof}
 \aggiunta{First we recall that the labeling mapping is the identity, 
 therefore states and configurations coincide.~}
 An easy inspection show that the conditions of Definition~\ref{de:pre-causal-net} are 
 fulfilled by construction. Consider $X\in \states{\ontocn{\cn}(\cnname)}$ and assume
 that $\prec^{\ast}_{\ontocn{\cn}(\cnname)}$ is not a partial order. Then we have that
 for some $e, e'\in \flt{X}$, with $e\neq e'$ we have that 
 $e \prec^{\ast}_{\ontocn{\cn}(\cnname)} e'$ and $e' \prec^{\ast}_{\ontocn{\cn}(\cnname)} e$.
 Without loss of generality we can assume that 
 $e \prec_{\ontocn{\cn}(\cnname)} e'$ and $e' \prec_{\ontocn{\cn}(\cnname)} e$, but as 
 $\ontocn{\cn}(\cnname)$ is a pre-causal net by construction, this is impossible. 
 $\ontocn{\cn}(\cnname)$ satisfies also the other requirement of Definition~\ref{de:causal-net}, as
 clearly each transition can be executed.
 \aggiunta{Assume that there is a transition $e$ in $\ontocn{\cn}(\cnname)$ which is never executed,
 but it is in $\cnname$, as the latter is an occurrence net. As $e$ is executable in 
 $\cnname$ there is a firing sequence $\sigma = \sigma'\trans{e}m$ and the trace associated to this 
 \fs{}  $\trace{\sigma}$ is such that $\toset{\trace{\sigma}} = \hist{e}$. 
 Now assume that for each $i < \len{\trace{\sigma}}$ it holds that $\trace{\sigma}(i) = e_i$ 
 can be executed also in $\ontocn{\cn}(\cnname)$, and $\inib{e}$ in $\ontocn{\cn}(\cnname)$
 contains precisely the places $\setcomp{(e_i,\ast)}{e_i\in \toset{\trace{\sigma}}}$, which implies
 that also $e$ is executable in $\ontocn{\cn}(\cnname)$.~}

 To show that $\Conf{\cnname}{\cn} = \Conf{\ontocn{\cn}(\cnname)}{\ca}$ we prove
 that if $X\in \Conf{\cnname}{\cn}$ then $\mathsf{m} - \pre{X} + \post{X}$ is a reachable
 marking in $\ontocn{\cn}(\cnname)$. The proof is done by induction of the
 size of $X$. If $X$ is the emptyset then we are done. Assume it holds for
 $\len{X} = n$, hence $m' = \mathsf{m} - \pre{X} + \post{X}$ is a reachable
 marking in $\ontocn{\cn}(\cnname)$. Consider now the configuration $X\cup\setenum{e}$ for some
 $e\in E$. As it is a configuration we know that $\hist{e} \subseteq X$, hence for all
 $e'\in \hist{e}$ we have that  $m'((\ast,e)) = 0$, hence $m'\trans{e}(m-\pre{e}\cup\post{e})$, which means
 that to $X\cup\setenum{e}\in \Conf{\cnname}{\cn}$ a reachable marking in 
 $\ontocn{\cn}(\cnname)$ corresponds, and
 the thesis holds. The vice-versa, namely that 
 if $X \in  \Conf{\ontocn{\cn}(\cnname)}{\ca}$ then also $X\in\Conf{\cnname}{\cn}$, is proved analogously.
\end{proof}

Below we depict a simple occurrence net (on the left) and the associated causal one.
\aggiunta{The causal net on the right is conflict saturated.}
\medskip

\begin{center}
\begin{tikzpicture}
\tikzstyle{inhibitor}=[o-,thick, draw=red]
\tikzstyle{pre}=[<-,thick]
\tikzstyle{post}=[->,thick]
\tikzstyle{read}=[-,thick]
\tikzstyle{transition}=[rectangle, draw=black,thick,minimum size=5mm]
\tikzstyle{place}=[circle, draw=black,thick,minimum size=5mm]
\node[place,tokens=1] (p1) at (5,2) {};
\node[place,tokens=1] (p3) at (7,2) {};
\node[place,tokens=1] (p5) at (9,2) {};
\node[place,tokens=1] (p7) at (6,2) {};
\node[place] (p2) at (5,0) {};
\node[place] (p4) at (7,0) {};
\node[place] (p6) at (9,0) {};
\node[place,tokens=1] (p8) at (7,-1) {};
\node[transition] (t1) at (5,1) {$\pmv{a}$}
edge[pre] (p1)
edge[pre] (p7)
edge[pre] (p8)
edge[post] (p2);
\node[transition] (t2) at (7,1) {$\pmv{b}$}
edge[pre] (p3)
edge[pre] (p7)
edge[post] (p4);
\node[transition] (t3) at (9,1) {$\pmv{c}$}
edge[pre] (p5)
edge[pre] (p8)
edge[inhibitor] (p3)
edge[post] (p6);
\node[place,tokens=1] (q1) at (2,3) {};
\node[place] (q2) at (1,1) {};
\node[place] (q3) at (3,1) {};
\node[place] (q4) at (3,-1) {};
\node[transition] (tt1) at (1,2)  {$\pmv{a}$}
edge[pre] (q1)
edge[post] (q2);
\node[transition] (tt2) at (3,2)  {$\pmv{b}$}
edge[pre] (q1)
edge[post] (q3);
\node[transition] (tt3) at (3,0)  {$\pmv{c}$}
edge[pre] (q3)
edge[post] (q4);
\end{tikzpicture}
\end{center}
\medskip

\aggiunta{Dependencies in a causal net corresponding to an occurrence net are here represented with
inhibitor arcs, but an alternative choice would have been to use read arcs instead. Thus
if $e <_{\cnname} e'$ in the occurrence net, 
instead of stipulating the existence of an inhibitor arc connecting
the place $(\ast,e)$ to the transition $e$, one can introduce a read arc from
the place $(e,\ast)$ to the transition $e$.~}

The obvious consequence of the previous Proposition is the following one.
\begin{proposition}\label{pr:occtocausal-conflictsat}
 Let $\cnname$ be an occurrence net and 
 $\ontocn{\cn}(\cnname)$ be the associated causal net. Then $\ontocn{\cn}(\cnname)$ is an unfolding.
\end{proposition}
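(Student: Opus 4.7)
The plan is a short chain of two applications of results proved just above. By Proposition~\ref{pr:occtocausal} the net $\ontocn{\cn}(\cnname)$ obtained from an occurrence net $\cnname$ via the construction $\ontocn{\cn}$ is a causal net. By Proposition~\ref{pr:ca-unf} every causal net is an unfolding. Composing these two facts yields the statement, which is precisely why the authors flag it as an ``obvious consequence'' of the preceding proposition.

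As a sanity check one can also argue directly from the shape of $\ontocn{\cn}(\cnname)$. In that construction the labeling mapping $\ell$ is the identity on $E$, so for any firing sequence $\sigma$ the associated state $X_\sigma$ and the associated configuration $\ell(X_\sigma)$ are literally the same multiset; hence $\states{\ontocn{\cn}(\cnname)}$ and $\Conf{\ontocn{\cn}(\cnname)}{}$ are in bijection. By Proposition~\ref{pr:ca-single-ex-net} every causal net is a single execution net, so each state is a set, i.e.\ $X = \flt{X}$. Transporting this through the bijection shows that every configuration $C \in \Conf{\ontocn{\cn}(\cnname)}{}$ satisfies $C = \flt{C}$, which is exactly the defining condition of an unfolding in Definition~\ref{de:unfolding-net}.

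There is essentially no obstacle here: the genuine work has already been absorbed into Proposition~\ref{pr:occtocausal} (checking that $\ontocn{\cn}(\cnname)$ meets the clauses of Definitions~\ref{de:pre-causal-net} and~\ref{de:causal-net}, with conflicts encoded by the extra places $\setenum{e,e'}$ and with the identity labeling vacuously fulfilling condition (6) of Definition~\ref{de:pre-causal-net}) and into Proposition~\ref{pr:ca-unf}. The present statement is therefore a one-line corollary and the proof should be written as such.
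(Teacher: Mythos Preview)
Your proposal is correct and matches the paper's approach: the paper gives no explicit proof at all, merely flagging the statement as ``the obvious consequence of the previous Proposition,'' and your chain Proposition~\ref{pr:occtocausal} $+$ Proposition~\ref{pr:ca-unf} is exactly the intended (and only natural) reading of that remark. Your additional direct sanity check via the identity labeling is a nice elaboration but not needed.
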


We remark that in a causal net, 
it is a bit more tiring to dig out dependencies with respect to what 
happens in an occurrence net. 

The causal nets corresponding to occurrence nets have some further characteristic.
In fact the dependency relation $\prec$ 
and the conflict relation have a particular shape.

\begin{definition}\label{de:causal-occnet}
 Let $\caname = \langle S, T, F, I, R, \mathsf{m}, \ell\rangle$ be a causal net. 
 $\caname$ is said to be an \emph{occurrence causal} net whenever $R = \emptyset$, $\ell$ is injective,
 $\prec^{\ast}_{\caname}$ is a partial
 order over $T$, and if $t\ \#_{\caname}\ t' \prec^{\ast}_{\caname} t''$ then $t\ \#_{\caname}\ t''$.
\end{definition}
The above definition simply guarantees that the dependencies give a partial order and that the conflict
relation is inherited along the reflexive and transitive closure of the dependency relation.

\begin{proposition}
 Let $\cnname$ be an occurrence net and $\ontocn{\cn}(\cnname)$ be the associated causal net.
 Then $\ontocn{\cn}(\cnname)$ is an occurrence causal net.
\end{proposition}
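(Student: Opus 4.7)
The plan is to verify each of the four conditions in Definition~\ref{de:causal-occnet} directly from the construction in Proposition~\ref{pr:occtocausal}. Two of them are immediate: by definition of $\ontocn{\cn}(\cnname)$ the set of read arcs is empty, $R = \emptyset$, and the labeling $\ell$ is the identity on $E$, hence trivially injective. So the real content lies in analysing the induced dependency relation $\prec_{\ontocn{\cn}(\cnname)}$ and the semantic conflict $\#_{\ontocn{\cn}(\cnname)}$.

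First I would unfold the definition of $\prec_{\ontocn{\cn}(\cnname)}$: since $R = \emptyset$, one has $e \prec_{\ontocn{\cn}(\cnname)} e'$ exactly when $\pre{e}\cap\inib{e'} \neq \emptyset$. Inspecting the construction, $\inib{e'} = \setcomp{(\ast,e'')}{e'' <_{\cnname} e'}$, so an element of $\pre{e}\cap\inib{e'}$ must be of the form $(\ast,e'')$ with $e'' <_{\cnname} e'$; but the only place of the form $(\ast,\cdot)$ in $\pre{e}$ is $(\ast,e)$ itself. Hence $e \prec_{\ontocn{\cn}(\cnname)} e'$ iff $e <_{\cnname} e'$, so $\prec_{\ontocn{\cn}(\cnname)}$ coincides with $<_{\cnname}$ on $E$. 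Because $\cnname$ is acyclic, $<_{\cnname}$ is already a partial order, and therefore $\prec^{\ast}_{\ontocn{\cn}(\cnname)} = \leq_{\cnname}$ is a partial order on the whole set of transitions $T = E$, not just on each state.

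For the hereditary property, I would first show that the semantic conflict $\#_{\ontocn{\cn}(\cnname)}$ coincides with the conflict $\#$ of the occurrence net $\cnname$. One direction is structural: if $e\ \#\ e'$ in $\cnname$, the construction introduces the place $\setenum{e,e'}$ in the preset of both $e$ and $e'$, and this place is initially marked and has no producing transition, so $e$ and $e'$ cannot both appear in a state, giving $e\ \#_{\ontocn{\cn}(\cnname)}\ e'$. The other direction follows from the equality $\Conf{\cnname}{\cn} = \Conf{\ontocn{\cn}(\cnname)}{\ca}$ already established in Proposition~\ref{pr:occtocausal}: if two events never co-occur in any configuration of $\ontocn{\cn}(\cnname)$, the same holds in $\cnname$, and in an occurrence net this is equivalent to being in the propagated conflict $\#$. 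Having identified the two conflict relations, hereditary of $\#_{\ontocn{\cn}(\cnname)}$ with respect to $\prec^{\ast}_{\ontocn{\cn}(\cnname)} = \leq_{\cnname}$ is exactly the hereditary clause built into the definition of an occurrence net.

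The only step requiring some care, and which I expect to be the main (small) obstacle, is the identification $\#_{\ontocn{\cn}(\cnname)} = \#$: one must rule out that the added inhibitor structure introduces new incompatibilities between events not related by $\#$ in $\cnname$. This is handled precisely by invoking $\Conf{\cnname}{\cn} = \Conf{\ontocn{\cn}(\cnname)}{\ca}$ from Proposition~\ref{pr:occtocausal}, after which the four conditions of Definition~\ref{de:causal-occnet} follow directly.
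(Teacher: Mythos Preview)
Your proof is correct and follows essentially the same route as the paper: both verify $R=\emptyset$ and injectivity of $\ell$ by inspection, both identify $\prec_{\ontocn{\cn}(\cnname)}$ with $<_{\cnname}$ by unfolding the inhibitor structure, and both reduce the hereditary clause to the one already present in the occurrence net via the identification $\#_{\ontocn{\cn}(\cnname)} = \#$. The only minor difference is in how this last identification is justified: the paper appeals to conflict saturation of $\ontocn{\cn}(\cnname)$ (so that any semantic conflict must be witnessed by a shared preset place, and the only such places are the $\setenum{e,e'}$ with $e\ \#\ e'$), whereas you invoke the equality $\Conf{\cnname}{\cn} = \Conf{\ontocn{\cn}(\cnname)}{\ca}$ from Proposition~\ref{pr:occtocausal} directly; your route is arguably cleaner since it reuses an already-proven statement rather than re-examining the place structure.
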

\begin{proof}
 By construction $R = \emptyset$ and $\ell$ is the identity, hence it it injective. 
 $\prec^{\ast}_{\ontocn{\cn}(\cnname)}$ is a partial order as 
 we have that $e\prec _{\ontocn{\cn}(\cnname)} e'$ when $(\ast,e)\in\inib{e'}$ and
 this happen if $e < e'$ in the occurrence net $\cnname$, thus $\prec^{\ast}_{\ontocn{\cn}(\cnname)}$
 is a partial order as $<$ is a partial order. 
 As $\ontocn{\cn}(\cnname)$ is conflict saturated, it is clear that conflicts are inherited 
 along the $\prec^{\ast}_{\ontocn{\cn}(\cnname)}$ relation.
\end{proof}

\subsection{Causal nets and prime event structure}
Proposition~\ref{pr:occtocausal}, together with the connection among
\pes\ and \cn\ (Definition~\ref{de:pes-to-cn} and Proposition~\ref{pr:pestoon-classic}), 
suggests that a relation between \pes\ and \ca\ can be established.
Here the intuition is to use the same construction hinted in Proposition~\ref{pr:occtocausal}.
\begin{definition}\label{de:pestoca}
 Let $\pesname = (E, <, \#)$ be a \pes. Define 
 $\estonet{\pes}{\ca}(\pesname)$ as the labeled contextual Petri net 
 $\langle S, E, F, I, \emptyset, \mathsf{m}, \ell\rangle$ where
 \begin{itemize}
   \item $S = \setcomp{(\ast,e)}{e\in E}\cup 
              \setcomp{(e,\ast)}{e\in E}\cup
              \setcomp{(\setenum{e,e'},\#)}{e\ \#\ e'}$,
   \item $F =   \setcomp{(e,s)}{s = (e,\ast)}\ \cup\ 
                \setcomp{(s,e)}{s = (\ast,e) \lor\ (s = (W,\#)\ \land\ e\in W)}$,    
   \item $I = \setcomp{(s,e)}{s = (\ast,e')\ \land\ e' < e}$,       
   \item $\mathsf{m} = \setcomp{(\ast,e)}{e\in E}\cup 
                       \setcomp{(\setenum{e,e'},\#)}{e\ \#\ e'}$, and
   \item $\ell$ is the identity.
 \end{itemize}
\end{definition} 

\begin{proposition}\label{pr:pestocaandconf}
  Let $\pesname$ be a \pes, and $\estonet{\pes}{\ca}(\pesname)$ be the associated Petri net.
  Then $\estonet{\pes}{\ca}(\pesname)$ is a causal net and 
  $\Conf{\pesname}{\pes} = \Conf{\estonet{\pes}{\ca}(\pesname)}{\ca}$.
\end{proposition}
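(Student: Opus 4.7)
My plan is to unpack Definition~\ref{de:pestoca} and verify, in order, the six structural conditions of Definition~\ref{de:pre-causal-net}, the two dynamical conditions of Definition~\ref{de:causal-net}, and finally the configuration equality. The key structural observation to keep in mind throughout is that in $\estonet{\pes}{\ca}(\pesname)$ the induced dependency $\prec_N$ on transitions coincides with the causality $<$ of the \pes: since there are no read arcs, $(\ast,e')\in\inib{e}$ exactly when $e'<e$, whence $e'\prec_N e$ if and only if $e'<e$.

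First I would dispatch the structural conditions. Conditions (1), (3), (4), and (6) follow almost by inspection: no transition lies in the postset of any other (so $<_N$ restricted to $T\times T$ is empty and $\pre{t}\cap\inib{t}=\emptyset$); $\prec_N$ inherits irreflexivity from the partial order $<$ and finiteness from the finite-causes axiom of the \pes; and since $\ell$ is the identity, distinct transitions bear distinct labels, so (6) is vacuous. Condition (2) reduces to the observation that each inhibitor place $(\ast,e')$ lies in the preset of $e'$ alone, so $|\ell(\post{s})|=1$.

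Next I would prove the configuration equality together with the semantic-versus-structural conflict correspondence, and derive condition (5) as a corollary. For the forward inclusion, given $C\in\Conf{\pesname}{\pes}$ I would pick any linearization $e_1,e_2,\ldots$ of $C$ respecting $<$ and prove by induction on $i$ that $e_1\cdots e_i$ is a firing sequence: $(\ast,e_i)$ is still marked, every conflict place $\setenum{e_i,e'}$ is still marked (otherwise the $e'$ that consumed it would belong to $C$ and conflict with $e_i$, contradicting conflict-freeness), and every $(\ast,e')\in\inib{e_i}$ with $e'<e_i$ was consumed when $e'$ itself fired, since $e'\in\hist{e_i}\subseteq C$ appears earlier in the linearization. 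For the backward inclusion, given a state $X$, the single-execution property and the injectivity of $\ell$ together make $X$ a set; conflict-freeness follows because $e\ \#\ e'$ in the \pes\ would force the place $\setenum{e,e'}$ to be consumed twice; and left-closure follows because firing $e$ requires every $(\ast,e')$ with $e'<e$ to be empty, which only happens after $e'$ itself has fired.

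With this correspondence in hand, the remaining conditions fall out at once. For (5), if $t\ \#_N\ t'$ in the net then no state, and hence no \pes-configuration, contains both; so $t\ \#\ t'$ in the \pes\ by hereditariness of conflict (otherwise $\hist{t}\cup\hist{t'}\cup\setenum{t,t'}$ would be conflict-free and thus a \pes-configuration containing both), and the shared place $\setenum{t,t'}$ witnesses $\pre{t}\cap\pre{t'}\neq\emptyset$. For the two causal-net conditions, $\prec_N^{\ast}$ restricts to a partial order on any state because it agrees there with $<$, and every $e\in E$ appears in the state obtained by firing any linearization of $\hist{e}\cup\setenum{e}$. The main obstacle I anticipate is precisely this handling of (5): it is the one step that passes through the semantic conflict and the only one that genuinely invokes the hereditariness of the \pes\ conflict relation, so it must be argued carefully rather than dismissed by inspection.
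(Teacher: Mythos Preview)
Your proof is correct, but it takes a genuinely different route from the paper's. The paper's own proof is a single line: it observes that the construction $\estonet{\pes}{\ca}(\pesname)$ of Definition~\ref{de:pestoca} is literally the same as the construction $\ontocn{\cn}(\cdot)$ of Proposition~\ref{pr:occtocausal}, since both use only the events, the order $<$, and the saturated conflict $\#$; hence the result follows from Proposition~\ref{pr:occtocausal} together with the standard correspondence $\Conf{\pesname}{\pes}=\Conf{\estonet{\pes}{\cn}(\pesname)}{\cn}$ (Proposition~\ref{pr:pestoon-classic}).

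You instead verify everything from first principles: the six pre-causal-net conditions, the two causal-net conditions, and the configuration equality, without appealing to Proposition~\ref{pr:occtocausal} at all. This is longer but fully self-contained, and your ordering---establishing the firing-sequence/configuration correspondence \emph{before} condition~(5), then deriving~(5) from it via hereditariness of conflict---is a nice touch that the paper's earlier proof of Proposition~\ref{pr:occtocausal} does not make explicit. In particular, your argument for~(5) (that $\neg(t\ \#\ t')$ in the \pes\ forces $\hist{t}\cup\hist{t'}$ to be a conflict-free, left-closed configuration containing both) is the right way to close the gap between the semantic net conflict $\#_N$ and the structural \pes\ conflict. What the paper's approach buys is brevity and reuse; what yours buys is independence from the occurrence-net detour and an explicit account of exactly where hereditariness is needed.
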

\begin{proof}
  Trivial, by observing that $\estonet{\pes}{\ca}(\pesname)$ is the same construction of
  Proposition~\ref{pr:occtocausal}.
\end{proof}
The construction gives an occurrence causal net.
\begin{proposition}\label{pr:pestoca}
  Let $\pesname$ be a \pes, and $\estonet{\pes}{\ca}(\pesname)$ be the associated Petri net.
  Then $\estonet{\pes}{\ca}(\pesname)$ is an occurrence causal net.
\end{proposition}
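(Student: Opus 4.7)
The plan is to verify the four conditions of Definition~\ref{de:causal-occnet} directly from the construction in Definition~\ref{de:pestoca}, leveraging the already established Proposition~\ref{pr:pestocaandconf}. Two conditions are immediate by inspection: $R = \emptyset$ by construction, and $\ell$ is set to be the identity, so it is injective.

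For the third condition, I would analyze the dependency relation $\prec_{\estonet{\pes}{\ca}(\pesname)}$. Since $R = \emptyset$, this relation reduces to $e \prec e'$ iff $\pre{e} \cap \inib{e'} \neq \emptyset$. From the definition, $\pre{e} = \{(\ast,e)\} \cup \{(W,\#) \mid e\in W\}$ and $\inib{e'} = \{(\ast,e'') \mid e'' < e'\}$; the intersection is nonempty precisely when $(\ast,e) \in \inib{e'}$, i.e.\ when $e < e'$ in $\pesname$. Hence $\prec_{\estonet{\pes}{\ca}(\pesname)}$ coincides on $E$ with the causality $<$ of the \pes, and so $\prec^{\ast}_{\estonet{\pes}{\ca}(\pesname)}$ coincides with $\leq$, which is a partial order by Definition~\ref{de:pes-winskel}.

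For the fourth condition (conflict inheritance), the key step is to identify the semantic conflict $\#_{\estonet{\pes}{\ca}(\pesname)}$ with the \pes\ conflict $\#$. Since $\ell$ is the identity, states and configurations of $\estonet{\pes}{\ca}(\pesname)$ coincide, and by Proposition~\ref{pr:pestocaandconf} they in turn coincide with $\Conf{\pesname}{\pes}$. Thus two events $t,t'$ never appear together in a state of the net iff they never appear together in a configuration of the \pes; by conflict-freeness of \pes-configurations this is equivalent to $t\ \#\ t'$ (here the saturated places $\{(\setenum{e,e'},\#)\}$ ensure that a conflicting pair in the \pes\ cannot coexist in any state of the net, giving the non-trivial inclusion). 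Combining these two identifications, if $t\ \#_{\caname}\ t' \prec^{\ast}_{\caname} t''$ then $t\ \#\ t' \leq t''$ in $\pesname$, and hereditariness of $\#$ with respect to $<$ yields $t\ \#\ t''$, that is $t\ \#_{\caname}\ t''$.

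The main obstacle I expect is the careful verification of the equivalence between the semantic conflict of the net and the syntactic conflict of the \pes. One direction is easy (a conflicting pair in the \pes\ is blocked by the shared place in $(\setenum{e,e'},\#)$), but the other direction — that every semantic conflict in the net arises from a \pes-conflict — requires checking that no additional conflicts are introduced by the inhibitor arcs encoding causality. This follows because causal predecessors in the \pes\ can always be fired first in a firing sequence, so any pair $\{t,t'\}$ that is not \pes-conflicting admits a \fs\ containing both and hence does not form a semantic conflict in the net.
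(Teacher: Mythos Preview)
Your proof is correct and follows the same approach as the paper, namely direct inspection of the construction in Definition~\ref{de:pestoca}; the paper's own proof is simply ``Easy inspection of the construction in Definition~\ref{de:pestoca}'', so you have in fact carried out that inspection explicitly. Your treatment of conflict inheritance via the configuration correspondence of Proposition~\ref{pr:pestocaandconf} is more careful than what the paper sketches (in the analogous result for occurrence nets it appeals only to conflict saturation), but the underlying idea---identifying $\prec^{\ast}$ with $\leq$ and $\#_{\caname}$ with the \pes-conflict $\#$ and then invoking hereditariness---is the same.
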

\begin{proof}
 Easy inspection of the construction in Definition~\ref{de:pestoca}.
\end{proof}
 
We show that also the vice versa is feasible provided that we restrict our attention to 
occurrence causal net.

\begin{proposition}\label{pr:caon-to-pes}
 Let $\caname = \langle S, T, F, I, R, \mathsf{m}, \ell\rangle$ be an occurrence causal net. Then
 $\nettoes{\ca}{\pes}(\caname) = (T, \prec^{+}_{\caname}, \#_{\caname})$
 is a \pes, and 
 $\Conf{\caname}{\ca} = \Conf{\nettoes{\ca}{\pes}(\caname)}{\pes}$.
\end{proposition}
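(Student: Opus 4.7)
My plan is to verify the two claims in turn: first, that the triple $(T, \prec^{+}_{\caname}, \#_{\caname})$ meets Definition~\ref{de:pes-winskel}, and second, that its configurations coincide with those of $\caname$.

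For the event structure axioms, irreflexivity and transitivity of $\prec^{+}_{\caname}$ follow immediately from the assumption, built into Definition~\ref{de:causal-occnet}, that $\prec^{\ast}_{\caname}$ is a partial order. The finite-past condition $\setcomp{t' \in T}{t' \prec^{+}_{\caname} t}$ is finite needs a short argument: a single step $t' \prec_{\caname} t$ requires either $\pre{t'} \cap \inib{t} \neq \emptyset$ or $\post{t'} \cap \rarc{t} \neq \emptyset$, and since $\inib{t} \cup \rarc{t}$ is finite by clause~(4) of Definition~\ref{de:pre-causal-net} and each inhibitor/read place has a bounded number of transitions attached (for inhibitor places this is clause~(2), and for read places safeness of the net together with the structural conditions on $F$ keeps things finite), we get König-style finiteness of the predecessor set, which then propagates to $\prec^{+}_{\caname}$ by an inductive argument. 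The relation $\#_{\caname}$ is symmetric by definition, irreflexive because each transition belongs to some state of $\caname$ by Definition~\ref{de:causal-net}(2), and hereditary by the explicit clause of Definition~\ref{de:causal-occnet}.

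For the configuration equality, recall that $\caname$ is an unfolding by Proposition~\ref{pr:ca-unf} and that $\ell$ is injective, so configurations of $\caname$ are just the sets $X \subseteq T$ arising as $X_{\sigma}$ for some $\sigma \in \firseq{\caname}{\mathsf{m}}$. For the forward inclusion, let $X \in \Conf{\caname}{\ca}$ be witnessed by $\sigma$; conflict-freeness of $X$ follows because conflicting transitions share a preset place (Proposition~\ref{pr:ca-net-consat}) and $\caname$ is safe, while $\prec^{+}_{\caname}$-left-closedness follows because any inhibitor or read arc blocking a transition $t$ of $X$ would, unless preceded in $\sigma$ by the corresponding predecessor, prevent $t$ from firing; here I use that $R = \emptyset$ and the inhibitor arcs precisely encode the partial order.

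For the reverse inclusion, given a configuration $X$ of the \pes, I would linearise $X$ into a sequence $e_1 e_2 \cdots$ compatible with $\prec^{+}_{\caname}$ (possible because the past of each event in $X$ lies in $X$ and is finite) and show by induction on the prefix length that this sequence is enabled in $\caname$: the preset place $(\ast,e_i)$ is initially marked and never consumed earlier (since the only way to consume it would be via $e_i$ itself, which occurs at most once by Proposition~\ref{pr:ca-single-ex-net}), every inhibitor constraint on $e_i$ corresponds to some predecessor already fired, and no conflict is triggered because $X$ is conflict-free. The main obstacle I anticipate is just being careful with this reverse direction: one has to confirm that the structural encoding of $\prec$ inside the occurrence causal net (which may use inhibitor arcs rather than flow arcs for dependencies) faithfully blocks exactly the firings forbidden by the \pes-left-closure, and nothing more. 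Once the enabling invariant is established inductively, one gets $X \in \Conf{\caname}{\ca}$ and completes the proof.
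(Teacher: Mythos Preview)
Your approach matches the paper's: the paper dispatches the \pes\ axioms in a single sentence (citing that $\prec^{\ast}_{\caname}$ is a partial order and that conflict is hereditary by Definition~\ref{de:causal-occnet}), then proves both inclusions of the configuration equality exactly as you do---forward by showing each $t$ in a state has all its $\prec_{\caname}$-predecessors already fired (via the inhibitor places), and reverse by linearising a \pes\ configuration compatibly with $\prec_{\caname}$ and checking inductively that each $t_n$ is enabled at the marking reached after $t_1,\dots,t_{n-1}$. You are actually more careful than the paper on the finite-causes axiom, which the paper's proof does not address at all; your observation that clause~(2) of Definition~\ref{de:pre-causal-net} together with injectivity of $\ell$ forces $|\post{s}|=1$ for $s\in\inib{t}$ is the right ingredient.

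One minor slip: in the reverse direction you write ``the preset place $(\ast,e_i)$'', but that notation belongs to the specific construction $\estonet{\pes}{\ca}$ of Definition~\ref{de:pestoca}. Here $\caname$ is an \emph{arbitrary} occurrence causal net, so its places need not have that shape. The argument you need is the general one: clause~(1) of Definition~\ref{de:pre-causal-net} gives $<_N \cap (T\times T)=\emptyset$, so no transition's postset meets any transition's preset, and hence preset places, once emptied, stay empty; together with conflict-freeness of $X$ and conflict saturation (Proposition~\ref{pr:ca-net-consat}) this ensures $\pre{t_n}$ is still marked at step $n$. With that correction your proof goes through and is essentially the paper's.
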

\begin{proof}
 We have just to show that $\Conf{\caname}{\ca} = \Conf{\nettoes{\ca}{\pes}(\caname)}{\pes}$,
 as $\prec^{+}_{\caname}$ is a partial order being $\caname$ an occurrence causal net and
 it is inherited along the conflict relation $\#_{\caname}$. 
 Take $X\in \Conf{\caname}{\ca}$, then we have that 
 $X$ is conflict free and for each $t\in X$ it holds that all the transitions
 $t' \prec_{\caname} t$ have been executed as $t' = \post{s}$ for $s\in \inib{t}$, but this 
 account to say that $\hist{t}\subseteq X$ and this prove that $X \in 
 \Conf{\nettoes{\ca}{\pes}(\caname)}{\pes}$ as well.
 For the other \corr{containment}{inclusion}, consider $X \in \Conf{\nettoes{\ca}{\pes}(\caname)}{\pes}$.
 $X$ can be seen as the sequence $t_1t_2t_3\dots$ compatible with $\prec_{\caname}$.
 Take $t_n\in X$, we show that $m_n\trans{t_n}$ where 
 $m_n = \mathsf{m} - \pre{\setenum{T_1, \dots, t_{n-1}}} + \post{\setenum{T_1, \dots, t_{n-1}}}$.
 For all $t\in \hist{t_n}$ we have that $m_n(s) = 0$ for $s\in\pre{t}$ and
 as $t'\in \hist{t_n}$ we have that there is an inhibitor arc from an $s\in\pre{t'}$ and
 $t$, but then $m_n\trans{t_n}$, and the thesis follows.
\end{proof}

An important consequence of this proposition is that we can associate an occurrence net to
an occurrence causal one, simply by associating to the occurrence causal net the corresponding \pes\ as
in Proposition~\ref{pr:caon-to-pes} and then apply the construction in Definition~\ref{de:pes-to-cn}.

\begin{proposition}\label{pr:ca-to-on}
 Let $\caname = \langle S, T, F, I, R, \mathsf{m}, \ell\rangle$ be an occurrence causal net, then
 $\tuple{B, T, F', \mathsf{c}}$ where 
 \begin{itemize}
 \item $B = \setcomp{(\ast,e)}{e\in E}\cup \setcomp{(e,\ast)}{e\in E}\cup
         \setcomp{(e,e',\prec_{\caname})}{e \prec_{\caname} e'} \cup 
         \setcomp{(\setenum{e,e'},\#)}{e\ \#_{\caname}\ e'}$,
   \item $F =   \setcomp{(e,b)}{b = (e,\ast)}\ \cup\ \setcomp{(e,b)}{b = (e,e',\prec_{\caname})}\ \cup\ 
                \setcomp{(b,e)}{b = (\ast,e)}\ \cup\ \setcomp{(b,e)}{b = (e',e,\prec_{\caname})}\ \cup\ 
               \setcomp{(b,e)}{b = (Z,\#_{\caname})\ \land\ e\in Z}$,    
         and
   \item $\mathsf{c} = \setcomp{(\ast,e)}{e\in E}\cup \setcomp{(\setenum{e,e'},\#)}{e\ \#\ e'}$
  \end{itemize}   
  is an occurrence net.                                              
\end{proposition}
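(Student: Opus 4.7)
The plan is to verify the four defining conditions of an occurrence net (singleton/empty preset for conditions, reachability from the initial marking, finiteness of the history of each event, and the proper conflict relation) by direct inspection of the construction, leveraging Proposition~\ref{pr:caon-to-pes} which already guarantees that $(T,\prec^{+}_{\caname},\#_{\caname})$ is a \pes. This means the required finiteness of causes and the inheritance of conflicts come for free from the \pes\ axioms.

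First I would check the structural conditions on presets in the constructed net. By inspection of $F'$, the place $(\ast,e)$ has empty preset, $(e,\ast)$ has preset $\setenum{e}$, $(e,e',\prec_{\caname})$ has preset $\setenum{e}$, and $(\setenum{e,e'},\#)$ has empty preset. Hence every place has a preset that is either empty or a singleton, and the places in $\mathsf{c}$ are precisely those with empty preset. The reachability condition $\exists b'\in\mathsf{c}.\ b'\leq_{\cnname} b$ then follows because every non-initial place has the form $(e,\ast)$ or $(e,e',\prec_{\caname})$, and both are reached through the event $e$, whose preset includes the initial place $(\ast,e)$.

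Next I would verify acyclicity and the finite history property. The only transitions-to-transition paths in the constructed net pass through places $(e,e',\prec_{\caname})$, so the restriction of $\leq_{\cnname}$ to $T\times T$ is exactly $\prec^{+}_{\caname}$, which is a partial order since $\caname$ is an occurrence causal net. Hence the whole net is acyclic. Moreover $\hist{e}$ in the constructed net coincides with the set $\setcomp{e'\in T}{e'\leq e}$ in the \pes\ $\nettoes{\ca}{\pes}(\caname)$, and the latter is finite by the \pes\ axiom on causes.

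Finally I would verify the conflict relation. The immediate conflict $\#_0$ in the constructed net is generated precisely by the places $(\setenum{e,e'},\#)$ shared in the preset of two events, hence $e\ \#_0\ e'$ iff $e\ \#_{\caname}\ e'$. The hereditary conflict $\#$ then coincides with the closure of $\#_{\caname}$ along $\leq_{\cnname}\!\upharpoonright\!T = \prec^{+}_{\caname}$, which by Definition~\ref{de:causal-occnet} is already $\#_{\caname}$ itself. The main obstacle I anticipate is this last equivalence: one must argue that no \emph{new} immediate conflicts are introduced by the construction (e.g.\ through shared $(\ast,e)$ or $(e,e',\prec_{\caname})$ places) and that the hereditary closure does not exceed $\#_{\caname}$. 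Both points reduce to the observation that the non-$\#$-labelled places in the preset of an event are indexed by the event itself, and that $\#_{\caname}$ is already inherited along $\prec^{+}_{\caname}$ in an occurrence causal net.
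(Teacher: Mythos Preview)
Your verification is correct, but your approach differs from the paper's. The paper's proof is a two-line appeal to earlier results: it simply observes that the displayed net is (essentially) $\estonet{\pes}{\cn}$ applied to the \pes\ $\nettoes{\ca}{\pes}(\caname)$ obtained in Proposition~\ref{pr:caon-to-pes}, and then invokes Proposition~\ref{pr:pestoon-classic} to conclude that the result is an occurrence net. No axiom-by-axiom checking is performed.

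Your route, by contrast, re-verifies the occurrence-net axioms directly on the constructed net, using Proposition~\ref{pr:caon-to-pes} only to import the \pes\ finiteness-of-causes and conflict-inheritance properties. This is slightly longer but more self-contained, and it has a minor advantage: the construction in the statement uses the \emph{direct} dependency $\prec_{\caname}$ to index the causal places, whereas Definition~\ref{de:pes-to-cn} uses the full partial order $<$; your direct check is insensitive to this discrepancy, while the paper's proof silently identifies the two constructions. Both arguments are sound.
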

\begin{proof}
 By Propositions~\ref{pr:caon-to-pes} and \ref{pr:pestoon-classic}.
\end{proof}

The following theorem indeed assures that the notion of (occurrence) causal net is adequate as
the notion of occurrence net with respect to the classical notion of occurrence net in the relationship 
with prime event structure. 

\begin{theorem}
 Let $\caname$ be an occurrence causal net and  $\pesname$ be a prime event structure.
 Then $\caname \cong \estonet{\pes}{\ca}(\nettoes{\ca}{\pes}(\caname))$ and 
 $\pesname \equiv \nettoes{\ca}{\pes}(\estonet{\pes}{\ca}(\pesname))$.
\end{theorem}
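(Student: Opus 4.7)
The plan is to dispatch both equivalences by unfolding the two constructions and reducing to the correspondences on configurations already established, rather than reproving anything from scratch.

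For the first equivalence $\caname \cong \estonet{\pes}{\ca}(\nettoes{\ca}{\pes}(\caname))$, I would argue directly from Definition \ref{de:netequiv}: by Definition \ref{de:pestoca} the net $\estonet{\pes}{\ca}(\nettoes{\ca}{\pes}(\caname))$ has transition set $T$ (the events of the intermediate \pes) and identity labeling, which matches $\caname$ up to the injective labeling $\ell$ required for occurrence causal nets. What then remains is to check $\states{\caname} = \states{\estonet{\pes}{\ca}(\nettoes{\ca}{\pes}(\caname))}$, and since both labelings are injective states and configurations coincide. Chaining Propositions \ref{pr:caon-to-pes} and \ref{pr:pestocaandconf} gives
\[
\Conf{\caname}{\ca} = \Conf{\nettoes{\ca}{\pes}(\caname)}{\pes} = \Conf{\estonet{\pes}{\ca}(\nettoes{\ca}{\pes}(\caname))}{\ca},
\]
which is exactly the desired equality.

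For the second equivalence, $\pesname \equiv \nettoes{\ca}{\pes}(\estonet{\pes}{\ca}(\pesname))$, I would inspect the two constructions and check that the causality and conflict relations match on the nose. Starting from $\pesname = (E,<,\#)$ the net $\estonet{\pes}{\ca}(\pesname)$ has event set $E$, no read arcs, and an inhibitor arc from $(\ast,e)$ to $e'$ precisely when $e<e'$; hence $e \prec_{\estonet{\pes}{\ca}(\pesname)} e'$ iff $e<e'$, so its transitive closure returns $<$ (which is already transitive). For conflict, the shared preset place $(\setenum{e,e'},\#)$ produced for each $\#$-pair prevents any state from containing both, giving $\#\subseteq \#_{\estonet{\pes}{\ca}(\pesname)}$; conversely, if $\lnot(e\#e')$ in $\pesname$ then $\hist{e}\cup\hist{e'}$ is a configuration of $\pesname$ containing both events, and by Proposition \ref{pr:pestocaandconf} the corresponding state of the net also contains both, yielding the reverse inclusion.

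The main obstacle is the conflict direction of the second claim: one must check both that inherited \pes{} conflicts are faithfully represented as semantic conflicts of the net, and that no spurious semantic conflicts arise. Inheritance is automatic, because firing a $<$-successor requires, through the inhibitor arcs, that all its ancestors have fired, and an ancestor in conflict is blocked at its shared conflict place; so $e\#e'<e''$ forces $e$ and $e''$ to be semantically conflicting in the net. The absence of spurious conflicts is the delicate point and is where the configurational equivalence of Proposition \ref{pr:pestocaandconf}, applied in reverse, does the work: any pair of compatible events sits inside a \pes{} configuration, hence inside a reachable state of the synthesised net.
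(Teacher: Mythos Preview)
Your proposal is correct and follows the paper's approach: the paper's proof is the one-liner ``By Propositions~\ref{pr:pestocaandconf} and~\ref{pr:caon-to-pes}'', and your first equivalence is exactly this chaining of configuration equalities. For the second equivalence you go further than the paper, verifying that $\prec^{+}$ and $\#_{\caname}$ on $\estonet{\pes}{\ca}(\pesname)$ literally reproduce $<$ and $\#$ of $\pesname$ (using heredity and the configuration correspondence to rule out spurious conflicts); this is a legitimate elaboration that yields on-the-nose equality of the \pes{} rather than merely equal configurations, but it rests on the same two propositions the paper invokes.
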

\begin{proof}
 By Propositions~\ref{pr:pestocaandconf} and~\ref{pr:caon-to-pes}.
\end{proof}

%\michelenote{mettere la parte sulla costruzione del causal net da un unravel net, la prova che tutto funziona
%dipende dal fatto che ogni stato da una cosa aciclica}

\subsection{Unravel nets and causal nets}
In an unravel net the same transitions may have various causes, conflicting one. In a causal
net we may have various incarnations of the same \emph{activity}, depending on different contexts. 
In associating a causal net to an unravel one $\unname$ we have to establish, for each transition in
$\unname$, the possible sets of causes.
%\michelenote{la devo definire sulle etichette}.
\begin{definition}\label{de:un-net-causes}
  Let $\unname = \tuple{S, T, F, \marko{m}}$ be an \un, and let $t\in T$ be a transition.
  Then the set $\causes{t}$ defined as 
  $\setcomp{Y\subseteq T}{\forall s\in\pre{t}. \pre{s}\neq\emptyset\ \Rightarrow\ 
  |\pre{s}\cap Y| = 1}\cup\setcomp{\emptyset}{\forall s\in\pre{t}. \pre{s} = \emptyset}$ is the set of all possible dependencies for $t$.
\end{definition}
We add the $\emptyset$ when $t$ does not depend on any other occurrence of a transition.

We show how to associate a causal net to an unravel one. 
The idea is to create a copy of the transition $t$ for each of its possible dependencies in the
unravel net, and each of these dependencies are then implemented with inhibitor arcs.
\begin{proposition}\label{pr:un-to-ca}
  Let $\unname = \tuple{S, T, F, \marko{m}}$ be an \un.
  The net $\ontocn{\un}(\unname) = \langle S, T', F', I, \emptyset, \hat{\mathsf{m}}, \ell\rangle$
  where
  \begin{itemize}
    \item $S = \setcomp{(\ast,t)}{t\in T}\cup\setcomp{(t,\ast)}{t\in T}
           \cup\setcomp{\setenum{t,t'}}{t\ \#\ t'}$,
    \item $T' = \setcomp{(t,Y)}{t\in T\ \land\ Y\in \causes{t}}$,
    \item $F' = \setcomp{(s,(t,-))}{s = (\ast,t)}\cup \setcomp{(s,(t,-))}{t\in s} \cup 
          \setcomp{((t,-),s)}{s = (t,\ast)}$,
    \item $I = \setcomp{(s,(t,Y))}{s = (\ast,t')\ \land\ t'\in Y}$, 
%    \item $R = \emptyset$, 
    \item $\hat{\mathsf{m}} : S \to \nat$ is such that $\hat{\mathsf{m}}(s) = 0$ if $s = (t,\ast)$ and
          $\hat{\mathsf{m}}(s) = 1$ otherwise, and
    \item $\ell(t,-) = t$ is the labeling mapping,       
  \end{itemize}
  is a causal net over the set of label $T$, and $\Conf{\unname}{\un} = \Conf{\ontocn{\un}(\unname)}{\ca}$.
\end{proposition}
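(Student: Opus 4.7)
The plan is to verify in turn each of the six structural conditions of Definition 5.1 for pre-causal nets, then the two semantic conditions of Definition 5.3 for causal nets, and finally to exhibit a correspondence between firing sequences of $\unname$ and those of $\ontocn{\un}(\unname)$ that descends to configurations. The easy pre-causal conditions come directly from the construction: condition (1) holds since $\post{(t,Y)} = \setenum{(t,\ast)}$ lies in no preset, and $t\notin Y$ (otherwise a self-loop $t\to s\to t$ would appear inside some state's subnet of $\unname$, contradicting the unravel-net acyclicity), so $\pre{(t,Y)}\cap\inib{(t,Y)}=\emptyset$; condition (2) holds because $\post{(\ast,t')}$ is entirely labelled by $t'$; condition (4) uses safeness of $\unname$ and finiteness of $\pre{t}$ to bound $|Y|$; condition (6) is immediate, since distinct transitions with the same label $t$ differ only in the cause component $Y$ and both have $(\ast,t)$ in their preset, so they are in conflict.

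The two delicate conditions are (3) and (5). For antisymmetry of $\prec$, I would first reduce $(t_1,Y_1)\prec(t_2,Y_2)$ to the clean statement $t_1\in Y_2$: the only non-trivial intersection is $(\ast,t_1)\in\inib{(t_2,Y_2)}$, since $R=\emptyset$ and conflict places $\setenum{t_1,t''}$ are not of the form $(\ast,t')$. A mutual $\prec$ then forces $t_1\in Y_2$ and $t_2\in Y_1$, yielding a cycle $t_1\to s\to t_2\to s'\to t_1$ in $\unname$; by Proposition~\ref{pr:s-m-empty} the places $s,s'$ cannot be initially marked, so neither $t_1$ nor $t_2$ could ever fire in $\unname$, contradicting the unravel-net axiom that every transition appears in some state. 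For conflict saturation (5), I would case-split on whether the conflicting transitions share the underlying element of $T$: if $t_1=t_2$ with $Y_1\neq Y_2$ the shared preset place is $(\ast,t_1)$; if $t_1\neq t_2$ and $t_1\ \#\ t_2$ in $\unname$ it is $\setenum{t_1,t_2}$. I expect (5) to be the main obstacle, because I also have to rule out the residual case where $t_1,t_2$ are compatible in $\unname$ but $(t_1,Y_1)$, $(t_2,Y_2)$ fail to co-occur because of mutually incompatible cause-elements; here I would exploit the fact that $Y\in\causes{t}$ must correspond to an actual selection of token-providers in some firing sequence of $\unname$, so that any incompatibility between $Y_1$ and $Y_2$ is witnessed by a conflict in $\unname$ whose associated place $\setenum{\cdot,\cdot}$ already enforces the conflict, or is inherited through Proposition~\ref{s-tr-confl1}.

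The causal-net conditions of Definition~\ref{de:causal-net} then come out readily: on any state $X$ the relation $\prec^{*}$ is a partial order because the inhibitor semantics forces each $(t',-)$ with $t'\in Y$ to fire strictly before $(t,Y)$, so the firing order on $X$ is a linear extension of $\prec$ restricted to $X$; and each $(t,Y)$ is executable by mirroring, in the causal net, an unravel-net firing sequence whose token-providers for $\pre{t}$ match $Y$ (using Proposition~\ref{pr:un-is-se} to avoid repetitions). For the equivalence $\Conf{\unname}{\un}=\Conf{\ontocn{\un}(\unname)}{\ca}$, I would construct, by induction on sequence length, a bijection between firing sequences: each step $m\trans{t}m'$ in $\unname$ lifts to $\widehat{m}\trans{(t,Y)}\widehat{m}'$ where $Y$ records precisely the transitions of the prefix that produced tokens into $\pre{t}$ during that execution, and conversely every firing sequence of $\ontocn{\un}(\unname)$ projects along $\ell$ back into $\unname$ because the inhibitor arcs faithfully mirror the token-flow structure. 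Since $\ell(t,Y)=t$, multisets of labels coincide on both sides and the configurations match.
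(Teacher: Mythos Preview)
Your overall plan---verify the six clauses of Definition~\ref{de:pre-causal-net}, then the two clauses of Definition~\ref{de:causal-net}, then lift firing sequences of $\unname$ and of $\ontocn{\un}(\unname)$ back and forth by induction on length---is exactly the route the paper takes. The paper is much terser: it declares conditions~(1), (3) and~(5) ``trivially satisfied by construction'', argues~(2), (4), (6) essentially as you do, and then builds the trace-preserving correspondence $\sigma\leftrightarrow\widehat{\sigma}$ by the same induction you sketch (recording in $Y$ the transitions of the prefix that marked $\pre{t}$). So on the parts the paper actually argues, your proposal matches it.

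Where you go beyond the paper---conditions~(3) and~(5)---your arguments do not close. For~(3), the reduction of $(t_1,Y_1)\prec(t_2,Y_2)$ to $t_1\in Y_2$ is correct, but from a mutual $\prec$ you only obtain a flow cycle $t_1\to s\to t_2\to s'\to t_1$ in $\unname$; the step ``so neither $t_1$ nor $t_2$ could ever fire'' is wrong, since $s$ and $s'$ may have \emph{other} producers, allowing each of $t_1,t_2$ to fire (just not in the same state). Condition~(3) is structural and must hold even for transitions of $T'$ that never co-execute (indeed, $(t_1,\setenum{t_2})$ in such a situation is not firable at all, so Definition~\ref{de:causal-net}(2) is also at stake). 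For~(5), your residual case is real and your proposed fix does not land: if $a,b$ are conflicting causes with $a\in Y_2$, $b\in Y_1$ and $t_1,t_2$ compatible in $\unname$, then the conflict place $\setenum{a,b}$ sits in the presets of the $a$- and $b$-labelled transitions, not of $(t_1,Y_1)$ or $(t_2,Y_2)$, so neither Proposition~\ref{s-tr-confl1} nor the added conflict places supply the required common preset place. The paper's one-line dismissal of~(3) and~(5) does not resolve these points either; your instinct to single them out is the right one, and a genuine argument would have to restrict $\causes{t}$ to cause-sets that are actually realised along some \fs\ of $\unname$.
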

\begin{proof}
 We check the various conditions of Definitions~\ref{de:pre-causal-net} and~\ref{de:causal-net}.
 We start with the ones of Definition~\ref{de:pre-causal-net}. Conditions (1), (3) and (5) are
 trivially satisfied by construction. For the second condition, take any
 transition $(t,Y)\in T'$, with $t\in T$ and $Y\subseteq T$ and $Y\neq\emptyset$, 
 and  consider $s\in \inib{(t,Y)}$. We have that 
 $s$ is of the form $(\ast,t')$ for some $t'\in Y$. Clearly 
 $\post{s} = \setenum{(t',Y'_1), \dots (t',Y'_k)}$ and $\ell(\post{s}) = \setenum{t'}$, thus
 $|\ell(\post{s})| = 1$. 
 Given a $t\in T'$, finiteness of $\inib{t}$ derives by the fact that in the \un\ $\unname$ there
 is no place with infinite incoming arcs, and the last condition follows by the fact that
 equally labeled transitions $t$ and $t'$ in $T'$ share a common input place: $(\ast,\ell(t))$.
 
 To show that $\Conf{\unname}{\un} = \Conf{\ontocn{\un}(\unname)}{\ca}$ we proceed as follows:
 consider $\sigma$ a \fs\ in $\unname$ and $X$ the corresponding state. We show that 
 there is a \fs\ $\widehat{\sigma}$ in $\ontocn{\un}(\unname)$ such that 
 $\trace{\sigma} = \trace{\widehat{\sigma}}$. 
 In particular we prove that if $\sigma = \sigma'\trans{t}\sigma''$ then
 there is a transition $(t,Y)$ such that 
 $\widehat{\sigma} = \widehat{\sigma'}\trans{(t,Y)}\widehat{\sigma''}$ with 
 $\trace{\sigma'} = \trace{\widehat{\sigma'}}$. By induction on the length of the firing
 sequence $\sigma'$, we have that if $\lung{\sigma'} = 0$ then $\sigma' = \marko{m}$ and
 consider $t$ enabled at $\marko{m}$. Then $\widehat{\sigma'} = \hat{\marko{m}}$ and
 $(t,\emptyset)$ is clearly enabled at this marking. 
 Consider then $\sigma'$, $\toset{\trace{\sigma'}}$ and a transition $t$ enabled at
 $\lead{\sigma'}$, we have that for exactly one $Y\subseteq \toset{\trace{\sigma'}}$ it
 holds that $(t,Y)$ is enabled at the marking $\lead{\widehat{\sigma'}}$ as the
 transitions in $Y$ are those that have marked the preset of $t$ in $\unname$. 
 A similar argument show that to each firing sequence $\widehat{\sigma}$ in $\ontocn{\un}(\unname)$ a 
 firing sequence $\sigma$ in $\unname$ corresponds and $\trace{\sigma} = \trace{\widehat{\sigma}}$.
 Thus $\Conf{\unname}{\un} = \Conf{\ontocn{\un}(\unname)}{\ca}$.
\end{proof}

\begin{example}\label{ex:unravascn}
Below we depict a simple unravel net (on the left) and the associated causal one.
\medskip

\begin{center}
\begin{tikzpicture}
\tikzstyle{inhibitor}=[o-,thick,draw=red]
\tikzstyle{pre}=[<-,thick]
\tikzstyle{post}=[->,thick]
\tikzstyle{read}=[-,thick]
\tikzstyle{transition}=[rectangle, draw=black,thick,minimum size=5mm]
\tikzstyle{place}=[circle, draw=black,thick,minimum size=5mm]
\node[place,tokens=1] (p1) at (5,2.5) {};
\node[place,tokens=1] (p3) at (8,2.5) {};
\node[place,tokens=1] (p5) at (11,2.5) {};
\node[place] (p2) at (5,.5) {};
\node[place] (p4) at (8,.5) {};
\node[place] (p6) at (11,.5) {};
\node[place,tokens=1] (p8) at (8,-.5) {};
\node[transition] (t1) at (5,1.5) {$\pmv{a}$}
edge[pre] (p1)
edge[pre] (p8)
edge[post] (p2);
\node[transition] (t2) at (6.75,1.5) {$\pmv{c}$}
edge[pre] (p3)
edge[inhibitor] (p1)
edge[post] (p4);
\node[transition] (t4) at (9.25,1.5) {$\pmv{c}$}
edge[pre] (p3)
edge[inhibitor] (p5)
edge[post] (p4);
\node[transition] (t3) at (11,1.5) {$\pmv{b}$}
edge[pre] (p5)
edge[pre] (p8)
edge[post] (p6);
\node[place,tokens=1] (q1) at (2,3) {};
\node[place] (q3) at (2,1) {};
\node[place] (q4) at (2,-1) {};
\node[transition] (tt1) at (1,2)  {$\pmv{a}$}
edge[pre] (q1)
edge[post] (q3);
\node[transition] (tt2) at (3,2)  {$\pmv{b}$}
edge[pre] (q1)
edge[post] (q3);
\node[transition] (tt3) at (2,0)  {$\pmv{c}$}
edge[pre] (q3)
edge[post] (q4);
\end{tikzpicture}
\end{center}

Observe that there are two incarnations for the same transition $\pmv{c}$ in the corresponding \ca\ one,
namely $(\pmv{c},\setenum{\pmv{a}})$ for the one on the left and $(\pmv{c},\setenum{\pmv{b}})$
for the one on the right.
\end{example}
On a \ca\ net we can define a conflict relation involving the labels. 
Given the \ca\ $\caname = \langle S, T, F, I, R, \mathsf{m}, \ell\rangle$ over the set of labels
$\Lab$, we say that $a\ \sharp\ b$ whenever $\forall t\in \ell^{-1}(a)\  \forall t'\in \ell^{-1}(b).\ 
t\ \#\ t'$: in this conflict relation, defined on labels, we represent the fact that all the
possible incarnations of the activities $a$ and $b$ are in conflict.

The \ca\ obtained from an unravel net has a peculiar shape when looking at the contextual 
arcs representing the dependencies: inhibitor arcs connecting places and transitions are such that
the places have no incoming arcs. Furthermore two equally labeled transitions $t$ and $t'$ 
are \emph{inhibited} by the same number of \emph{transitions} and the labels of these 
inhibiting transitions are either in conflict or they are the same.
\begin{definition}\label{de:ca-plain-caus}
 Let $\caname = \langle S, T, F, I, R, \mathsf{m}, \ell\rangle$ be a \ca\ over the set of labels
 $\Lab$, we say that
 $\caname$ is \emph{plainly caused} whenever
 \begin{itemize}
  \item  $R = \emptyset$, 
  \item $\forall t\in T\ \forall s\in\inib{t}.\ \pre{s} = \emptyset$, 
  \item $\forall a\in \ell(T).\ \forall t,t\in\ell^{-1}(a).\ |\inib{t}| = |\inib{t'}|$, and
  \item $\forall t, t'\in T.\ (\ell(t) = \ell(t')\ \land\ t\neq t')\ \Rightarrow\ 
        \forall s\in\inib{t}\ \forall s'\in\inib{t}.\ (\ell(\post{s}) = \ell(\post{s'})\ 
        \lor\ \ell(\post{s})\ \sharp\ \ell(\post{s'}))$.
  \end{itemize}
\end{definition}

The \ca{s} obtained either by an \cn\ or by an \un\ are plainly caused (we have already observed that
an occurrence net is an unravel net as well).
\begin{proposition}
 Let $\unname $ be an \un\ and $\ontocn{\un}(\unname)$ be the associated \ca. Then 
 $\ontocn{\un}(\unname)$ is plainly caused.
\end{proposition}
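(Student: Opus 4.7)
The plan is to verify each of the four clauses of Definition~\ref{de:ca-plain-caus} directly against the explicit construction of $\ontocn{\un}(\unname)$ given in Proposition~\ref{pr:un-to-ca}. The first three clauses should drop out routinely from inspecting the construction; clause~(4) will be the main point of effort.

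For clause (1), namely $R = \emptyset$, the verification is immediate: the read-arc component of the tuple defining $\ontocn{\un}(\unname)$ is declared to be $\emptyset$. For clause (2), I would observe that any inhibitor arc of $\ontocn{\un}(\unname)$ has the form $(s,(t,Y))$ with $s = (\ast,t')$ for some $t' \in Y$; inspecting the definition of $F'$, arcs into places of the form $(\ast,t')$ are never produced (transitions can only point into places of the form $(t,\ast)$), so $\pre{s} = \emptyset$.

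For clause (3), fix a label $a \in T$ (the labelling target of the construction is $T$ itself). The transitions of $\ontocn{\un}(\unname)$ carrying label $a$ are precisely those of the form $(a,Y)$ with $Y \in \causes{a}$, and from $I = \setcomp{(s,(t,Y))}{s=(\ast,t')\land t' \in Y}$ one reads off $|\inib{(a,Y)}| = |Y|$. By the defining condition of $\causes{a}$, each admissible $Y$ picks exactly one element from $\pre{s}$ for every $s \in \pre{a}$ with $\pre{s} \neq \emptyset$, so $|Y|$ equals the cardinality of $\setcomp{s \in \pre{a}}{\pre{s}\neq\emptyset}$, which is a quantity depending only on $a$. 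Hence all same-labelled transitions share the cardinality of their inhibitor sets.

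Clause (4) is the crux. Fix $t = (a,Y)$ in $\ontocn{\un}(\unname)$ and pick $s,s' \in \inib{t}$ of the forms $(\ast,t_1), (\ast,t_2)$ with $t_1, t_2 \in Y$; since every transition with $(\ast,t_i)$ in its preset is of the form $(t_i,-)$ carrying label $t_i$, one has $\ell(\post{s}) = \setenum{t_1}$ and $\ell(\post{s'}) = \setenum{t_2}$. If $t_1 = t_2$ there is nothing to prove. Otherwise $t_1, t_2$ are picked by $Y$ from some $\pre{s_1}, \pre{s_2}$ with $s_1,s_2 \in \pre{a}$; the case $s_1 = s_2$ is ruled out because it would force $|\pre{s_1}\cap Y|\geq 2$, contradicting the defining condition of $\causes{a}$. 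So $s_1 \neq s_2$, and the plan is to combine Propositions~\ref{s-tr-confl1}--\ref{s-tr-confl2} with the structural conflict machinery of the construction: for every two distinct incarnations $(t_1,-), (t_2,-)$ in $\ontocn{\un}(\unname)$, the shared conflict place $\setenum{t_1,t_2}$ (installed in both presets whenever $t_1\#t_2$ in $\unname$) forces $(t_1,-)\ \#\ (t_2,-)$, yielding $t_1 \sharp t_2$ at the label level. The main obstacle is controlling the case where $t_1$ and $t_2$ come from different preset places of $a$: I expect the argument to rely on the fact that the simultaneous membership of $t_1$ and $t_2$ in the same $Y$, together with the constraint that each $Y$ must correspond to a consistent firing history in $\unname$, leaves only the conflict-induced choices as possible, so that the required disjunction $\ell(\post{s}) = \ell(\post{s'}) \lor \ell(\post{s}) \sharp \ell(\post{s'})$ is recovered.
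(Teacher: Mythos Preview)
Your treatment of clauses~(1) and~(2) is correct and merely unpacks what the paper calls ``clear by the construction.'' Clause~(3), however, rests on a hidden assumption: the formula $|Y|=|\{s\in\pre{a}:\pre{s}\neq\emptyset\}|$ presumes the sets $\pre{s}$ (for $s\in\pre{a}$) are pairwise disjoint, which an unravel net need not satisfy. If some transition $t'$ outputs to two distinct places $s_1,s_2\in\pre{a}$ while further transitions $t'',t'''$ output only to $s_1$ and $s_2$ respectively (with $t''$ and $t'''$ not in conflict), then both $\{t'\}$ and $\{t'',t'''\}$ belong to $\causes{a}$ and the corresponding incarnations of $a$ have inhibitor sets of sizes $1$ and $2$. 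The paper's one-line proof does not address this.

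For clause~(4) you correctly isolate the obstacle, but the resolution you sketch does not go through: nothing in Definition~\ref{de:un-net-causes} restricts $Y\in\causes{a}$ to ``consistent firing histories,'' and even among realisable transversals the conclusion fails. Take $b_1,b_2$ sharing a preset place and both producing $q_1$, a transition $c$ concurrent with both producing $q_2$, and $a$ with $\pre{a}=\{q_1,q_2\}$. Then $\{b_1,c\}\in\causes{a}$, and in $\ontocn{\un}(\unname)$ the places $(\ast,b_1),(\ast,c)\in\inib{(a,\{b_1,c\})}$ carry labels $b_1,c$ that are neither equal nor $\sharp$-related, since their unique incarnations $(b_1,\emptyset),(c,\emptyset)$ fire together. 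So the literal condition~(4) is violated. The paper's ``observe how $\causes{t}$ is formed'' does not close this gap either; the informal description preceding Definition~\ref{de:ca-plain-caus} suggests the intended content is an existential matching (for each $s\in\inib{t}$ there \emph{exists} $s'\in\inib{t'}$ with equal or $\sharp$-conflicting label), and \emph{that} does follow from the transversal shape of $\causes{a}$ together with Proposition~\ref{s-tr-confl1}, but it is not the condition as stated.
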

\begin{proof}
 The first three conditions are clear by the construction in Proposition~\ref{pr:un-to-ca}, and
 for the last one it is enough to observe how $\causes{t}$ is formed.
\end{proof}

Given a label $a\in \Lab$ and a \ca\ $\caname = \langle S, T, F, I, R, \mathsf{m}, \ell\rangle$,
the set of the possible dependencies of the transitions $t\in \ell^{-1}(a)$ is defined 
as $\Ini{a} = \setcomp{\ell(\post{s})}{s\in \inib{t}\ \land\ t\in \ell^{-1}(a)}$\corr{ and}{.~Given 
an event $a$, this 
set contains all the labels (events) that should have happened in order to allow to one of the transitions
labeled with $a$ to occur. From this set } 
it is possible to extract maximal subsets of conflicting labels, which we denote
with $\maxbund{a} = \setcomp{Y\subseteq \Ini{a}}{\forall b,b'\in Y.\ b\ \sharp\ b'\ \land\ Y\ 
\mbox{ is maximal}}$.\aggiunta{~Maximality is required as otherwise some events would not occur in
the associated \bes{} as the required \emph{dependency} has been left out.~} 

\aggiunta{\begin{example}
Consider the $\caname$ net in Example~\ref{ex:unravascn}. Then $\Ini{\pmv{a}} = \emptyset$, 
$\Ini{\pmv{b}} = \emptyset$ 
and $\Ini{\pmv{c}} = \setenum{\pmv{a},\pmv{b}}$. 
The components of the bundles are determined using $\maxbund{\cdot}$ and in this case we have
$\maxbund{\pmv{c}} = \setenum{\setenum{\pmv{a},\pmv{b}}}$.
Observe that maximality is required as otherwise one could infer, \eg, $\setenum{\pmv{a}}\bundle \pmv{c}$,
which is not correct.
\end{example}}

\begin{proposition}\label{pr:caps-to-bes}
 Let $\caname = \langle S, T, F, I, R, \mathsf{m}, \ell\rangle$ be a plainly caused \ca\ over the
 set of labels $E$. Then $\nettoes{\ca}{\bes}(\caname) = (E, \bundle, \sharp)$ where 
 $Y \bundle e$ whenever $Y\in \maxbund{e}$ is a \bes, and 
 $\Conf{\caname}{\ca} = \Conf{\nettoes{\ca}{\bes}(\caname)}{\bes}$.
\end{proposition}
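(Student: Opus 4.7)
The plan is to verify two things: that $\nettoes{\ca}{\bes}(\caname) = (E, \bundle, \sharp)$ is indeed a \bes, and that the configuration sets coincide.

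For the first part, I would check the four conditions of Definition~\ref{de:bes}. The relation $\sharp$ is irreflexive and symmetric on $E$ because it is defined from $\#$ on transitions, which has these properties: if $a\ \sharp\ a$ held, we would need $t\ \#\ t$ for every $t \in \ell^{-1}(a)$, contradicting irreflexivity of $\#$. For each $Y \bundle e$, by construction $Y \in \maxbund{e}$, so $Y$ is a set of pairwise $\sharp$-conflicting labels. Finally, each $Y$ is contained in $\Ini{e}$, and $\bigcup\setcomp{Y}{Y \bundle e} \subseteq \Ini{e}$; the plainly caused constraint, together with $|\inib{t}|$ being finite by condition~(4) of Definition~\ref{de:pre-causal-net}, forces $\Ini{e}$ to be finite, since all transitions labeled $e$ have the same number of inhibitor places and their labels coincide up to $\sharp$-conflict.

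For the equality $\Conf{\caname}{\ca} = \Conf{\nettoes{\ca}{\bes}(\caname)}{\bes}$, I would model the argument on Proposition~\ref{prop:un-bes}. For the forward inclusion, take $C \in \Conf{\caname}{\ca}$ arising from a \fs\ $\sigma = \mathsf{m}\trans{t_1}m_1\cdots\trans{t_n}m_n$ with $\trace{\sigma} = a_1\cdots a_n$ and $\toset{\trace{\sigma}} = C$. Conflict-freeness of $C$ with respect to $\sharp$ follows from the state $\setenum{t_1,\dots,t_n}$ being free of any $\#$-conflict, hence their labels cannot be all pairwise $\#$-related in $\ell^{-1}$. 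For the bundle condition at $a_i$: enabling of $t_i$ at $m_{i-1}$ requires that every $s \in \inib{t_i}$ be empty at $m_{i-1}$, so for each such $s$ some transition in $\post{s}$ (uniquely labeled $\ell(\post{s})$ by condition~(2) of Definition~\ref{de:pre-causal-net}) must have fired earlier. Hence $\setcomp{\ell(\post{s})}{s \in \inib{t_i}} \subseteq \setenum{a_1,\dots,a_{i-1}}$, and the maximality of any $Y \in \maxbund{a_i}$ ensures it intersects this set of labels.

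For the converse inclusion, given a \bes-configuration $C$ with a conflict-free linearization $a_1 a_2\cdots$ satisfying all bundles, I would inductively construct a \fs\ in $\caname$. At step $i$, I need to pick $t_i \in \ell^{-1}(a_i)$ whose preset is marked and whose inhibitor places are empty at the current marking. The input place $(\ast,t_i)$ is always marked, and conflict places $\setenum{t_i,t'}$ remain unconsumed because $C$ is $\sharp$-free; the inhibitor side is the delicate point. The plainly caused condition guarantees that the different transitions in $\ell^{-1}(a_i)$ correspond to the different possible combinations of emptied inhibitor labels, each matching a maximal conflict pattern in $\Ini{a_i}$, so the labels fired in $a_1\cdots a_{i-1}$ determine a choice of $t_i$ whose inhibitor places are precisely those emptied by prior firings. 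The main obstacle is justifying this selection: one must use the plainly caused structure to show that for every prefix consistent with the bundles one can surjectively pick a specific $t_i \in \ell^{-1}(a_i)$ whose $\inib{t_i}$ labels are covered by the prefix, which is essentially the converse of the argument for the forward direction.
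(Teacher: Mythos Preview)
Your proposal is correct and follows essentially the same approach as the paper: verify the \bes\ axioms directly from the plainly caused conditions, then establish the equality of configurations by translating firing sequences to linearizations and back, with the forward direction using that inhibitor places must be emptied by earlier firings and the converse by selecting an appropriate incarnation in $\ell^{-1}(a_i)$. The paper's proof is terser---it invokes Proposition~\ref{pr:caon-to-pes} as a template and treats the reverse inclusion as symmetric---but the substance is the same, and at the key step (that every $Y\in\maxbund{a_i}$ meets the set of labels already fired) both you and the paper appeal to the equal-cardinality and conflict-or-equal conditions of Definition~\ref{de:ca-plain-caus} without spelling out the combinatorics in full.
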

\begin{proof}
 Along the same lines as Proposition~\ref{pr:caon-to-pes}.
 \aggiunta{$\nettoes{\ca}{\bes}(\caname)$ is a \bes{} as each $Y \in \maxbund{e}$ is made of conflicting 
 events, and the conflict relation is irreflexive and symmetric by construction. 
 To show that the configurations coincide, consider $X\in \Conf{\caname}{\ca}$, and assume
 that $X = X_{\sigma}$ for some $\sigma\in\firseq{\caname}{\mathsf{m}}$. 
 To prove that $X\in \Conf{\nettoes{\ca}{\bes}(\caname)}{\bes}$ as well, we take
 $\trace{\sigma}$. Clearly $\trace{\sigma}(1) = e_1$ is such that $\Ini{e_1} = \emptyset$ 
 as $\caname$ is plainly caused, hence $\setenum{e_1}\in  \Conf{\nettoes{\ca}{\bes}(\caname)}{\bes}$. 
 Assume now that $\sigma = \sigma'\trans{t}\sigma''$ and that $\toset{\trace{\sigma'}}\in 
 \Conf{\nettoes{\ca}{\bes}(\caname)}{\bes}$, we have to show that $Y\in \maxbund{\ell(t)}$
 is such that $Y\cap \toset{\trace{\sigma'}}\neq \emptyset$, but this is clearly true as
 $Y\in \maxbund{\ell(t)}$ and because the set of inhibition has the same number of elements
 for each incarnation of $\ell(t)$. Conflicting events cannot be added, hence the thesis. 
 The other inclusion holds with the same reasoning.}
\end{proof}

Given a \bes\ $\besname$ define $\estonet{\bes}{\ca}(\besname)$ as 
$\ontocn{\un}(\estonet{\bes}{\un}(\besname))$. Then also the following theorem holds.

\begin{theorem}
 Let $\caname$ be an plainly caused causal net and  $\besname$ be a bundle event structure.
 Then $\caname \cong \estonet{\bes}{\ca}(\nettoes{\ca}{\bes}(\caname))$ and 
 $\besname \equiv \nettoes{\ca}{\bes}(\estonet{\bes}{\ca}(\besname))$.
\end{theorem}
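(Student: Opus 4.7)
The plan is to assemble the statement from the already-proved correspondences, exploiting the definition $\estonet{\bes}{\ca}(\besname) = \ontocn{\un}(\estonet{\bes}{\un}(\besname))$ so that both directions reduce to chaining Propositions~\ref{prop:un-bes}, \ref{prop:bes-un}, \ref{pr:un-to-ca} and \ref{pr:caps-to-bes}. Matching configurations, labels and (semantic) conflicts along the chain is then routine; the delicate points are verifying that the intermediate nets live in the right subclass and that the transitions of the round-trip net are in canonical bijection with those of the starting object.

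For the second identity $\besname \equiv \nettoes{\ca}{\bes}(\estonet{\bes}{\ca}(\besname))$, I would first use Proposition~\ref{prop:bes-un} to obtain $\Conf{\besname}{\bes} = \Conf{\estonet{\bes}{\un}(\besname)}{\un}$ and then Proposition~\ref{pr:un-to-ca} to get $\Conf{\estonet{\bes}{\un}(\besname)}{\un} = \Conf{\ontocn{\un}(\estonet{\bes}{\un}(\besname))}{\ca}$. To apply Proposition~\ref{pr:caps-to-bes} at the end I have to check that this causal net is plainly caused: by construction it has no read arcs, every place appearing in an inhibitor arc is of the form $(\ast,t)$ and has empty preset, and for each transition $(t,Y)$ the set $\inib{(t,Y)}$ corresponds to a maximal bundle, so all the conditions of Definition~\ref{de:ca-plain-caus} are satisfied. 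With this in hand, Proposition~\ref{pr:caps-to-bes} yields the final equality of configurations; the events coincide with those of $\besname$ by construction, and both the conflict relation $\sharp$ and the bundles recovered via $\maxbund{\cdot}$ coincide with those of $\besname$, giving the desired equivalence.

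For the first identity, I would start from Proposition~\ref{pr:caps-to-bes}, which applies since $\caname$ is plainly caused, and then run the chain backwards through Propositions~\ref{prop:bes-un} and~\ref{pr:un-to-ca} applied to $\nettoes{\ca}{\bes}(\caname)$; this gives $\Conf{\caname}{\ca} = \Conf{\estonet{\bes}{\ca}(\nettoes{\ca}{\bes}(\caname))}{\ca}$. The main obstacle, and the reason the hypothesis \emph{plainly caused} is essential, is to argue that the transitions of the round-trip net are in canonical bijection with those of $\caname$ so that the net-level equivalence of Definition~\ref{de:netequiv} (same transitions, same labeling, same states) can be read off up to this canonical renaming. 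Concretely, each reconstructed transition has the shape $(e,Y)$ with $e$ a label and $Y$ a maximal bundle of $\nettoes{\ca}{\bes}(\caname)$; plain causedness guarantees that for every $t\in\ell^{-1}(e)$ the multiset $\ell(\post{\inib{t}})$ is a maximal conflicting subset of $\Ini{e}$ and that distinct $t,t'\in\ell^{-1}(e)$ produce distinct such subsets, so $t\mapsto(\ell(t),\ell(\post{\inib{t}}))$ is a bijection onto the reconstructed transitions that preserves labels, inhibitor profiles and hence states. Combined with the configuration equality obtained above, this bijection delivers the claimed equivalence.
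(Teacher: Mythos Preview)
Your approach is essentially the same as the paper's: both obtain the result by chaining the configuration-preserving correspondences established earlier, with $\estonet{\bes}{\ca}$ unfolded as $\ontocn{\un}\circ\estonet{\bes}{\un}$. The paper's own proof is far more terse---it simply cites Propositions~\ref{pr:un-to-ca} and~\ref{pr:caps-to-bes}---so your additional care in invoking Propositions~\ref{prop:bes-un} and~\ref{prop:un-bes}, checking that the intermediate net is plainly caused, and worrying about the transition bijection actually fills in detail that the paper leaves implicit.
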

\begin{proof}
 By Propositions~\ref{pr:un-to-ca} and~\ref{pr:caps-to-bes}.
\end{proof}

\section{Context-dependent event structures and causal nets}\label{sec:cdes-causal}
We are now ready to relate Context-dependent event structures and causal nets.
We recall that in a Context-dependent event structure each event may happen in
different context and thus each happening has a different operational meaning.
Therefore we model each happening with a different transition and all the transitions 
representing the same happening bear the same label.
Dependencies are inferred using inhibitor and read arcs, as it will be clear.

\begin{definition}\label{de:cdestoca}
 Let $\cdesname = (E, \#, \gesrel)$ be an elementary \Ges\ such that
 $\forall \pmv{Z}\gesrel e.\ \ctx(\pmv{Z}\gesrel e)$ is finite. 
 Define 
 $\estonet{\Ges}{\ca}(\cdesname)$ as the net 
 $\langle S, T, F, I, R, \mathsf{m}, \ell\rangle$ where
% \begin{itemize}
%   \item $\begin{array}{lcl}
%          S & = & \setcomp{(\ast,e)}{e\in E}\cup 
%              \setcomp{(e,\ast)}{e\in E}\cup
%              \setcomp{(\setenum{e,e'},\#)}{e\ \#\ e'} \\
%         \end{array}$,
%   \item $\begin{array}{lcl}
%          T & = & \setcomp{(e,X,Y)}{(X,Y)\in\pmv{Z}\ \land\ \pmv{Z}\gesrel e}
%          \end{array}$,      
%   \item $\begin{array}{lcl}
%           F & = & \setcomp{(s,(e,X,Y))}{s = (\ast,e)\ \lor\ 
%              (s = (W,\#)\ \land\ e\in W)}\ \cup\ \\
%              \end{array}$
%              
%           $\begin{array}{ll}   
%           \hspace*{.65cm} & \setcomp{((e,X,Y),s)}{s = (e,\ast)} \\
%          \end{array}$,    
%   \item $\begin{array}{lcl}
%            I & = & \setcomp{(s,(e,X,Y))}{s = (e',\ast)\ \land\ e' \in 
%                   \ctx(\pmv{Z}\gesrel e)\setminus (X\cup Y)}\ 
%              \cup\ \\
%               \end{array}$
%              
%           $\begin{array}{ll}   
%           \hspace*{.56cm} &  \setcomp{(s,(e,X,Y))}{s = (\ast,e')\ \land\ e' \in X}\\
%         \end{array}$, 
%   \item $\begin{array}{lcl}
%           R & = & \setcomp{(s,(e,X,Y))}{s = (e',\ast)\ \land\ e' \in Y} \\
%           \end{array}$,      
%   \item $\begin{array}{lcl}
%         \mathsf{m} & = & \setcomp{(\ast,e)}{e\in E}\cup \setcomp{(\setenum{e,e'},\#)}{e\ \#\ e'} \\
%         \end{array}$, and
%   \item $\ell : T \to E$ is defined as $\ell((e,X,Y)) = e$.
% \end{itemize}
\begin{itemize}
   \item $S = \setcomp{(\ast,e)}{e\in E}\cup 
              \setcomp{(e,\ast)}{e\in E}\cup
              \setcomp{(\setenum{e,e'},\#)}{e\ \#\ e'}$,
   \item $T = \setcomp{(e,X,Y)}{(X,Y)\in\pmv{Z}\ \land\ \pmv{Z}\gesrel e}$,      
   \item $F = \setcomp{(s,(e,X,Y))}{s = (\ast,e)\ \lor\ 
              (s = (W,\#)\ \land\ e\in W)}\ \cup\ \setcomp{((e,X,Y),s)}{s = (e,\ast)}$,    
   \item $I = \setcomp{(s,(e,X,Y))}{s = (\ast,e')\ \land\ e' \in X}\
              \cup$\\ 
              \hspace*{.7cm}$\setcomp{(s,(e,X,Y))}{s = (e',\ast)\ \land\ e' \in 
                   \ctx(\pmv{Z}\gesrel e)\setminus (X\cup Y)}$, 
   \item $R = \setcomp{(s,(e,X,Y))}{s = (e',\ast)\ \land\ e' \in Y}$,      
   \item $\mathsf{m}=\setcomp{(\ast,e)}{e\in E}\cup \setcomp{(\setenum{e,e'},\#)}{e\ \#\ e'}$, and
   \item $\ell : T \to E$ is defined as $\ell((e,X,Y)) = e$.
 \end{itemize}
\end{definition}
We introduce a transition $(e,X,Y)$ for each pair $(X,Y)$ of the entry associated
to the event $e$, and all these transitions are labeled with the same event $e$.
All these transitions consume the token present in the place $(\ast,e)$ and put 
a token in the place $(e,\ast)$, thus just one transition labeled with 
$e$ can be fired in each execution of the net. 
Recall that the event $e$ is enabled at a configuration $C$ (here signaled by the
places $(e',\ast)$ marked) if, for some
$(X,Y)\in\pmv{Z}$, it holds that $\ctx(\pmv{Z}\gesrel e)\cap C = X$ and $Y\subseteq C$.
The inhibitor arcs assure that some of the events in $\ctx(\pmv{Z}\gesrel e)$ 
have actually happened (namely the one in $X$) but the others 
(the ones in $\ctx(\pmv{Z}\gesrel e)\setminus (X\cup Y)$)
have not, and the $Y$ are other events that must have happened and this is signaled by read arcs.
We cannot require $\ctx(\pmv{Z}\gesrel e)\setminus X$ as some of the events there may be present
in $Y$. 

\begin{example}\label{ex:new2dim-cn}
 Consider the \Ges\ in Example~\ref{ex:new2dim}, the corresponding causal net is the one 
 depicted in Example~\ref{ex:causalnet}. The event $\pmv{c}$ has two incarnations as the entry
 $\setenum{(\emptyset,\setenum{\pmv{a}}),(\setenum{\pmv{b}},\emptyset)}\gesrel \pmv{c}$ 
 has two elements: $(\emptyset,\setenum{\pmv{a}})$ and $(\setenum{\pmv{b}},\emptyset)$.
\end{example} 
\begin{example}\label{ex:new2aug-cn} 
 Consider the \Ges\ of the Example~\ref{ex:new2aug},
 the event $\pmv{c}$ has two incarnations as the entry
 $\setenum{(\emptyset,\emptyset),(\setenum{\pmv{b}},\setenum{\pmv{a}})}\gesrel \pmv{c}$
 has two elements, whereas $\pmv{a}$ and $\pmv{b}$ have one. 
 The associated causal net is
 \begin{center}
  \scalebox{0.9}{\begin{tikzpicture}
\tikzstyle{inhibitor}=[o-,thick, draw=red]
\tikzstyle{pre}=[<-,thick]
\tikzstyle{post}=[->,thick]
\tikzstyle{read}=[-,thick]
\tikzstyle{readblu}=[-,draw=blue,thick]
\tikzstyle{transition}=[rectangle, draw=black,thick,minimum size=5mm]
\tikzstyle{place}=[circle, draw=black,thick,minimum size=5mm]
\node[place,tokens=1] (p1) at (0,2) {};
\node[place,tokens=1] (p3) at (3,1.5) {};
\node[place,tokens=1] (p5) at (6,2) {};
\node[place] (p2) at (0,0) {};
\node[place] (p4) at (3,0) {};
\node[place] (p6) at (6,0) {};
\node[transition] (t1) at (0,1) [label=left:$\pmv{b}$] {$t_1$}
edge[pre] (p1)
edge[post] (p2);
\node[transition] (t2) at (1.5,1) [label=left:$\pmv{c}$] {$t_2$}
edge[pre] (p3)
edge[inhibitor] (p2)
edge[post] (p4);
\node[transition] (t3) at (4.5,1) [label=right:$\pmv{c}$] {$t_3$}
edge[pre] (p3)
edge[readblu] (p6)
edge[inhibitor,bend right] (p1)
edge[post] (p4);
\node[transition] (t4) at (6,1) [label=right:$\pmv{a}$] {$t_4$}
edge[pre] (p5)
edge[post] (p6);
\end{tikzpicture}}
 \end{center}
\end{example} 

\begin{example}\label{ex:new2res-cn}
Consider now the \Ges\ in Example~\ref{ex:new2res} (modeling the resolvable conflict of \cite{GP:ESRC}).
  \begin{center}
  \scalebox{0.9}{\begin{tikzpicture}
\tikzstyle{inhibitor}=[o-,thick]
\tikzstyle{inhibitorred}=[o-,draw=red,thick]
\tikzstyle{inhibitorblu}=[o-,draw=blue,thick]
\tikzstyle{inhibitorpur}=[o-,draw=purple,thick]
\tikzstyle{pre}=[<-,thick]
\tikzstyle{post}=[->,thick]
\tikzstyle{read}=[-,thick]
\tikzstyle{readred}=[-,draw=red,thick]
\tikzstyle{readblu}=[-,draw=blue,thick]
\tikzstyle{readpur}=[-,draw=purple,thick]
\tikzstyle{transition}=[rectangle, draw=black,thick,minimum size=5mm]
\tikzstyle{invtransition}=[rectangle, draw=black!0,thick,minimum size=6mm]
\tikzstyle{place}=[circle, draw=black,thick,minimum size=5mm]
\node[place,tokens=1] (p1) at (1,3.5) {};
\node[place,tokens=1] (p3) at (5,3) {};
\node[place,tokens=1] (p5) at (9,3.5) {};
\node[place] (p2) at (1,-0.5) {};
\node[place] (p4) at (5,-0.2) {};
\node[place] (p6) at (9,-0.5) {};
\node[invtransition] (t1ai) at (-0.5,1.5) {}
edge[inhibitorred, bend right] (p6);
\node[transition] (t1a) at (0,1.5) [label=left:$\pmv{a}$] {$t_1$}
edge[pre] (p1)
%edge[inhibitorred, bend right] (p4)
edge[post] (p2);
%\node[transition] (t2a) at (1,1.5) [label=left:$\pmv{a}$] {$t_2$}
%edge[pre] (p1)
%edge[inhibitorred, bend left] (p3)
%edge[inhibitorred, bend right] (p6)
%edge[post] (p2);
\node[transition] (t3a) at (2,1.5) [label=left:$\pmv{a}$] {$t_2$}
edge[pre] (p1)
edge[inhibitorred, bend left] (p5)
edge[readred] (p4)
edge[post] (p2);
%\node[transition] (t1c) at (4,1.5) [label=left:$\pmv{c}$] {$t_4$}
%edge[pre] (p3)
%edge[inhibitorblu, bend left] (p2)
%edge[inhibitorblu, bend right] (p6)
%edge[post] (p4);
\node[transition] (t2c) at (5,1.5) [label=left:$\pmv{c}$] {$t_3$}
edge[pre] (p3)
%edge[inhibitorblu, bend right] (p1)
%edge[inhibitorblu, bend right] (p6)
edge[post] (p4);
%\node[transition] (t3c) at (6,1.5) [label=right:$\pmv{c}$] {$t_6$}
%edge[pre] (p3)
%edge[inhibitorblu, bend left] (p2)
%edge[inhibitorblu, bend left] (p5)
%edge[post] (p4);
\node[transition] (t1b) at (8,1.5) [label=right:$\pmv{b}$] {$t_4$}
edge[pre] (p5)
edge[inhibitorblu, bend right] (p1)
edge[readblu] (p4)
edge[post] (p6);
%\node[transition] (t2b) at (9,1.5) [label=right:$\pmv{b}$] {$t_8$}
%edge[pre] (p5)
%edge[inhibitorpur, bend right] (p3)
%edge[inhibitorpur, bend left] (p2)
%edge[post] (p6);
\node[transition] (t3b) at (10,1.5) [label=right:$\pmv{b}$] {$t_5$}
edge[pre] (p5)
edge[inhibitorblu, bend left] (p2)
edge[post] (p6);
\end{tikzpicture}}
 \end{center}
 the actual 
 implementation of this \Ges\ into the causal net depicted before, where 
 \corr{each}{the} event\corr{ has three incarnations.}{s $\pmv{a}$ and $\pmv{b}$ have two incarnations each.}
 \aggiunta{The transition $t_1$ is the one with $(X,Y) = (\emptyset,\emptyset)$ whereas 
 $t_2$ is $(\pmv{a},\setenum{\pmv{b}},\setenum{\pmv{c}})$, $t_4$ is the transition 
 $(\pmv{b},\setenum{\pmv{a}},\setenum{\pmv{c}})$ and finally $t_5$ is $(\pmv{b},\emptyset,\emptyset)$.~}
 The inhibitor and read arcs are colored depending on event they are related to.
\end{example}
The net obtained from a \Ges\ using Definition~\ref{de:cdestoca} is indeed a causal net, 
and furthermore it is also conflict saturated.
\begin{proposition}\label{pr:gestocaandconfig}
  Let $\cdesname$ be an elementary \Ges, and $\estonet{\Ges}{\ca}(\cdesname)$ be the associated contextual Petri net.
  Then $\estonet{\Ges}{\ca}(\cdesname)$ is a causal net and 
  $\Conf{\cdesname}{\Ges} = \Conf{\estonet{\Ges}{\ca}(\cdesname)}{\ca}$.
\end{proposition}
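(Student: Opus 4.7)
The plan is to verify in turn the seven defining conditions of Definitions~\ref{de:pre-causal-net} and~\ref{de:causal-net} for $\estonet{\Ges}{\ca}(\cdesname) = \langle S, T, F, I, R, \mathsf{m}, \ell\rangle$, and then establish the equality of configurations by relating firing sequences of the net with enabling sequences of $\cdesname$.

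First I would dispatch the easy structural checks by direct inspection of the construction. Places $(\ast,e)$ have no incoming flow arc and $(e,\ast)$ no outgoing one, and conflict places $(\setenum{e,e'},\#)$ appear only in presets, so $<_N \cap (T\times T) = \emptyset$ and the disjointness $\pre{t}\cap\inib{t} = \post{t}\cap\rarc{t} = \emptyset$ hold (condition~1). For condition~(2), an inhibitor place is either $(\ast,e')$, whose postset consists of the transitions labelled $e'$, or $(e',\ast)$, whose only producers are transitions labelled $e'$; in both cases $|\ell(\post{s})| = 1$. Finiteness of $\inib{t}\cup\rarc{t}$ (condition~4) follows from finiteness of $\ctx(\pmv{Z}\gesrel e)$. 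Conditions~(5) and~(6) follow because distinct transitions sharing a label $e$ share the preset place $(\ast,e)$, and two transitions whose labels are in conflict share the place $(\setenum{e,e'},\#)$ in their presets, which makes conflict structurally witnessed.

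The delicate structural point is condition~(3). From the shape of $F$, $I$ and $R$ one computes that $(e,X,Y)\prec_N (e',X',Y')$ precisely when $e\in X'\cup Y'$. A 2-cycle would therefore require $e\in X'\cup Y'$ and $e'\in X\cup Y$; since $e\notin X\cup Y$ always holds (as the enabling definition insists on $e\notin C$), this rules out 2-cycles between distinct transitions carrying the same label, and for cycles between transitions of different labels one appeals to the \ea{}-view $\cdestoea(\cdesname)$ (Proposition~\ref{pr:cdesea-is-good}): a mutual dependency of the indicated form would require the simultaneous existence of a $\earel$-step adding $e'$ from a state containing $e$ and of one adding $e$ from a state containing $e'$, which is incompatible with the simplicity and completeness of $\cdestoea(\cdesname)$ together with the uniqueness clause on entries of $\pmv{Z}$. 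For the two causal-net conditions of Definition~\ref{de:causal-net}, I would argue that the firing order inside a state $X\in\states{\estonet{\Ges}{\ca}(\cdesname)}$ linearises $\prec_K$ on $X$: whenever $(e,X,Y)$ fires before $(e',X',Y')$ along the witnessing \fs, either $e\in X'$ (its firing was needed to empty $(\ast,e)$) or $e\in Y'$ (its firing marked $(e,\ast)$ so the read arc is satisfied), i.e.\ $(e,X,Y)\prec_K(e',X',Y')$; hence $\prec_K^{\ast}$ is a partial order on $X$. That each $(e,X,Y)\in T$ lies in some state follows from simplicity of $\cdestoea(\cdesname)$: pick a configuration of $\cdesname$ containing $e$ whose enabling at the step adding $e$ is realised by the pair $(X,Y)$, then lift its enabling sequence to a \fs{} of the net.

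Finally, the equality $\Conf{\cdesname}{\Ges} = \Conf{\estonet{\Ges}{\ca}(\cdesname)}{\ca}$ I would prove by induction on the length of the witnessing sequences. Given $\rho = e_1\cdots e_n$ with $\toset{\rho_{i-1}}\enab{e_i}$, selecting for each $i$ the (unique, by elementarity) pair $(X_i,Y_i)\in \pmv{Z}_{e_i}$ that realises $\ctx(\pmv{Z}_{e_i})\cap \toset{\rho_{i-1}} = X_i$ and $Y_i\subseteq \toset{\rho_{i-1}}$ yields a \fs{} $(e_1,X_1,Y_1)\cdots(e_n,X_n,Y_n)$ whose trace is $\rho$: at step $i$ the inhibitor arcs from $(\ast,e')$ with $e'\in X_i$ are honoured because those transitions have already fired, the inhibitor arcs from $(e',\ast)$ with $e'\in \ctx(\pmv{Z}_{e_i})\setminus(X_i\cup Y_i)$ because no transition labelled $e'$ has fired, and the read arcs from $(e',\ast)$ with $e'\in Y_i$ because such a transition has fired. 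Conversely, given a \fs{} of $\estonet{\Ges}{\ca}(\cdesname)$ of trace $e_1\cdots e_n$, the same inhibitor/read pattern read off the fired transitions certifies $\toset{\trace{\sigma_{i-1}}}\enab{e_i}$ in $\cdesname$. I expect the main obstacle to be condition~(3), which is the only place where a purely structural check does not suffice and one must genuinely appeal to the \ea{} underlying the \Ges.
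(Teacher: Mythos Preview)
Your overall strategy --- verify the pre-causal and causal conditions structurally, then match firing sequences with enabling sequences --- is the same as the paper's, and your treatment of conditions (1), (4), (5), (6) and of the configuration equality is essentially correct and in line with the paper's (much terser) proof.

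You are right that condition~(3) of Definition~\ref{de:pre-causal-net} is the crux, but your argument for it does not go through, and indeed the condition as written fails in general. In the \Ges{} of Example~\ref{ex:new2res}, the transitions $t_2=(\pmv{a},\setenum{\pmv{b}},\setenum{\pmv{c}})$ and $t_4=(\pmv{b},\setenum{\pmv{a}},\setenum{\pmv{c}})$ of the net in Example~\ref{ex:new2res-cn} satisfy $\pre{t_2}\cap\inib{t_4}=\setenum{(\ast,\pmv{a})}\neq\emptyset$ and symmetrically $\pre{t_4}\cap\inib{t_2}=\setenum{(\ast,\pmv{b})}\neq\emptyset$, so $t_2\prec_N t_4$ and $t_4\prec_N t_2$ both hold. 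Your appeal to simplicity and completeness of $\cdestoea(\cdesname)$ cannot block this: in that very example there \emph{is} a $\earel$-step adding $\pmv{b}$ from a state containing $\pmv{a}$ and one adding $\pmv{a}$ from a state containing $\pmv{b}$, with no contradiction. The paper's own proof glosses over this, asserting (3) ``by construction''. What \emph{does} hold, and is what matters for the rest of the development, is the state-local condition~(1) of Definition~\ref{de:causal-net}: two transitions forming a $\prec_N$-cycle can never occur in a common state, since each requires a transition with the other's label to have fired first while equally labelled transitions share the preset place $(\ast,e)$. Your argument for that state-local condition also has the implication written the wrong way round: you need that $t\prec_K t'$ forces $t$ to fire before $t'$ in any witnessing \fs{} containing both, not the converse (which is plainly false for independent transitions).

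A minor point on condition~(2): an inhibitor place of the form $(e',\ast)$ has $\post{(e',\ast)}=\emptyset$ under $F$, so $|\ell(\post{s})|=0$ rather than $1$; this is presumably a $\leq 1$ versus $=1$ slip in Definition~\ref{de:pre-causal-net}, but your sentence about ``producers'' conflates $\pre{s}$ with $\post{s}$.
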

\begin{proof}
 We proceed along the same line as in the proof of Proposition~\ref{pr:un-to-ca} and we
 check the various conditions of Definitions~\ref{de:pre-causal-net} and~\ref{de:causal-net}.
 Conditions (1), (3) and (5) of Definition~\ref{de:pre-causal-net} are
 again satisfied by construction. For the second condition, take any
 transition $t = (e,X,Y)\in T$, with $e\in E$ and $X, Y\subseteq E$. 
 Consider $s\in \inib{t}$. We have that 
 $s$ is of the form $(\ast,e')$ for some $e'\in e' \in X$. 
 Clearly 
 $\post{s} = \setenum{(e',X_1,Y'_1), \dots (t',Y'_k)}$ and $\ell(\post{s}) = \setenum{e'}$, thus
 $|\ell(\post{s})| = 1$. 
 Given a $t\in T$, finiteness of $\inib{t}$ derives by the fact that in the \Ges\ 
 $\pmv{Z}$ is finite.
 The last condition follows by the fact that
 equally labeled transition $t$ and $t'$ in $T$ share a common input place: $(\ast,\ell(t))$.
 
 $\Conf{\cdesname}{\Ges} = \Conf{\estonet{\Ges}{\ca}(\cdesname)}{\ca}$ depends on the fact that
 to each configuration $C\in  \Conf{\cdesname}{\Ges}$ a reachable marking $m_C$ in 
 the \ca\ $\estonet{\Ges}{\ca}(\cdesname)$ correspond, and vice versa.
 
 First observe that the execution of a transition labeled with $e$ is signaled by the fact 
 that the place $(e,\ast)$ is marked (and $(\ast,e)$ not), and all the places
 $(\setenum{e.e'},\#)$ are unmarked, thus given a configuration $X$ the corresponding marking 
 $m_C$ is such that $m((e,\ast)) = 1$ if $e\in C$, $m((e,\ast)) = 1$ if $e\not\in C$,
 $m((W,\#)) = 1$ if $W\cap C = \emptyset$ and $m((W,\#)) = 0$ if $W\cap C \neq \emptyset$. 
 
 If $C$ is the empty configuration then clearly the initial marking does the job in 
 $\estonet{\Ges}{\ca}(\cdesname)$. Consider a configuration $C$ of size $n$, and
 the corresponding marking $m_C$, assume that $C\enab{e}$ for some $e\in E$.
 Recall that this means that in $\pmv{Z}\gesrel e$ there is a pair $(X,Y)$ such
 that $(\ctx{\pmv{Z}\gesrel e})\cap C = X$ and $Y\subseteq C$.
 Take the transition $(e,X,Y)$, it is clear that $m_C\trans{(e,X,Y)}$, as the inhibitor and read
 arcs guarantee exactly this. The vice versa is along the same line, and the  equality between
 $\Conf{\cdesname}{\Ges}$ and $\Conf{\estonet{\Ges}{\ca}(\cdesname)}{\ca}$ follows.
\end{proof}

\begin{proposition}\label{pr:gestoca-conflsat}
  Let $\cdesname$ be a \Ges, and $\estonet{\Ges}{\ca}(\cdesname)$ be the associated contextual Petri net.
  Then $\estonet{\Ges}{\ca}(\cdesname)$ is conflict saturated.
\end{proposition}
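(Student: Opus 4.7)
The plan is to observe that the claim follows immediately from Proposition~\ref{pr:gestocaandconfig}. That proposition establishes that $\estonet{\Ges}{\ca}(\cdesname)$ is a causal net, which by Definition~\ref{de:causal-net} entails in particular that it is a pre-causal net. But condition~(5) of Definition~\ref{de:pre-causal-net} states that whenever $t\ \#_N\ t'$ one has $\pre{t}\cap\pre{t'}\neq\emptyset$, and comparing with Definition~\ref{de:sat-confl} this is literally the definition of conflict saturation. So the one-line proof is: being a (pre-)causal net implies conflict saturated.

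If one prefers a direct argument that makes visible why the construction of Definition~\ref{de:cdestoca} has this property, the proof proceeds by a case split on two transitions $t = (e,X,Y)$ and $t' = (e',X',Y')$ satisfying $t\ \#_N\ t'$. When $e = e'$, both transitions share the place $(\ast,e)$ in their preset (and indeed this is what enforces that equally labelled transitions cannot both fire, giving rise to condition~(6) of Definition~\ref{de:pre-causal-net}, as already used in the proof of Proposition~\ref{pr:gestocaandconfig}). When $e\neq e'$ and $e\ \#\ e'$ in $\cdesname$, the shared preset place is the conflict place $(\setenum{e,e'},\#)$. The remaining case $e\neq e'$ with $\neg(e\ \#\ e')$ cannot occur under the hypothesis $t\ \#_N\ t'$: since the elementary \Ges\ we are working with has the semantic conflict relation (as obtained from the event automaton via Theorem~\ref{th:ea-to-Ges}), $\neg(e\ \#\ e')$ means there is a configuration $C \in \Conf{\cdesname}{\Ges}$ containing both $e$ and $e'$, and reshuffling a firing sequence reaching $C$ produces a state of the net containing the specific transitions $t$ and $t'$.

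The only non-trivial point is the reshuffling argument used to rule out the third case. It rests on the observation that inhibitor arcs only forbid events from having fired before the gated transition (they impose no constraint on what happens afterward), and symmetrically that read arcs only require events to have already fired. Hence one can fire the events in $X\cup Y\cup X'\cup Y'$ first, then $t$ and $t'$ in either order (each seeing the state its inhibitor/read arcs require), deferring to later any event in $\ctx(\pmv{Z}_e)\setminus(X\cup Y)$ or $\ctx(\pmv{Z}_{e'})\setminus(X'\cup Y')$, which yields a state containing both $t$ and $t'$. Since this combinatorial check is already subsumed by Proposition~\ref{pr:gestocaandconfig}, the cleanest presentation is to simply invoke that proposition as the whole proof.
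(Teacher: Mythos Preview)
Your core argument (first paragraph) is correct and is exactly the paper's proof: the paper writes simply ``Obvious as $\estonet{\Ges}{\ca}(\cdesname)$ is a \ca,'' relying on Proposition~\ref{pr:gestocaandconfig} together with Proposition~\ref{pr:ca-net-consat}, which is precisely the unpacking of condition~(5) of Definition~\ref{de:pre-causal-net} that you spell out.

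One caution about your optional ``direct'' argument: you assume that the elementary \Ges\ under consideration is the one produced via Theorem~\ref{th:ea-to-Ges}, so that $\#$ coincides with the semantic conflict relation. This is not part of the hypotheses of the proposition (Definition~\ref{de:mia-es} does not force $\#$ to be semantic), so the case $e\neq e'$, $\neg(e\ \#\ e')$ yet $t\ \#_N\ t'$ is not ruled out on those grounds alone; the burden of excluding it is exactly what condition~(5) of pre-causal nets carries, and that is already handled inside Proposition~\ref{pr:gestocaandconfig}. Since you yourself conclude that invoking that proposition is the cleanest route, this does not constitute a gap in your proof.
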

\begin{proof}
 Obvious as $\estonet{\Ges}{\ca}(\cdesname)$ is a \ca.
\end{proof}

For the vice versa we do need to make a further assumption on the causal net.
The intuition is that equally labeled transitions are different incarnations of the 
same activity, happening in different contexts. 
Henceforth one has to make sure that the equally labeled transitions indeed represent
the same \emph{event} and each incarnation of an event should have the same \emph{environment}, meaning
with environment the events related to it (which in the \Ges\ is calculated with $\ctx$).
Given a causal net $\caname = \langle S, T, F, I, R, \mathsf{m}, \ell\rangle$ on a set of label
$\Lab$ and a transition $t\in T$, with $\overrightarrow{\inib{t}}$ we denote the set of labels
$\setcomp{a\in \Lab}{s\in\inib{t}\ \land\ \ell(\post{s}) = a}$, with $\overleftarrow{\inib{t}}$ 
the set of labels
$\setcomp{a\in \Lab}{s\in\inib{t}\ \land\ \ell(\pre{s}) = a}$, and with 
$\widetilde{\rarc{t}}$ the set of labels $\setcomp{a\in \Lab}{s\in\rarc{t}\ \land\ \ell(\pre{s}) = a}$.

\begin{definition}\label{de:cawellbehaved}
 Let $\caname = \langle S, T, F, I, R, \mathsf{m}, \ell\rangle$ be a causal net labeled over
 $\Lab$, we say
 that $\caname$ is \emph{well behaved} if 
 \begin{enumerate}
   \item\label{c1:cnwellbehaved} $\forall a\in \Lab$. $\forall t, t'\in \ell^{-1}(a)$ it holds that  
         $\pre{t}\cup\pre{t'} = \setenum{s}$ and $\post{t}\cup\post{t'} = \setenum{s'}$, and
   \item\label{c2:cnwellbehaved} $\forall a\in \Lab$. $\forall t, t'\in \ell^{-1}(a)$ it holds that 
         $\overrightarrow{\inib{t}} \cup \widetilde{\rarc{t}} \cup \overleftarrow{\inib{t}} = 
          \overrightarrow{\inib{t'}} \cup \widetilde{\rarc{t'}} \cup \overleftarrow{\inib{t'}}$.
 \end{enumerate}   
\end{definition}
In a well behaved causal net all the transitions \corr{sharing}{} equally labeled 
have a common input place and also a common output place (condition~\ref{c1:cnwellbehaved}).
The equally labeled transitions in the causal net are the various incarnations of the
\emph{event} they represent, thus they have the same context, though the various kind of involved
arcs are different (condition~\ref{c2:cnwellbehaved}).
\begin{example}
 Consider the net below:
 \begin{center}
  \begin{tikzpicture}
\tikzstyle{inhibitorred}=[o-,draw=red,thick]
\tikzstyle{inhibitorblu}=[o-,draw=purple,thick]
\tikzstyle{pre}=[<-,thick]
\tikzstyle{post}=[->,thick]
\tikzstyle{read}=[-,thick]
\tikzstyle{transition}=[rectangle, draw=black,thick,minimum size=5mm]
\tikzstyle{place}=[circle, draw=black,thick,minimum size=5mm]
\node[place,tokens=1] (p1) at (0,2) {};
\node[place,tokens=1] (p3) at (3,2) {};
\node[place,tokens=1] (p5) at (6,2) {};
\node[place,tokens=1] (pc) at (3,3) {};
\node[place] (p2) at (0,0) {};
\node[place] (p4) at (3,1) {};
\node[place] (p6) at (6,0) {};
\node[transition] (t1) at (0,1) [label=left:$\pmv{b}$] {$t_1$}
edge[pre] (p1)
edge[pre, bend left] (pc)
edge[post] (p2);
\node[transition] (t2) at (1.5,1) [label=left:$\pmv{c}$] {$t_2$}
edge[pre] (p3)
edge[inhibitorblu] (p1)
edge[inhibitorblu, bend right] (p6)
edge[post] (p4);
\node[transition] (t3) at (4.5,1) [label=right:$\pmv{c}$] {$t_3$}
edge[pre] (p3)
edge[inhibitorred] (p5)
edge[inhibitorred,bend left] (p2)
edge[post] (p4);
\node[transition] (t4) at (6,1) [label=right:$\pmv{a}$] {$t_4$}
edge[pre] (p5)
edge[pre, bend right] (pc)
edge[post] (p6);
\end{tikzpicture}
 \end{center}
 The transitions labeled with $\pmv{c}$ have the same environment, namely the 
 set of labels $\setenum{\pmv{a}, \pmv{b}}$.
\end{example}
It is worth to observe that when associating a causal net to a \Ges\ we obtain a well behaved one.
\begin{proposition}\label{pr:gestoca}
  Let $\cdesname$ be a \Ges, and $\estonet{\Ges}{\ca}(\cdesname)$ be the associated contextual Petri net.
  Then $\estonet{\Ges}{\ca}(\cdesname)$ is a well behaved causal net.
\end{proposition}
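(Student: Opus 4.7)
Since Proposition~\ref{pr:gestocaandconfig} has already established that $\estonet{\Ges}{\ca}(\cdesname)$ is a causal net, the only task left is to verify the two clauses of Definition~\ref{de:cawellbehaved}. Both should fall out by direct inspection of the construction in Definition~\ref{de:cdestoca}, exploiting the assumption that $\cdesname$ is elementary so that every transition labelled $a$ is of the form $(a, X, Y)$ with $(X, Y)$ ranging over the unique entry $\pmv{Z}\gesrel a$.

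For clause~(\ref{c1:cnwellbehaved}), my plan is to observe that, reading off the flow relation $F$ in Definition~\ref{de:cdestoca}, the input places of $(a, X, Y)$ are determined purely by $a$ and by the conflict relation $\#$ of $\cdesname$ (namely $(\ast, a)$ together with any conflict-saturation place $(W, \#)$ with $a \in W$), and the output place is the single $(a, \ast)$. Since none of these depend on the particular pair $(X, Y)$, any two equally labelled transitions $t, t'\in \ell^{-1}(a)$ have $\pre{t}=\pre{t'}$ and $\post{t}=\post{t'}$, which matches what clause~(\ref{c1:cnwellbehaved}) asks of the unions.

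For clause~(\ref{c2:cnwellbehaved}), the plan is to compute the three label sets $\overrightarrow{\inib{t}}$, $\overleftarrow{\inib{t}}$ and $\widetilde{\rarc{t}}$ directly from the definitions of $I$, $R$ and $\ell$. For a generic incarnation $t = (a, X, Y)$ of $a$, the inhibitor arcs of the form $(\ast,e')\to t$ with $e'\in X$ contribute $\overrightarrow{\inib{t}} = X$; the inhibitor arcs of the form $(e',\ast)\to t$ with $e' \in \ctx(\pmv{Z}\gesrel a)\setminus(X\cup Y)$ contribute $\overleftarrow{\inib{t}} = \ctx(\pmv{Z}\gesrel a)\setminus(X\cup Y)$; and the read arcs contribute $\widetilde{\rarc{t}} = Y$. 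The cardinality condition $|\ell(\post{s})| = 1$ for inhibitor places is exactly what lets these read-offs go through cleanly.

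The crux of the argument is then to observe that the union $X \cup Y \cup \bigl(\ctx(\pmv{Z}\gesrel a)\setminus (X\cup Y)\bigr)$ does not depend on the particular pair $(X, Y)\in\pmv{Z}$. Indeed, $\ctx(\pmv{Z}\gesrel a)$ is a function of $\pmv{Z}$ alone, hence of the label $a$, and the $X,Y$ contributions are absorbed into $\ctx(\pmv{Z}\gesrel a)\setminus(X\cup Y)$ so that the three sets together reconstitute the same shared environment attached to $a$. This invariance is the only substantive point of the proof; once it is in place, clause~(\ref{c2:cnwellbehaved}) is immediate and the remaining set-theoretic verifications are routine.
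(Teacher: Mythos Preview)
Your approach mirrors the paper's, which also disposes of the proposition by appealing directly to the construction; you are simply unpacking that appeal in more detail.

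There is, however, a genuine gap in your verification of clause~(\ref{c2:cnwellbehaved}). The union you compute simplifies to
\[
X \cup Y \cup \bigl(\ctx(\pmv{Z})\setminus (X\cup Y)\bigr) \;=\; \ctx(\pmv{Z}) \cup Y,
\]
since $X\subseteq \ctx(\pmv{Z})$ by the definition of $\ctx$. Your claim that this is independent of the pair $(X,Y)$ therefore requires $Y\subseteq \ctx(\pmv{Z})$ as well. But $\ctx(\pmv{Z})=\bigcup_i X_i$ collects only the \emph{first} components of the pairs in $\pmv{Z}$, and nothing in Definition~\ref{de:mia-es} forces the second component $Y$ to lie inside that union. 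Example~\ref{ex:new2res} already witnesses the problem: for the event $\pmv{a}$ one has $\pmv{Z}=\{(\emptyset,\emptyset),(\{\pmv{b}\},\{\pmv{c}\})\}$ with $\ctx(\pmv{Z})=\{\pmv{b}\}$, so the two incarnations $(\pmv{a},\emptyset,\emptyset)$ and $(\pmv{a},\{\pmv{b}\},\{\pmv{c}\})$ have environment sets $\{\pmv{b}\}$ and $\{\pmv{b},\pmv{c}\}$ respectively, and the net drawn in Example~\ref{ex:new2res-cn} confirms exactly this asymmetry. Hence the step you single out as ``the only substantive point of the proof'' does not go through at the stated generality; to close the argument you would need an additional hypothesis (for instance that each $Y$ occurring in $\pmv{Z}$ is contained in $\ctx(\pmv{Z})$) or a weaker reading of clause~(\ref{c2:cnwellbehaved}). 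A smaller point on clause~(\ref{c1:cnwellbehaved}): you correctly observe that $\pre{t}=\pre{t'}$, but this common preset also contains every conflict place $(W,\#)$ with $a\in W$, so the literal requirement that $\pre{t}\cup\pre{t'}$ be a singleton $\{s\}$ fails whenever $a$ is involved in a conflict; you should make explicit that you are reading the clause as demanding equality of presets rather than a single input place.
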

\begin{proof}
 By construction of $\estonet{\Ges}{\ca}(\cdesname)$ the requirements of Definition~\ref{de:cawellbehaved}
 hold.
\end{proof}

To a causal net we can associate a triple where the relations will turn out to
be, under some further requirements, those of a \Ges.
Here the events are the labels of the transitions, conflicts between events are inferred
using the presets of the transitions and the entries are calculated using inhibitor and
read arcs. 
\begin{definition}\label{de:catoges}
 Let $\caname = \langle S, T, F, I, R, \mathsf{m}, \ell\rangle$ be a causal net labeled over 
 $\Lab = E$. 
 Define  
 $\nettoes{\ca}{\Ges}(\caname) = (E, \gesrel, \#)$ as the triple where
   \begin{itemize}
    \item $E = \ell(T)$,
    \item $\forall  e\in E$. $\pmv{Z}\gesrel e$ where 
          $\pmv{Z} = \setcomp{(X,Y)}{t\in T.\ \ell(t) =  e\ \land\ 
          X = \overrightarrow{\inib{t}}\ \land\ 
          Y = \widetilde{\rarc{t}}}$, and
    \item $\forall  e,  e'\in E$. $ e\ \#\   e$ is 
          there exists $t, t'\in T$. $\ell(t) \neq \ell(t')$ and $\pre{t}\cap\pre{t'}\neq \emptyset$.      
   \end{itemize}
\end{definition}
The construction above gives the proper \Ges, provided that the \ca\ is well behaved.
\begin{proposition}\label{pr:catoges}
  Let $\caname = \langle S, T, F, I, R, \mathsf{m}, \ell\rangle$ be a well behaved causal net and
  $\nettoes{\ca}{\Ges}(\caname) = (E, \gesrel, \#)$ the associated triple, then $\nettoes{\ca}{\Ges}(\caname)$ is
  a \Ges\ and $\Conf{\caname}{\ca} = \Conf{\nettoes{\ca}{\Ges}(\caname)}{\Ges}$.
\end{proposition}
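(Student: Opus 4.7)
The plan is to verify in turn the two claims: (i) that $\nettoes{\ca}{\Ges}(\caname) = (E, \gesrel, \#)$ satisfies the axioms of a \Ges{} from Definition~\ref{de:mia-es}, and (ii) that $\Conf{\caname}{\ca} = \Conf{\nettoes{\ca}{\Ges}(\caname)}{\Ges}$. The overall template mirrors the proof of Proposition~\ref{pr:caon-to-pes}, but the subtleties concentrated in (i) make the present argument substantially more delicate.

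For (i), symmetry and irreflexivity of $\#$ are built into Definition~\ref{de:catoges}, which explicitly requires $\ell(t) \neq \ell(t')$. Fix $e \in E$ and the unique entry $\pmv{Z} \gesrel e$. Non-emptiness of $\pmv{Z}$ follows from $e \in \ell(T)$, since any witnessing transition supplies a pair. Finiteness of $\pmv{Z}$ follows from clause~(4) of Definition~\ref{de:pre-causal-net} (each $\inib{t}$ and $\rarc{t}$ is finite) combined with well-behavedness, which pins the common environment $\overrightarrow{\inib{t}}\cup\widetilde{\rarc{t}}\cup\overleftarrow{\inib{t}}$ to an invariant of the label and hence bounds the possible pairs $(X,Y)$. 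Conflict-freeness of $X = \overrightarrow{\inib{t}}$ follows because $t$ is executable (clause~(2) of Definition~\ref{de:causal-net}), so there is a reachable state in which all inhibitor places of $t$ are empty: by clause~(2) of Definition~\ref{de:pre-causal-net}, $|\ell(\post{s})| = 1$ for each $s \in \inib{t}$, whence the transitions whose labels populate $X$ must have co-occurred in some history, excluding any pairwise conflict. The argument for $\CF{Y}$ is analogous, now using that the read places in $\rarc{t}$ are simultaneously marked. The functional property $X = X' \Rightarrow Y = Y'$ is then obtained from well-behavedness clause~(2) together with the disjointness between producers and consumers of $\inib{t}$ enforced by clauses~(1) and~(3) of Definition~\ref{de:pre-causal-net}.

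For (ii), I proceed by mutual induction on the length of firing sequences of $\caname$ and enabling sequences of $\nettoes{\ca}{\Ges}(\caname)$. Given a \fs{} $\sigma = \sigma' \trans{t} m$ with $\ell(t) = e$, set $C = \ell(\toset{X_{\sigma'}})$; the pair $(\overrightarrow{\inib{t}}, \widetilde{\rarc{t}}) \in \pmv{Z} \gesrel e$ witnesses that $e$ is enabled at $C$ in the \Ges{} sense of Definition~\ref{de:mia-es-enabling}, because the inhibitor places of $t$ are empty exactly when the transitions whose labels form $\overrightarrow{\inib{t}}$ have fired, and the read places are marked precisely when $\widetilde{\rarc{t}} \subseteq C$. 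Conversely, given $C \enab{e}$ witnessed by $(X,Y) \in \pmv{Z}$, select a transition $t \in \ell^{-1}(e)$ inducing the pair $(X,Y)$; the marking associated to $C$ in $\caname$ satisfies $t$'s flow, inhibitor and read constraints, so $\sigma$ extends to $\sigma \trans{t}$ and the induction goes through. Conflict-freeness of the traced state, required to conclude on both sides, is preserved because $\caname$ is conflict saturated (Proposition~\ref{pr:ca-net-consat}), so the \Ges{} conflict $\#$ agrees with the semantic conflict $\#_{\caname}$.

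The main obstacle is the functional property in (i): proving that the map $X \mapsto Y$ extracted from $\ell^{-1}(e)$ is single-valued. My plan is to argue that, in a well-behaved causal net, the three components of the shared environment are disjointly distinguishable by the direction of the arcs (producers of inhibitor tokens for $X$, consumers for the complement, pre-images of read places for $Y$), so that fixing $X$ determines the remaining two components uniquely. If this disjointness needs an additional hypothesis in edge cases, the proof can be patched by adding a mild structural condition to the notion of well-behavedness; I expect, however, that clauses~(1) and~(3) of Definition~\ref{de:pre-causal-net} already suffice.
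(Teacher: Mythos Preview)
Your treatment of part~(ii) is essentially the paper's approach: the paper simply states that the configuration equality ``follows the same argument of Proposition~\ref{pr:gestocaandconfig}'', and your mutual induction on firing sequences versus enabling sequences unpacks exactly that argument. (You cite Proposition~\ref{pr:caon-to-pes} as the template; the paper instead points to Proposition~\ref{pr:gestocaandconfig}, but the inductive shapes coincide.)

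Where you diverge is part~(i). The paper disposes of it in a single clause---``$\nettoes{\ca}{\Ges}(\caname)$ is clearly a \Ges''---whereas you work through each axiom of Definition~\ref{de:mia-es}. Your finiteness argument (bounding the possible pairs $(X,Y)$ by the finite shared environment of a label, via clause~(4) of Definition~\ref{de:pre-causal-net} and well-behavedness) and your conflict-freeness arguments (via executability of $t$) are sound additions the paper does not supply. Your identification of the functional requirement $X=X'\Rightarrow Y=Y'$ as the crux is apt: nothing in Definition~\ref{de:cawellbehaved} obviously forces two equally-labelled transitions with the same $\overrightarrow{\inib{t}}$ to have the same $\widetilde{\rarc{t}}$, since condition~(2) only fixes the union $\overrightarrow{\inib{t}}\cup\widetilde{\rarc{t}}\cup\overleftarrow{\inib{t}}$, and the three components could in principle trade off. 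Your proposed remedy---deriving the needed disjointness from clauses~(1) and~(3) of Definition~\ref{de:pre-causal-net}, or else sharpening well-behavedness---is a reasonable programme but not yet a proof; the paper's one-liner simply does not engage with this point. In short, your route is strictly more careful than the paper's, at the cost of surfacing a step that the paper leaves implicit and that you yourself leave conditional.
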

\begin{proof}
 $\nettoes{\ca}{\Ges}(\caname)$ is clearly a \Ges. It remains to prove that 
 $\Conf{\caname}{\ca} = \Conf{\nettoes{\ca}{\Ges}(\caname)}{\Ges}$ but this follows the same argument
 of Proposition~\ref{pr:gestocaandconfig}.
\end{proof}

The following theorem assure that the notion of (well behaved) causal net is adequate in the 
relationship with context-dependent event structure.

\begin{theorem}
 Let $\caname = \langle S, T, F, I, R, \mathsf{m}, \ell\rangle$ be a well behaved causal net, and 
 let $\cdesname$ be a \Ges.
 Then $\caname \cong \estonet{\Ges}{\ca}(\nettoes{\ca}{\Ges}(\caname))$ and
 $\cdesname \equiv \nettoes{\ca}{\Ges}(\estonet{\Ges}{\ca}(\cdesname))$.
\end{theorem}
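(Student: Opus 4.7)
The plan is to establish both equivalences as immediate consequences of Propositions~\ref{pr:gestocaandconfig} and~\ref{pr:catoges}, once the transition sets on the two sides have been identified. The nontrivial content is entirely in checking that the round-trip constructions literally reproduce the starting data up to the canonical bijection between transitions and triples $(e,X,Y)$, and this is precisely what the well-behavedness hypothesis is there to guarantee.

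For the first equivalence, starting from the well-behaved causal net $\caname$, I would first compute $\nettoes{\ca}{\Ges}(\caname)=(E,\gesrel,\#)$ explicitly: its events are the labels $E=\ell(T)$, and for each $e\in E$ Definition~\ref{de:catoges} gives a single entry $\pmv{Z}_e\gesrel e$ whose elements are exactly the pairs $(\overrightarrow{\inib{t}},\widetilde{\rarc{t}})$ for $t\in\ell^{-1}(e)$. Condition~\eqref{c2:cnwellbehaved} of Definition~\ref{de:cawellbehaved} ensures that $\ctx(\pmv{Z}_e)$ is the common set $\overrightarrow{\inib{t}}\cup\widetilde{\rarc{t}}\cup\overleftarrow{\inib{t}}$, independent of the chosen $t$, and condition~\eqref{c1:cnwellbehaved} ensures that the preset and postset of all equally-labeled transitions are a fixed singleton pair. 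Then the transitions of $\estonet{\Ges}{\ca}(\nettoes{\ca}{\Ges}(\caname))$ are, by Definition~\ref{de:cdestoca}, exactly the triples $(e,X,Y)$ with $(X,Y)\in\pmv{Z}_e$, which are in bijection with the transitions of $\caname$ via $t\mapsto(\ell(t),\overrightarrow{\inib{t}},\widetilde{\rarc{t}})$. Under this identification the labels agree tautologically; the equality of state sets then follows by chaining $\states{\caname}\leftrightarrow\Conf{\nettoes{\ca}{\Ges}(\caname)}{\Ges}\leftrightarrow\states{\estonet{\Ges}{\ca}(\nettoes{\ca}{\Ges}(\caname))}$ using Propositions~\ref{pr:catoges} and~\ref{pr:gestocaandconfig}, giving $\caname\equiv\estonet{\Ges}{\ca}(\nettoes{\ca}{\Ges}(\caname))$ as required by Definition~\ref{de:netequiv}.

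For the second equivalence, starting from $\cdesname=(E,\#,\gesrel)$, I would note first that $\estonet{\Ges}{\ca}(\cdesname)$ is well behaved by Proposition~\ref{pr:gestoca}, so Definition~\ref{de:catoges} applies. By direct inspection of the constructions: the event set of $\nettoes{\ca}{\Ges}(\estonet{\Ges}{\ca}(\cdesname))$ is $\ell(T)=E$; for each $e\in E$ the reconstructed entry is $\{(X,Y)\mid (e,X,Y)\in T\}=\{(X,Y)\mid (X,Y)\in\pmv{Z},\ \pmv{Z}\gesrel e\}$, which because $\cdesname$ is elementary is exactly the original $\pmv{Z}$ associated with $e$; and two events $e,e'$ are in the reconstructed conflict iff the places $(\{e,e'\},\#)\in\pre{(e,X,Y)}\cap\pre{(e',X',Y')}$ exist in the intermediate net, which by Definition~\ref{de:cdestoca} happens iff $e\,\#\,e'$ in $\cdesname$. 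Thus the reconstructed \Ges{} coincides with $\cdesname$ on the nose, and in particular $\cdestoea$ of both sides agree, so $\cdesname\cong\nettoes{\ca}{\Ges}(\estonet{\Ges}{\ca}(\cdesname))$ by Definition~\ref{de:ges-equiv}.

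The main obstacle is the first direction: the literal equality $T=T'$ demanded by Definition~\ref{de:netequiv} is only recoverable after naming each transition of $\caname$ by its label-plus-contexts triple, and this identification would fail if two distinct transitions $t,t'\in\ell^{-1}(e)$ could share both $\overrightarrow{\inib{\cdot}}$ and $\widetilde{\rarc{\cdot}}$, or if equally-labeled transitions had different overall contexts. Both pathologies are ruled out precisely by conditions~\eqref{c1:cnwellbehaved} and~\eqref{c2:cnwellbehaved} of well-behavedness together with the fact that by Proposition~\ref{pr:ca-single-ex-net} each transition can fire at most once (so two transitions with identical label, preset, postset and contexts are indistinguishable in every state and can be assumed collapsed). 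The second direction is essentially bookkeeping once the first is settled.
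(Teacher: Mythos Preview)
Your approach is essentially the paper's: the paper's proof is the single line ``By Propositions~\ref{pr:catoges} and~\ref{pr:gestocaandconfig},'' and your argument also rests on chaining those two configuration equalities. The difference is only in level of detail: you spell out the bijection $t\mapsto(\ell(t),\overrightarrow{\inib{t}},\widetilde{\rarc{t}})$ and argue why well-behavedness makes it one-to-one, whereas the paper leaves this implicit.

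Two minor remarks. First, you have the symbols $\cong$ and $\equiv$ interchanged relative to the theorem statement; your usage actually matches Definitions~\ref{de:netequiv} and~\ref{de:ges-equiv}, so this is a notational inconsistency in the paper rather than an error on your part. Second, your claim that $\ctx(\pmv{Z}_e)$ equals the common value $\overrightarrow{\inib{t}}\cup\widetilde{\rarc{t}}\cup\overleftarrow{\inib{t}}$ is not quite what condition~\eqref{c2:cnwellbehaved} says directly: $\ctx(\pmv{Z}_e)$ is $\bigcup_{t\in\ell^{-1}(e)}\overrightarrow{\inib{t}}$, and you would need a short extra argument (using that for each $t'$ the set $\overleftarrow{\inib{t'}}$ consists of labels appearing as some $\overrightarrow{\inib{t}}$ in the construction of Definition~\ref{de:cdestoca}) to conclude the equality you state. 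This does not affect the overall correctness, since what you actually need for the state-set equality is only the configuration-level chaining via the two propositions.
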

\begin{proof}
 By Propositions~\ref{pr:catoges} and~\ref{pr:gestocaandconfig}.
\end{proof}

\section{Conclusions}\label{sec:concl}
In this paper we have proposed the notion of causal net as the net counterpart of
the context-dependent event structure, and shown that the notion is adequate.
A causal net is a Petri net where the dependencies are represented
using inhibitor and read arcs and not usual flow ones: the dependencies are not any longer 
inferred by the production of a token to be consumed but they are described 
by the presence or absence of tokens in places whose meaning is that a particular 
activity has been executed or not. 
Another feature of the causal net is that an activity, identified by a label, may have several different 
implementations, but this is not surprising as usually unfoldings have this peculiarity.

Like context-dependent event structure subsumes other kinds of event structures,
also the new notion comprises other kinds of nets, and we have given a direct
translation of occurrence nets and unravel nets into causal one, and also the usual
constructions associating event structures to nets can be rewritten in this 
setting.
Like context-dependent event structures, also causal nets have a similar drawback, namely
the difficulty in understanding easily the dependencies among events, which in
some of the event structures is much more immediate.

Recently a notion of unfolding representing reversibility has been pointed out 
(\cite{MMU:coordination19}) and the issue of how find the appropriate notion
of net relating \emph{reversible} event structure has been tackled (\cite{MMPPU:rc2020})
and solved for a subclass of reversible event structure. 
The notion of causal net is then a promising one as it may be used as the basis to model 
reversible prime event structures \corr{(}{as it is done in~}\cite{MMP:lics2021}\corr{)} whereas 
classical approaches fail
(\cite{MMPPU:rc2020})\aggiunta{, as already observed}.

In this paper we have focussed on the objects and not on the relations among them, hence we have
not investigated the categorical part of the new kind of net, which we intend to
pursue in the future. A notion of morphism for \Ges\ could be 
a partial mapping $f$ on events such that if $f(e)$ is defined
and $\pmv{Z}\gesrel e$ then $f$ is such that for all $(X,Y), (X',Y')\in \pmv{Z}$ then
$f(X) = f(X') \Rightarrow\ f(Y) = f(Y')$ and the mapping can identify only conflicting events.

But when considering causal nets it should be observed that whereas read arcs are treated similarly to
flow arcs, inhibitor arcs are usually treated as contra-variant, which contrast our usage.
Thus a new different notion of mapping on nets should be introduced taking into account the
fact that causal nets are flat ones with respect to the flow relation. 

It should also be mentioned that \emph{persistent} nets have been connected to event structures
(\cite{CW:entcs05} and \cite{BBCGMM:lmcs18}), and in these nets events may happen in different
contexts, hence it would be interesting to compare these approaches to the one pursued here.

\aggiunta{
\section*{Acknowlegments}
The author would like to acknowledge that Hern\'{a}n Melgratti and Claudio A. Mezzina, 
through many fruitful discussions
on the notion of causal net, have contributed to it. 
The anonymous reviewers have to be thanked for their useful criticisms
and suggestions that have greatly contributed to improve the paper.}

\bibliographystyle{alpha}
\bibliography{main}
\end{document}